\newtheorem{theorem}{Theorem}[section]
\newtheorem{lemma}[theorem]{Lemma}
\newtheorem{corollary}[theorem]{Corollary}
\newtheorem{proposition}[theorem]{Proposition}
\theoremstyle{definition}
\newtheorem{definition}[theorem]{Definition}
\newcommand{\leo}[1]{#1}
\newcommand{\leon}[1]{#1}
\newcommand{\leonn}[1]{#1}
\newcommand{\vm}[1]{#1}
\newcommand{\kh}[1]{#1}
\newcommand{\mj}[1]{#1}
\newcommand{\rev}[1]{#1}
\newcommand{\revnew}[1]{#1}
\newcommand{\supp}[1]{\textsc{Supp}(#1)}
\newcommand{\semi}[1]{\overline{#1}}
\newcommand{\cC}{\mathcal{C}}
\title{\rev{When are quarnets sufficient to reconstruct semi-directed\\ phylogenetic networks?}\footnote{This paper received funding from the Netherlands Organisation for Scientific Research (NWO) under
projects OCENW.M.21.306 and OCENW.KLEIN.125. \rev{This material is partly based upon work supported by the National Science Foundation under Grant No. DMS-1929284 while the authors were in residence at the Institute for Computational and Experimental Research in Mathematics in Providence, RI, during the \emph{Theory, Methods, and Applications of Quantitative Phylogenomics} program.}}}
\author[2]{Katharina T. Huber}
\author[1]{Leo van Iersel\footnote{Corresponding author: \url{l.j.j.vaniersel@tudelft.nl}}}
\author[1]{Mark Jones}
\author[2]{Vincent Moulton}
\author[1]{Leonie Veenema - Nipius}
\affil[1]{Delft Institute of Applied Mathematics, Delft
    University of Technology, Mekelweg 4, Delft, 2628CD, The Netherlands}
\affil[2]{School of Computing Sciences, University of East Anglia, NR4 7TJ, Norwich, United Kingdom}
\begin{document}

\maketitle

\begin{abstract}
Phylogenetic networks are graphs that are used to represent evolutionary relationships between different taxa.
They generalize phylogenetic trees since for example, unlike trees, they permit lineages to combine.
Recently, there has been rising interest in \emph{semi-directed} phylogenetic networks, 
which are mixed graphs in which certain lineage combination events 
are represented by directed edges coming together, whereas the remaining edges are
left undirected. One reason to consider such networks is that it can be difficult to 
root a network using real data. In this paper, we consider the problem
of when a semi-directed phylogenetic network is defined or \emph{encoded} by 
the smaller networks that it induces on the 4-leaf subsets of its leaf set.
These smaller networks are called \emph{quarnets}.
We prove that semi-directed binary level-$2$ phylogenetic networks 
are encoded by their quarnets, but that this is not the case for level-$3$. In addition, we prove 
that the so-called \emph{blob tree} of a semi-directed binary network, a tree
that gives the coarse-grained structure of the network, is always encoded by the quarnets of the network. \rev{These results are relevant for proving the statistical consistency of programs that are currently being developed for reconstructing phylogenetic networks from practical data, such as the recently developed \textsc{Squirrel} software tool.}
\end{abstract}

\section{Introduction}

\emph{Phylogenetic networks} are graphs used to represent evolutionary relationships between different \emph{taxa} (e.g. species, languages or other evolving objects). They are a generalization of the well-known phylogenetic trees, which are restricted to representing tree-like evolution in which lineages split but cannot combine~\cite{bapteste2013networks}. Both unrooted, undirected as well as rooted, directed phylogenetic networks have been and are still being studied intensively~\cite{elworth2019advances,huson2010phylogenetic}. Recently, there has been rising interest in \emph{semi-directed} phylogenetic networks, which are unrooted and have undirected edges as well as directed edges \rev{(for an example, see Figure~\ref{fig:squirrel})}~\cite{allman2024identifiability,barton2022statistical,huebler2019constructing,linz2023exploring,martin2023algebraic,solis2017phylonetworks,wu2024ultrafast}. 
The reason that semi-directed networks have become more popular is that 
the location of the root of a network can often not be identified from real data~\cite{kong2022classes}. Even so, rather than reverting to completely undirected networks, semi-directed networks do permit
\rev{directed edges} (called \emph{arcs}) that can be used to represent so-called \emph{reticulations}, in which two lineages combine into one lineage that is at the end of two arcs.
Such reticulations are commonly used to model reticulate evolutionary events such as hybridization, introgression, recombination or lateral gene transfer, and there are approaches that can be used to identify such events from real data (see e.g. \cite{solis2017phylonetworks}).
 \rev{For example, the taxon \emph{M.leucophaeus} in Figure~\ref{fig:squirrel} 
 is below two  
 arcs which indicate\revnew{s} a reticulation event.} 
 \revnew{Essentially, semi-directed phylogenetic networks are defined as those networks that can be obtained from a directed phylogenetic network by forgetting the direction of all arcs, except for the arcs that represent reticulations, and suppressing the root.}

\begin{figure}
    \centering
    \includegraphics{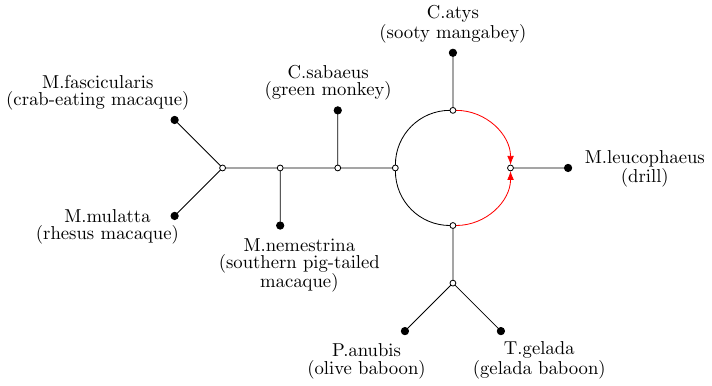}
    \caption{\rev{An example of a semi-directed phylogenetic network generated by the \textsc{Squirrel} software tool~\cite{holtgrefe2024squirrel} for an Old World monkey dataset~\cite{vanderpool2020primate} of Cercopithecinae. The edges are black
    and the arcs are red.}}
    \label{fig:squirrel}
\end{figure}

In this paper, we study the fundamental biological question of how much information is needed to reconstruct semi-directed phylogenetic networks, a question studied for rooted, directed phylogenetic networks in~\cite{huber2015much,van2022algorithm} and for unrooted phylogenetic networks in~\cite{erdHos2019not}. More concretely, we study which semi-directed evolutionary histories can be recovered from the evolutionary histories of groups of~$4$ taxa (called \emph{quarnets}). \rev{This is a topical issue since} several methods have been 
introduced recently to generate quarnets from DNA sequences or from gene trees \cite{allman2024nanuq+,cummings2023computing,holtgrefe2024squirrel,martin2023algebraic}. If a semi-directed phylogenetic network is uniquely determined by its induced subnetworks on sets of~$4$ taxa, then we say that the network is \emph{encoded} by quarnets.
\rev{Therefore, the question we study here can be formalized as the question of when a semi-directed phylogenetic network is encoded by its quarnets.}

This question is important for at least two reasons. The first reason is algorithmic. Accurate sequence-based phylogenetic network reconstruction methods (such as maximum likelihood) are often restricted to small numbers of taxa such as quartets. 
\rev{Hence, in order to prove that approaches which puzzle together 
quarnets into a larger semi-directed phylogenetic network are correct, we 
need to know when quarnets encode such networks.}
The other reason for studying quarnet encodings is that they can be used to show identifiability results for certain classes of phylogenetic networks from sequence data that is assumed to have evolved under some evolutionary model. In particular, the main idea is to prove identifiability of quarnets using techniques from algebraic geometry, and subsequently use quarnet encodings to generalize these results to larger networks~\cite{allman2022identifiability,ardiyansyah2021distinguishing,cummings2024invariants,gross2021distinguishing}.

\subsection{Previous results}

Encoding results for phylogenetic trees have been known for some time. Unrooted phylogenetic trees can be encoded by their splits, their quartets or by the distances between taxa~\cite{dress2012basic}. Similarly, rooted phylogenetic trees can be encoded by clusters, triplets or ultrametric distances. \revnew{Distances can still be used to identify some features of certain  networks~\cite{xu2023identifiability} and} some directed phylogenetic networks are still encoded by their triplets, which are 3-leaf trees 
contained in the network~\cite{gambette2012encodings,gambette2017challenge}. However, most networks are not encoded by their triplets. This led to research on binet, trinet and quarnet 
encodings~\cite{cardona2017reconstruction,huber2013encoding,van2014trinets,van2017binets}, which are 2-leaf, 3-leaf 
and 4-leaf subnetworks respectively, and can be either directed, undirected or semi-directed. 
Note that most of the results mentioned below are restricted to binary networks (whose 
internal non-root vertices have total degree~$3$).

General directed phylogenetic networks are not encoded by their trinets~\cite{huber2015much}. Hence, research has focused on encodings of subclasses of directed phylogenetic networks, e.g. by bounding their ``level''. A network is \emph{level-$k$} if it can be turned into a tree by deleting at most~$k$ edges/arcs from each blob. For example, networks~$N_d$ and~$N$ in \revnew{Figure~\ref{fig:degree2blob}} are level-2. Directed level-$1$ phylogenetic networks are encoded by their trinets~\cite{huber2013encoding}, and so are directed level-$2$ phylogenetic networks and other well-studied classes: so-called directed tree-child phylogenetic networks~\cite{van2014trinets} and directed orchard phylogenetic networks~\cite{semple2021trinets}. However, directed level-$3$ phylogenetic networks are not all encoded by their trinets~\cite{van2022algorithm}. On the algorithmic side, it has been shown that directed level-2 and orchard phylogenetic networks can be reconstructed from all their trinets in polynomial time~\cite{semple2021trinets,van2022algorithm}. For directed level-1 phylogenetic networks this is also possible and, moreover, a heuristic algorithm exists that constructs directed level-1 phylogenetic networks from practical data~\cite{oldman2016trilonet}. Encoding results have been used to show that this algorithm returns the correct network if its input data consists of all trinets of a directed level-$1$ phylogenetic network. Unfortunately, given any set of directed trinets (not necessarily one per triple of taxa) it is NP-hard to decide whether there exists a directed phylogenetic network that contains all given trinets, already for level-$1$~\cite{huber2017reconstructing}.

Much less is known about encodings for semi-directed phylogenetic networks. Two algorithms for constructing a semi-directed level-$1$ phylogenetic network from quarnets are given in~\cite{huebler2019constructing} but the paper does not prove explicitly that the algorithms always reconstruct the correct network, i.e. they do not prove that semi-directed level-$1$ phylogenetic networks are encoded by quarnets.
\revnew{Nevertheless, most features of level-$1$ phylogenetic networks are already determined by quartets ($4$-leaf trees contained in the network)~\cite{banos2019identifying}.}
\rev{Moreover, recently \textsc{Squirrel}~\cite{holtgrefe2024squirrel}, \revnew{NANUQ+~\cite{allman2024nanuq+}}, \textsc{Phynest}~\cite{kong2024inference}, \revnew{CUPNS~\cite{warnow2025advances}} and SNAQ~\cite{solis2017phylonetworks}
have been introduced for
generating level-$1$ semi-directed phylogenetic networks from quarnets, sequence alignments, \revnew{SNPs}
and collections of gene trees.}

\subsection{Our contribution}

In this paper, we study 
\mj{the quarnets of semi-directed phylogenetic networks.
Reflecting the relative complexity of \rev{restricting a semi-directed network to a subset of its taxa},
we show that \rev{this process} 
is well-defined (see Section~\ref{sec:lev2}).} While this is obvious for directed networks and level-1 semi-directed networks, for higher-level semi-directed networks it takes some care to prove that the intuitive definition works. \rev{Moreover,  
in our main result} we show that semi-directed binary level-$2$ phylogenetic networks are encoded by their quarnets:


\textbf{Theorem~\ref{the:class-lev2}.}
\emph{The class of semi-directed, level-$2$, binary phylogenetic networks with at least~$4$ leaves is encoded by quarnets.}

\rev{Interestingly, this is the
theoretical limit for which semi-directed networks can be encoded, when categorizing networks by level.
More specifically,} we show that semi-directed level-$3$ phylogenetic networks are not \rev{all} encoded by their quarnets, which shows that there are fundamental 
limitations for extending methods to level-3 and higher:

\begin{theorem} The class of semi-directed, level-$3$, binary phylogenetic networks with at least~$4$ leaves is not encoded by quarnets.
\end{theorem}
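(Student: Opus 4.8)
The plan is to establish this negative result by exhibiting an explicit counterexample: a pair of non-isomorphic semi-directed, binary, level-$3$ phylogenetic networks $N_1$ and $N_2$ on a common leaf set $X$ that induce exactly the same quarnet on every $4$-element subset of $X$. Note first that a network on exactly four leaves equals its own unique quarnet and is therefore trivially encoded, so the counterexample must have at least five leaves; I would aim for the smallest such example, on $|X| = 5$ leaves, so that only the $\binom{5}{4} = 5$ restrictions need to be checked. A further structural constraint guides the construction: since the paper shows (Corollary~\ref{cor:blobtree-encoding}) that the blob tree is always encoded by the quarnets, any two networks with identical quarnets must share the same blob tree. Hence the distinguishing difference cannot live in the coarse branching structure and must instead be confined to a single blob, which motivates taking $N_1$ and $N_2$ to consist of a single level-$3$ blob, with the five leaves attached by pendant edges, that differs only internally.

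First I would pin down the blob. The blob is chosen to carry enough reticulations (it is here that level $3$, rather than level $2$, is needed) that some \emph{global} structural feature can be altered without changing any of the small induced pieces. Concretely, I would arrange the feature to be something like the cyclic ordering of the leaf-attachment points around the blob together with the placement of the reticulation vertices, so that a single reflection or a swap of two leaf-attachments produces a genuinely different network while leaving every $4$-leaf restriction untouched. The two networks are designed so that they differ only in this one feature, which is witnessed only when all five leaves are present simultaneously.

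Second, I would verify the two properties that make the pair a counterexample. To show $N_1 \not\cong N_2$, I would exhibit a label-respecting invariant that separates them — for instance, that no isomorphism of the underlying mixed graphs carries the labelled leaves of $N_1$ to those of $N_2$, because the chosen feature is asymmetric with respect to the labelling. Then, for the crux, I would check that for each of the five subsets $Y \subseteq X$ with $|Y| = 4$ the induced quarnets agree, $N_1|_Y \cong N_2|_Y$. The key point is that deleting any single leaf and suppressing the resulting degree-$2$ vertices collapses precisely the asymmetry distinguishing the two networks, so each of the five restrictions yields the same quarnet; here I would rely on the fact, established earlier in the paper, that the quarnet of a semi-directed network is well-defined, so that these restrictions are unambiguous.

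The main obstacle will be balancing two competing requirements in the construction: the distinguishing feature must be global enough that removing any one leaf renders it invisible (so that all five quarnets coincide), yet it must genuinely break the isomorphism type (so that the two networks remain distinct). Engineering a feature that is simultaneously invisible to every $4$-subset and visible to the whole network is delicate, and it is exactly this tension that fails to be resolvable at level $2$, where by Theorem~\ref{the:class-lev2} the quarnets already determine the network. Alongside this, I would need to confirm that both $N_1$ and $N_2$ genuinely satisfy the full definition of a semi-directed binary network — in particular that each admits an acyclic orientation obtained by inserting a root into some edge — and that each blob is of level exactly $3$ and not of lower level, so that the example lies in the stated class.
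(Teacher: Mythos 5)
Your overall strategy---exhibiting two non-isomorphic semi-directed, binary, level-$3$ networks on five leaves, each consisting of a single blob with pendant leaves, whose five quarnets coincide---is exactly the approach the paper takes, and your structural reasoning is sound: at least five leaves are needed since a four-leaf network is its own unique quarnet; the blob tree must coincide by Corollary~\ref{cor:blobtree-encoding}, so the difference must live inside a single blob; and level $3$ is the first level where this can occur, by Theorem~\ref{the:class-lev2}. However, the proposal stops exactly where the proof has to begin: you never construct the two networks. For a non-encoding result, the explicit pair \emph{is} the entire mathematical content; the constraints you derive only narrow the search space. The statement that you ``would arrange the feature to be something like the cyclic ordering of the leaf-attachment points around the blob together with the placement of the reticulation vertices, so that a single reflection or a swap of two leaf-attachments produces a genuinely different network while leaving every $4$-leaf restriction untouched'' is a specification of what a counterexample would have to achieve, not a demonstration that one exists. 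Whether a level-$3$ blob admits such a feature---invisible to every $4$-subset yet visible to the whole labelled network---is precisely the question the theorem answers, and you acknowledge this tension (``delicate'') without resolving it. As it stands, neither of the two required verifications (non-isomorphism, and agreement of all five quarnets) can even be started.

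For comparison, the paper's proof consists of concrete networks $N_1$ and $N_2$ (Figure~\ref{fig:lev3}), for which one checks directly that the five induced quarnets agree while the networks themselves are not isomorphic; the paper also notes the example extends to any number of leaves by inserting additional leaves between (or next to) the leaves $a$ and $b$ in both networks. To turn your plan into a proof you would need to do the same: write down two explicit mixed graphs; verify each satisfies the full definition of a semi-directed, binary, phylogenetic network (admits a rooting, no parallel arcs, no suppressible vertices or blobs) and is of level exactly $3$; show no leaf-label-preserving isomorphism exists between them; and compute all ten restrictions (five per network), confirming they match pairwise. Everything in your write-up before that point is correct framing, but the theorem is not proved without the example itself.
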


The above theorem can be verified easily by considering the example in Figure~\ref{fig:lev3}, in which an example is presented of two different networks that have the same set of quarnets. Moreover, we note that the example can be extended to any number of leaves by inserting leaves between (or next to)~$a$ and~$b$ in~$N_1$ and in~$N_2$ (in any order).

In order to prove our main result (Theorem~\ref{the:class-lev2}) we show that 
the ``blob tree'' of a semi-directed phylogenetic network, also called the ``tree of blobs'', is uniquely determined by the quarnets of the network. Basically, a ``blob'' of a semi-directed network is a maximal subnetwork that cannot be disconnected by deleting a single edge/arc. The blob tree of such a network is obtained by contracting each blob to a single vertex (for more details, see Section~\ref{sec:cuts}). Blob trees have gained interest recently, since they represent the high-level branching structure of a network and may be identifiable even when the full network is not~\cite{allman2024tinnik,allman2023tree,van2014trinets,rhodes2024identifying}.
For all~$k\geq 1$, we show that the blob tree of a semi-directed binary level-$k$ phylogenetic network is always encoded by the quarnets of the network:

\begin{figure}
    \centerline{\includegraphics{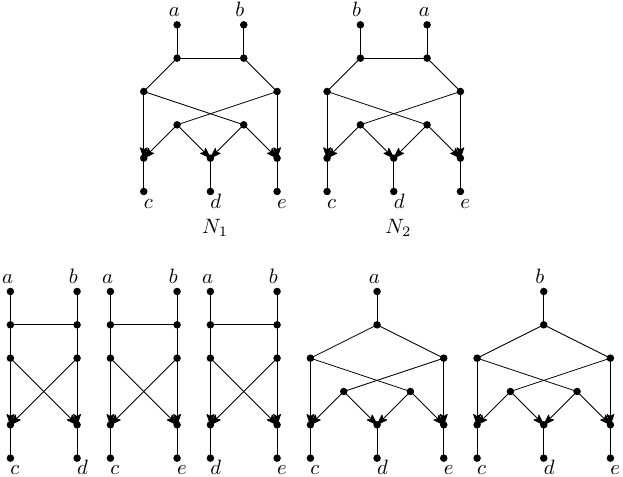}}
    \caption{\label{fig:lev3} Two semi-directed level-$3$ phylogenetic networks~$N_1$ and~$N_2$ (top) and their five quarnets (bottom). Even though~$N_1$ and~$N_2$ have exactly the same set of quarnets, the networks themselves are not isomorphic.}
\end{figure}

\textbf{Corollary~\ref{cor:blobtree-encoding}.}
\emph{\rev{Suppose that~$N$ and~$N'$ are semi-directed phylogenetic networks on~$X$ with the same set of quarnets. Then~$N$ and~$N'$ have the same blob tree.}}

\rev{Note that this result was recently used in~\cite{holtgrefe2024squirrel} to prove that the
\textsc{Squirrel} program correctly reconstructs level-1 networks from perfect data.}


This paper is based in part on preliminary results in the MSc thesis~\cite{nipius2022encoding}.

\subsection{Outline of the paper}

In \Cref{sec:prelims}, we give most of the main definitions used in this paper.
In \Cref{sec:restrictions}, we formally define the restriction of a (semi-)directed network to a subset of leaves and show it is well-defined. Based on this, we define quarnets and quarnet encodings in Section~\ref{sec:lev2}, where we also show that a semi-directed level-$k$ binary phylogenetic network with no non-trivial cut edges is encoded by its quarnets for $k \leq 2$. 
In \Cref{sec:cuts}, we show that the blob tree of any semi-directed level-$k$ binary phylogenetic network is encoded by its quarnets for all $k\geq 1$ or, equivalently, that the partition of the leaf set induced by a non-trivial cut edge is encoded by the quarnets. Combining the results from Sections~\ref{sec:lev2} and~\ref{sec:cuts}, in \Cref{sec:lev2combined} we show that semi-directed level-$2$ binary phylogenetic networks are encoded by their quarnets.
In \Cref{sec:discussion} we end with a discussion of possible future directions.

\section{Preliminaries}\label{sec:prelims}






Let~$X$ be a finite set with~$|X|\geq 2$.

We consider mixed graphs which may have undirected edges and/or directed arcs and which may have parallel arcs. Undirected edges will simply be called \emph{edges} while directed edges will be called \emph{arcs}. When both are possible we will write ``edge/arc''. In this paper, there will be no reason to consider parallel edges or parallel edge-arc pairs. 
\revnew{Formally, a \emph{mixed graph} is an ordered tuple $G=(V,E)$ where~$V$ is a nonempty set of vertices,~$E$ is a multiset of undirected \emph{edges}~$\{u,v\}\subseteq V$, $u\neq v$, and directed \emph{arcs}~$(u,v)$ with~$u,v\in V$, $u\neq v$, such that each edge~$\{u,v\}$ has multiplicity at most~$1$ in~$E$ and such that for all arcs~$(u,v)\in E$ we have that $\{u,v\}\notin E$ and~$(v,u)\notin E$.}
A mixed graph is \emph{connected} if its underlying undirected graph contains a path between any two vertices.
The \emph{degree} of a vertex is the total number of incident edges and arcs. A \emph{leaf} is a degree-1 vertex. The \emph{indegree} of a vertex is the number of incoming arcs and the \emph{outdegree} is the number of outgoing arcs. A \emph{reticulation} is a vertex with indegree~$2$. Reticulations that are adjacent to a leaf are called \emph{leaf reticulations}.

For a set of vertices $S \subseteq V$ in a mixed graph $G=(V,E)$ with
vertex set~$V$ and edge/arc set~$E$, an edge/arc $e$ is \emph{incident} to $S$ if exactly one of its vertices is in $S$. If $e$ is an arc $(u,v)$ and $S\cap \{u,v\} = \{v\}$, we say $e$ is an arc \emph{entering} $S$ or an \emph{incoming} arc of $S$. If $S\cap \{u,v\} = \{u\}$, we say $e$ is an arc \emph{leaving} $S$ or an \emph{outgoing} arc of $S$.
We also define $G[S]$
to be the subgraph of $G$ induced by $S$, i.e. the graph with vertex set~$S$, an edge~$\{u,v\}$ for each edge~$\{u,v\}$ in~$G$ with~$u,v\in S$ and an arc~$(u,v)$ for each arc~$(u,v)$ in~$G$ with~$u,v\in S$.

\subsection{Directed and semi-directed networks}

 \mj{Directed and semi-directed phylogenetic networks (defined formally below) are usually considered not to have parallel arcs or vertices of degree-$2$ (except for the root in directed phylogenetic networks). The restriction of a (directed or semi-directed) phylogenetic network to a subset of leaves is itself a  (directed or semi-directed) phylogenetic network. However, deriving the restriction involves the repeated application of reduction rules, some of which may result in mixed graphs with parallel arcs or degree-2 vertices. 
 For this reason, we consider a slight generalization of phylogenetic networks, simply  called \leon{(directed and semi-directed)} \emph{networks} (formally defined below), and reserve the qualifier \emph{phylogenetic} for a subclass of these graphs corresponding to the usual definition.}

Since we only consider binary networks in this paper, we do not include the word binary in the names of the network types defined below. \leon{We will include the word binary in the statements of theorems to avoid confusion.}


\begin{definition}
A \emph{directed network} on~$X$ is a mixed graph~$N_d$, which may have parallel arcs, with the following restrictions:
\begin{itemize}
    \item $N_d$ has no undirected edges;
    \item $N_d$ has no directed cycles;
    \item each vertex has degree at most~$3$, indegree at most~$2$ and outdegree at most~$2$;
    \item there is a unique vertex with indegree~$0$, which has outdegree~$2$ and is called the \emph{root}; and
    \item the vertices with outdegree-$0$ 
    \revnew{have indegree-$1$}
    and are bijectively labelled by the elements from~$X$.
\end{itemize}
\end{definition}


\begin{definition}\label{def:semiunder}
A \emph{semi-directed network} on~$X$ is a mixed graph~$N$ that can be obtained from a directed network~$N_d$ on~$X$ by replacing all arcs 
\revnew{with}
edges except for arcs entering reticulations and subsequently suppressing the root~$\rho$ if one of the following operations is applicable:
\begin{itemize}
    \item if~$\rho$ is a degree-$2$ vertex with incident edges $\{u,\rho\},\{\rho,w\}$, replace these two edges by the edge~$\{u,w\}$ and delete~$\rho$; and
    \item if~$\rho$ is a degree-$2$ vertex with an incident edge $\{u,\rho\}$ and an incident arc $(\rho,w)$, replace this arc and edge by the arc $(u,w)$ and delete~$\rho$.
\end{itemize}
\mj{We call~$N_d$ a \emph{rooting} of~$N$.
If $N_d$ is a rooting of $N$,  we call~$N$ the \emph{underlying semi-directed network} of~$N_d$ and we write~$N=\semi{N_d}$.}
\end{definition}

\mj{See \Cref{fig:underylingSDs} for examples of directed and semi-directed networks.}
We note that semi-directed networks can have more than one rooting \mj{(see for example \Cref{fig:SDedgedge,fig:SDedgearc}).}
Observe that $\semi{N_d}$ is well-defined, and that if $N_{d1}$ and $N_{d2}$ are rootings of the same semi-directed network $N$ then $\semi{N_{d1}}= \semi{N_{d2}} = N$. \revnew{Also note that it is possible that neither of the two suppressing operations in Definition~\ref{def:semiunder} is applicable (see Figure~\ref{fig:underylingSDs}(d)).}

\begin{figure}[h]
\begin{centering}
\begin{subfigure}{0.5\textwidth}
    \centerline{\includegraphics[scale = 0.5]{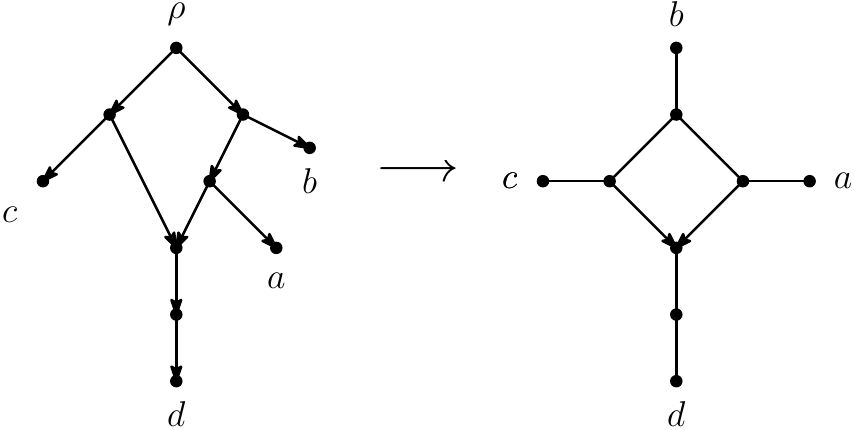}}
    \caption{\label{fig:SDedgedge}}
\end{subfigure}
\begin{subfigure}{0.5\textwidth}
    \centerline{\includegraphics[scale = 0.5]{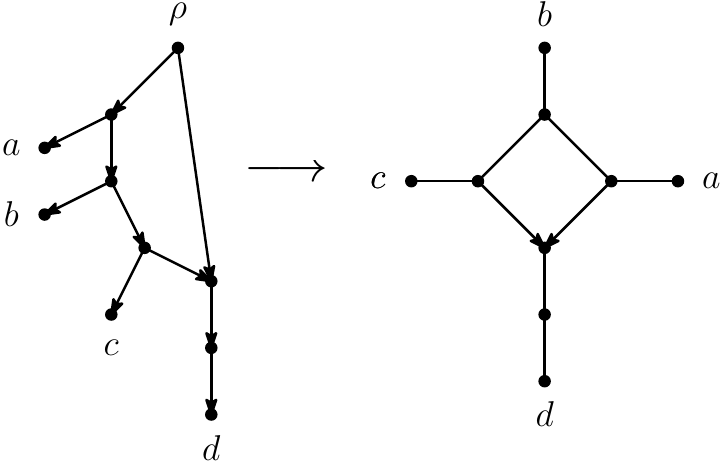}}  
    \caption{\label{fig:SDedgearc}}
\end{subfigure}
\begin{subfigure}{0.5\textwidth}
    \centerline{\includegraphics[scale = 0.5]{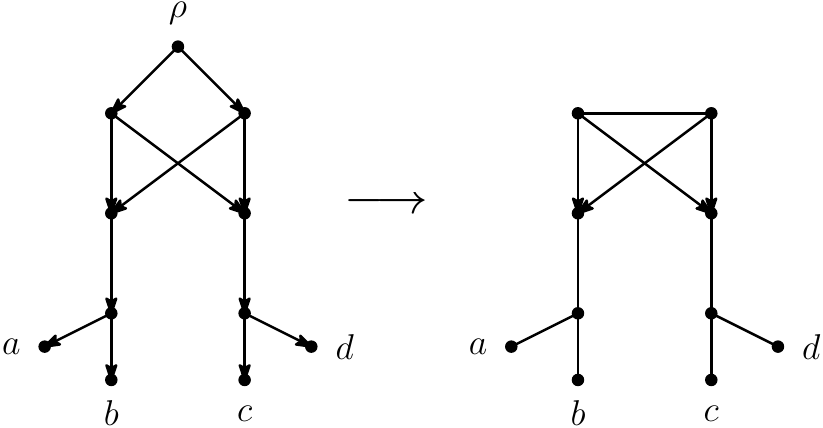}}
    \caption{\label{fig:SDblob}}
\end{subfigure}
\begin{subfigure}{0.5\textwidth}
    \centerline{\includegraphics[scale = 0.5]{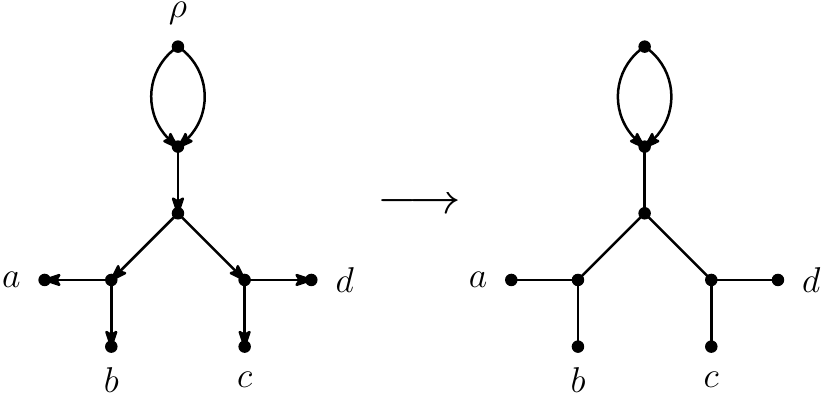}}
    \caption{\label{fig:SDparallel}}
\end{subfigure}
\begin{subfigure}{0.5\textwidth}
    \centerline{\includegraphics[scale = 0.5]{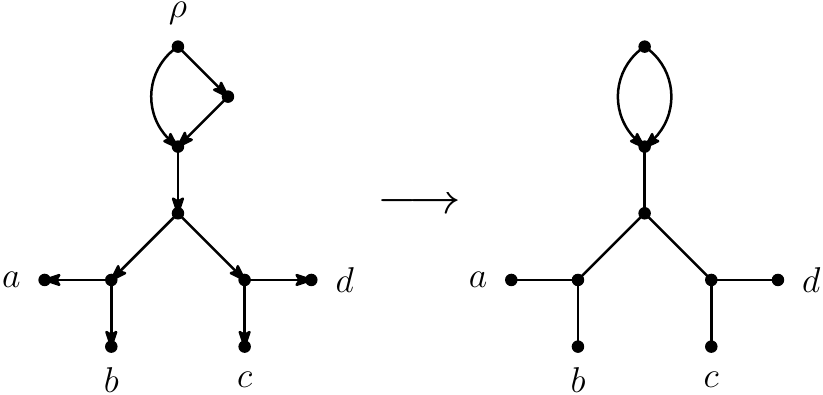}}
    \caption{\label{fig:SDtriangle}}
\end{subfigure}
\end{centering}
    \caption{\label{fig:underylingSDs} Some examples of a directed network (left) together with its underlying semi-directed network (right). Observe that the directed networks in \Cref{fig:SDedgedge} and \Cref{fig:SDedgearc} have the same underlying semi-directed network, as do the directed networks in  \Cref{fig:SDparallel} and \Cref{fig:SDtriangle}.}
\end{figure}


\begin{figure}[h]
\begin{center}
\includegraphics{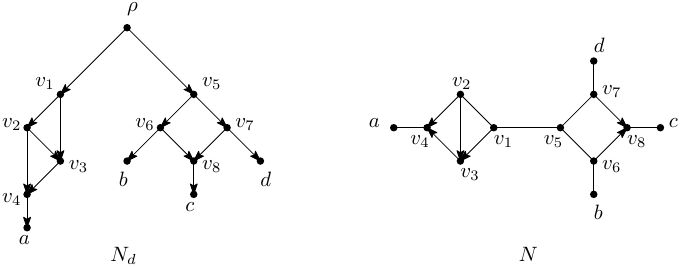}
\end{center}
\caption{\label{fig:degree2blob} \leon{A directed network $N_d$ and its underlying semi-directed network~$N$. Both~$N_d$ and~$N$ have two blobs, each having vertex sets~$\{v_1,\ldots ,v_4\}$ and~$\{v_5,\ldots ,v_8\}$. Note that neither~$N_d$ nor~$N$ is phylogenetic due to the blobs with vertex set~$\{v_1,\ldots,v_4\}$, \revnew{which have~$2$ incident edge/arcs}. Examples of $\wedge$-paths are~$(a,v_4,v_3,v_1,\rho,v_5,v_7,d)$ in~$N_d$ and~$(a,v_4,v_3,v_1,v_5,v_7,d)$ and~$(a,v_4,v_3,v_1,v_5,v_7,v_8,c)$ in~$N$. \revnew{An example of a cycle, in both~$N_d$ and in~$N$, is $(v_1,v_2,v_4,v_3,v_1)$ with sink~$v_4$.}}}
\end{figure}

We also note that a semi-directed network $N$ may have 
parallel arcs. This is the case if the directed network $N_d$ from which $N$ is obtained 
\leon{has parallel arcs or has its root in a triangle (i.e., an undirected length-3 cycle)}, \mj{as in \Cref{fig:SDparallel} and \Cref{fig:SDtriangle}}.

A \emph{network} is either a directed or a semi-directed network.


A \emph{blob} of a mixed graph is a connected subgraph with at least three vertices  that is maximal under the property that deleting any edge/arc from the subgraph does not disconnect the graph 
\kh{-- see Figure~\ref{fig:degree2blob} for an example.}
An edge/arc~$e$ is \emph{incident} to a blob~$B$ if $e$ is incident with $V(B)$, the vertex set of~$B$.

\begin{definition}
A 
network on~$X$ is called \emph{phylogenetic} if
\begin{itemize}
    \item it has no parallel arcs;
    \item it has no degree-2 vertices \kh{other than the root in case the network is directed}; and 
    \item it has no blobs with at most~$2$ incident edges/arcs, other than possibly a blob with no incoming and two outgoing arcs in case the network is directed.
\end{itemize}
See Figure~\ref{fig:degree2blob} for an example \revnew{of how a network can fail to be phylogenetic}.
\end{definition}

A (directed/semi-directed) phylogenetic network with no reticulations is called a \emph{(rooted/unrooted) phylogenetic tree}.


We note that semi-directed phylogenetic networks as defined here do not contain any parallel arcs, even though some previous papers do allow one or more pairs of parallel arcs to be contained in such networks.




Two networks~$N,N'$ on~$X$ are \emph{isomorphic}, denoted $N\cong N'$, if there exists a bijection~$\phi$ from the vertex set of~$N$ to the vertex set of~$N'$ such that~$\{u,v\}$ is an edge of~$N$ if and only if $\{\phi(u),\phi(v)\}$ is an edge of~$N'$, $(u,v)$ is an arc of~$N$ if and only if $(\phi(u),\phi(v))$ is an arc of~$N'$ and $\phi(x)=x$ for all~$x\in X$. For sets of networks~$\mathcal{N},\mathcal{N'}$ \kh{on $X$,} we write $\mathcal{N}\simeq \mathcal{N'}$ if there exists a bijection $\psi:\mathcal{N}\to \mathcal{N'}$ such that $N \cong \psi(N)$ for all~$N\in\mathcal{N}$.

\subsection{Paths and cycles}


\leon{A \emph{path} in a network is a sequence of pairwise distinct vertices~$(v_1,\ldots,v_p)$, $p \ge 1$, such that
for all $i\in\{1,\ldots ,p-1\}$ either $(v_i,v_{i+1})$ or $(v_{i+1},v_{i})$ is an arc or $\{v_i,v_{i+1}\}$ is an edge. Such a sequence is a \emph{semi-directed path} (from~$v_1$ to~$v_p$)} 
if for all $i\in\{1,\ldots ,p-1\}$ either $(v_i,v_{i+1})$ is an arc or $\{v_i,v_{i+1}\}$ is an edge. Given two vertices~$u,v$ of a network, we say that~$v$ is \emph{below}~$u$ if there exists a semi-directed path from~$u$ to~$v$ (possibly $u=v$). If, in addition,~$u \neq v$ we say~$v$ is \emph{strictly below}~$u$. If $v$ is (strictly) below $u$ then we say $u$ is \emph{(strictly) above $v$}.

\revnew{We now introduce $\wedge$-paths, which can be pronounced as ``wedge paths''.}\footnote{\revnew{Such paths were called ``up-down paths'' in~\cite{xu2023identifiability}, but we use $\wedge$-paths to avoid confusion with an earlier notion of up-down paths that contain only arcs~\cite{bordewich2016determining,murakami2019reconstructing}.}} A \emph{$\wedge$-path} (between~$v_1$ and~$v_p$) in a network is a sequence of distinct vertices~$(v_1,\ldots,v_i,\ldots,v_p)$, $p \ge 1$, such that~$(v_i,\ldots ,v_1)$ and~$(v_i,\ldots ,v_p)$ are semi-directed paths, for some~$i\in\{1,\ldots ,p\}$ \kh{-- see Figure~\ref{fig:degree2blob} for an example.}
\mj{Such paths will be used when restricting a network to a subset of taxa.} 

A \emph{cycle} in a network~$N$ is a sequence $(v_1,e_1,v_2,e_2\ldots,v_p=v_1)$, $p \ge 4$, alternating between vertices $v_i$ and edges or arcs~$e_j$ such that~$v_i\neq v_j$ for $1\leq i<j<p$ and for all $i\in\{1,\ldots ,p-1\}$ either $e_i=(v_i,v_{i+1})$ or $e_i=(v_{i+1},v_{i})$ is an arc of~$N$ or $e_i=\{v_i,v_{i+1}\}$ is an edge of~$N$. \revnew{We may also describe a cycle by only its vertices~$(v_1,v_2,\ldots,v_p=v_1)$.}
\revnew{We say that} a reticulation~$r$ in $N$ is a \emph{sink} of a cycle~$C$ if~$C$ contains both incoming arcs of~$r$. \revnew{See Figure~\ref{fig:degree2blob} for an example.}

A \emph{semi-directed cycle} in a network is a cycle~$(v_1,e_1,v_2,e_2\ldots,v_p=v_1)$ such that for all $i\in\{1,\ldots ,p-1\}$ either $e_i=(v_i,v_{i+1})$ or $e_i=\{v_i,v_{i+1}\}$. 


\begin{lemma}\label{lem:sdcycle}
In a \revnew{semi-directed 
network}~$N$ each cycle has at least one sink. In particular, $N$ contains no semi-directed cycles.
\end{lemma}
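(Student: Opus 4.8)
The plan is to exploit the acyclicity of a rooting of $N$. Fix a rooting $N_d$ of $N$ and a topological ordering $\prec$ of its vertices, so that the root $\rho$ is $\prec$-smallest and every arc $(u,v)$ of $N_d$ satisfies $u\prec v$. The main idea is that cycles of $N$ lift to cycles of $N_d$: every edge or arc of $N$ is either inherited from a single arc of $N_d$ (whose orientation may have been forgotten when it became an edge), or, in the unique case of the edge/arc $e^\ast$ created by suppressing $\rho$, corresponds to a two-arc path through $\rho$ in $N_d$. Hence a cycle $C$ of $N$ lifts to a cycle $C_d$ of $N_d$ with $V(C_d)=V(C)$, unless $C$ traverses $e^\ast$, in which case $V(C_d)=V(C)\cup\{\rho\}$. (If a rooting leaves $\rho$ unsuppressed then $\rho$ is a degree-$2$ vertex of $N$, contradicting that $N$ is phylogenetic, so we may assume $\rho$ was suppressed whenever $e^\ast$ is relevant.) Since $N_d$ has no undirected edges, $C_d$ consists entirely of arcs of $N_d$.

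To locate a sink, I would take $v$ to be the $\prec$-largest vertex of $C_d$; as $\rho$ is $\prec$-smallest, $v\neq\rho$. Both cycle-neighbours of $v$ in $C_d$ are $\prec$-smaller than $v$, and since arcs run from $\prec$-smaller to $\prec$-larger endpoints, both arcs of $C_d$ incident with $v$ point into $v$. Thus $v$ has indegree at least $2$ in $N_d$, so $v$ is a reticulation and these two arcs are reticulation arcs, which therefore survive as arcs in $N$. The remaining task is to confirm that these two incoming arcs of $v$ are exactly the two edges/arcs of $C$ at $v$, i.e. that $v$ is a sink of $C$ itself and not merely of the lifted cycle $C_d$.

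The main obstacle is precisely this bookkeeping around the suppressed root, which I would settle by a short case analysis on $e^\ast$. If $C$ avoids $e^\ast$ then $C_d=C$ and there is nothing to check. If $e^\ast=\{a,b\}$ arises from suppressing $\rho$ between two edges, then $a,b$ are non-reticulations, so the reticulation $v$ satisfies $v\notin\{a,b\}$, and neither arc at $v$ is $\rho$-incident; both lie in $C$. If $e^\ast$ is a merged arc with reticulation endpoint $w$ and non-reticulation endpoint $u$, then $v\neq u$, and if $v=w$ the $\rho$-incident arc $(\rho,w)$ of $C_d$ corresponds exactly to the arc $e^\ast$ at $v$ in $C$; so again both incoming arcs of $v$ appear in $C$. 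In all cases $v$ is a sink of $C$, proving the first claim. The ``in particular'' statement is then immediate: in a semi-directed cycle the two cycle edges/arcs at any vertex $v_i$ have the forms $(v_{i-1},v_i)$ or $\{v_{i-1},v_i\}$ and $(v_i,v_{i+1})$ or $\{v_i,v_{i+1}\}$, so at most one can be an arc entering $v_i$; hence no vertex is a sink, and by the first claim no cycle of $N$ can be semi-directed.
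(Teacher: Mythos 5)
Your proof is correct, and while it rests on the same foundation as the paper's --- passing to a rooting $N_d$ and exploiting its acyclicity --- the execution is genuinely different. The paper argues by contradiction: assuming a cycle $C$ without sinks, it fixes a rooting with $V(N_d)=V(N)\cup\{\rho\}$, splits into two cases (either $C$ contains an arc of $N$ or consists only of edges), and claims that in each case all elements of $C$ must be oriented consistently in $N_d$, yielding a directed cycle and hence a contradiction. You instead argue directly: you take the topologically maximal vertex $v$ of the lifted cycle $C_d$, observe that both cycle arcs at $v$ must enter $v$, so that $v$ is a reticulation of $N_d$ whose incoming arcs remain arcs of $N$, and then verify by a case analysis on the root-suppression edge/arc $e^\ast$ that $v$ is a sink of $C$ itself rather than merely of $C_d$. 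Your route buys two things. First, it is constructive: it exhibits the sink instead of refuting its absence. Second, and more significantly, your explicit treatment of $e^\ast$ addresses a case the paper's proof silently elides: the paper's assertion that ``$N_d$ contains all arcs $(v_i,v_{i+1})$'' is literally false when $C$ traverses the edge/arc created by suppressing the root, since that element corresponds to two arcs through $\rho$ in $N_d$ rather than to a single arc between $v_i$ and $v_{i+1}$; repairing this within the paper's propagation argument essentially requires the bookkeeping you carry out. The paper's proof is shorter, but yours is the more complete argument.
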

\begin{proof}
    Suppose~$N$ has a cycle~$C=\revnew{(v_1,v_2,\ldots,v_p=v_1)}$ without sinks. Let~$N_d$ be a rooting of~$N$. \revnew{Then $V(N_d) = V(N)\cup \{\rho\}$, with $\rho$ the root of $N_d$, and~$N_d$ either contains a cycle $(v_1,\ldots,v_p)$ or a cycle $(v_1,\ldots,v_{j-1},\rho,v_j,\ldots ,v_p)$.}

    \revnew{First suppose that~$N_d$ contains a cycle $(v_1,\ldots,v_p)$.
    Since~$N_d$ is acyclic,~$N_d$ contains some arc~$(v_{i-1},v_{i})$.  Following~$C$ from~$v_{i-1}$, at some point there is an arc~$(v_{k-1},v_k)$ followed by an arc~$(v_{k+1},v_k)$, again by the acyclicity of~$N_d$. However, then~$N$ also contains arcs~$(v_{k-1},v_k),(v_{k+1},v_k)$ and hence~$v_k$ is a sink of~$C$.}

    \revnew{Now consider the second case, that~$N_d$ contains a cycle $(v_1,\ldots,v_{j-1},\rho,v_j,\ldots ,v_p)$. Then we can conclude, similarly to the previous case, that~$N_d$ contains arcs $(u,v_k),(w,v_k)$ with~$u\in\{v_{k-1},\rho\}$ and $w\in\{v_{k+1},\rho\}$. In all cases,~$N$ contains arcs~$(v_{k-1},v_k),(v_{k+1},v_k)$ and hence~$v_k$ is a sink of~$C$.}
    
    \revnew{The second part of the lemma follows directly from the observation that a semi-directed cycle has no sink.}
\end{proof}

\section{\kh{Restricting networks}}\label{sec:restrictions}

\mj{In this section, we formally define the \emph{restriction $N|_A$} of a network $N$ on $X$ to a subset of taxa $A \subseteq X$ and consider some of its properties}.
In subsequent sections our focus will be on quarnets coming from a network, which are simply restrictions to subsets of size 4.

\mj{Roughly speaking, for a (phylogenetic) network $N$ on $X$ and a subset $A \subseteq X$, there are two main steps to constructing $N|_A$:
\begin{enumerate}
    \item Delete all vertices that are not contained on
    any path between two leaves in$~A$, resulting in a (not necessarily phylogenetic) network on $A$.
    \item Transform this network to a phylogenetic network on $A$ by repeatedly suppressing degree-2 vertices, parallel arcs, and blobs with at most~$2$ incident edge/arcs.
\end{enumerate}
In the remainder of this section, we make the above steps precise, and show that $N|_A$ is well-defined. 
The main technical task is to prove the intuitively obvious but non-trivial fact that \revnew{for the suppression operations described in step~2 the order does not affect} the final network, which implies that the restriction is well-defined.}

\subsection{Suppression operations}

\mj{We now formally define the suppression operations that are used to reduce a network to a phylogenetic network.} \leon{See Figure~\ref{fig:supp} for illustrations focusing on semi-directed networks.}

The \emph{blob suppression} operation on a network does the following for every blob~$B$ with at most two incident edges/arcs 
\leon{that are not two arcs leaving~$B$}:
\begin{itemize}
    \item[(BLS)] collapse~$B$ to a single vertex~$v_B$ and, if~$v_B$ has degree~$1$, delete it. 
\end{itemize}

The \emph{parallel arc suppression} operation on a network \kh{$N$} does the following for each pair of vertices~$u,v$ with two arcs $(u,v)$:
\begin{itemize}
\item[(PAS)] if $u$ and $v$ both have degree $3$ then remove the arcs $(u,v)$, replace \leon{any arc $(v,w)$ by $(u,w)$, any edge $\{v,w\}$ with $\{u,w\}$} and delete $v$.
    
\end{itemize}

\mj{The \emph{vertex suppression} operations on a network apply, for each degree-2 vertex~$v\in V$, one of the following if applicable}

\begin{itemize}
    \item[(V1)] if~$v$ has incident edges $\{u,v\},\{v,w\}$, replace them by an edge~$\{u,w\}$ and delete~$v$;
    \item[(V2)] if~$v$ has an incident edge $\{u,v\}$ and an incident arc $(v,w)$, replace them by an arc $(u,w)$ and delete~$v$; and
    \item[(V3)] if~$v$ has incident arcs $(u,v),(v,w)$, replace them by an arc $(u,w)$ and delete~$v$.
\end{itemize}

\begin{figure}
    \centerline{\includegraphics{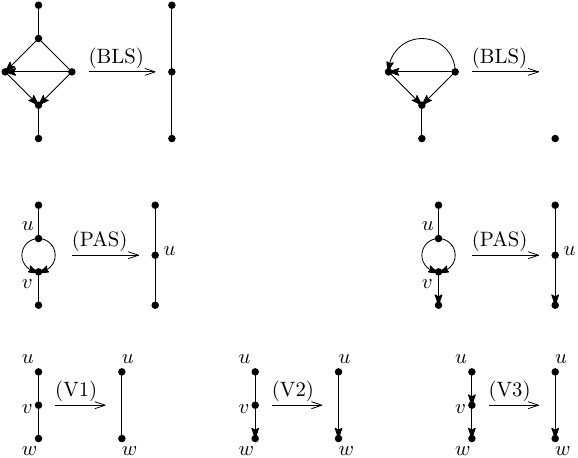}}
    \caption{\label{fig:supp} \leon{Illustrations of the suppression operations used to turn a semi-directed network into a semi-directed phylogenetic network.}}
\end{figure}

\revnew{Note that in a directed network only operation (V3) may be applicable.} \mj{Also observe that in the definition of a semi-directed network, (V1) and (V2) are 
applied to the root $\rho$ (and only $\rho$) after replacing arcs not entering reticulations with edges.

\revnew{Note that parallel edges will never appear. To see this, recall from Lemma~\ref{lem:sdcycle} that each cycle in a semi-directed network has a sink and observe that this property is preserved under each of the suppression operations. Furthermore, a degree-$2$ vertex~$v$ with an incident arc~$(u,v)$ and edge~$\{v,w\}$ will never appear. To see this, observe that semi-directed networks have the property that, for each arc~$(u,v)$,~$v$ has indegree-$2$ and this property is preserved under each of the suppression operations.}}

It is easy to verify that if $N'$ is derived from $N$ by any of (V1), (V2), (V3), (BLS), (PAS) and $N$ is a \mj{directed} network, then $N'$ is a \mj{directed} network. The following lemma shows that this also holds for semi-directed networks, since (V3) is not applicable in semi-directed networks.

\begin{lemma}\label{lem:semiDirectedPreserved}
Let $N$ be a semi-directed network.
    If $N'$ is derived from $N$ by a single application of (V1), (V2), (BLS) or (PAS), then $N'$ is also a semi-directed network.
\end{lemma}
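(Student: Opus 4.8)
The lemma asserts that the semi-directed network class is closed under the suppression operations (V1), (V2), (BLS), (PAS). Let me think about what I need to prove.

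A semi-directed network $N$ is defined as something obtainable from a directed network $N_d$ (a rooting) by replacing all arcs except those entering reticulations with edges, then suppressing the root if it becomes degree-2. So to show $N'$ is semi-directed, I essentially need to produce a rooting $N'_d$ of $N'$.

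**The natural strategy.** Given $N$ with rooting $N_d$, and an operation applied to $N$ yielding $N'$, I want to apply a corresponding operation to $N_d$ yielding a directed network $N'_d$ that is a rooting of $N'$. The plan is: for each operation, track what it does in $N$ back to $N_d$, perform the analogous directed operation, and verify (a) the result is a directed network, and (b) its underlying semi-directed network is $N'$.

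Let me think through each case.

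**Vertex suppressions (V1), (V2).** These act on a degree-2 vertex $v$ of $N$. The vertex $v$ corresponds to a degree-2 or degree-3 vertex in $N_d$... actually $v$ exists in $N_d$ too (the root is the only thing deleted). A degree-2 non-root vertex in a directed network — can one exist? In $N_d$ every non-root vertex has degree at least... hmm, not necessarily, $N_d$ need not be phylogenetic. The issue: $v$ has degree 2 in $N$, and orientations in $N$ come from $N_d$. For (V1), $v$ has two incident edges; in $N_d$ these are two arcs not entering reticulations, so $v$ is either a degree-2 vertex with one in one out, or degree-2 with two out (can't, root only), etc. I'd apply (V3) in $N_d$ (suppress degree-2 vertex), get a directed network, and check its underlying semi-directed network is $N'$.

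The subtle point is the root. What if $v$ is adjacent to $\rho$? Then $\rho$ might have been suppressed, so the "edge" $\{u,v\}$ in $N$ might correspond to an arc $(\rho, \cdot)$ path through the suppressed root. I need to handle the case where the two edges/arcs at $v$ in $N$ trace back to arcs incident to $\rho$ in $N_d$.

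**The main obstacle: (BLS) and the root.** The hard part will be **(BLS)**, because collapsing a blob in $N$ must correspond to collapsing the same blob in $N_d$, but the blob in $N_d$ might contain the root $\rho$. A blob with two incident edges/arcs in $N$ that are not two outgoing arcs — I must find the corresponding blob structure in $N_d$ and argue that collapsing it yields a valid directed network whose suppression gives $N'$. The condition "not two arcs leaving $B$" is precisely what prevents us from destroying the root's out-degree-2 property. I expect to spend most effort checking that after collapsing the blob in $N_d$, the collapsed vertex still has a valid in/out-degree profile and that no directed cycle is created.

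**(PAS).** Parallel arcs in $N$ arise (per the text) when $N_d$ has parallel arcs or its root sits in a triangle. I would split into these sub-cases, apply the corresponding reduction in $N_d$ (collapsing parallel arcs, or restructuring the triangle), and verify the underlying semi-directed network is $N'$.
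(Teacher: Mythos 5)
Your high-level strategy coincides with the paper's: produce a rooting of $N'$ by mirroring each suppression operation inside a rooting of $N$, and you correctly identify that the only genuine difficulty is interference from the root. But the proposal stops exactly where the proof has to start. For every operation you write that you ``would'' or ``expect to'' handle the case where the root sits at the operation site -- the root adjacent to the suppressed degree-2 vertex for (V1)/(V2), the root inside the collapsed blob for (BLS), the root on a parallel arc or in a triangle for (PAS) -- and no argument is ever given for any of these cases. Those cases are not routine bookkeeping; they are the entire content of the lemma, since away from the root the correspondence between $N$ and $N_d$ is immediate.

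The idea you are missing, which the paper uses to dispatch (PAS), (V1) and (V2), is that you are free to \emph{choose} the rooting rather than work with an arbitrary one: a semi-directed network generally has many rootings, and the paper argues that if some rooting places the root on an edge/arc at the operation site, then another rooting places it on a neighbouring edge away from the site (for (PAS), on the edge $\{u,w\}$ incident to $u$, which must be an edge because $u$ has degree $3$; for (V1)/(V2), on an edge/arc incident to a non-leaf neighbour $u$ of $v$, with a separate short argument when $N$ has only three vertices). With the root moved out of the way, the directed operation ((PAS), respectively (V3)) applies verbatim in $N_d$ and immediately yields a rooting of $N'$, so the hard case analysis you anticipate never has to be performed. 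Your alternative -- fixing one rooting and directly analysing the root-at-site configurations -- is workable for (V1)/(V2), but for (BLS) with the root inside the blob it is genuinely delicate: after collapsing the blob in $N_d$ one obtains a root-like vertex, and one must then check that the root-suppression step in the definition of the underlying semi-directed network does not delete a degree-2 vertex $v_B$ that (BLS) keeps in $N'$, i.e.\ that the directed network you built is a rooting of $N'$ and not of a further-suppressed network. The paper handles (BLS) by a two-case split on whether the blob in $N_d$ has an incoming arc, replacing a root-containing blob by a fresh root vertex. In short: right strategy and correctly diagnosed obstacles, but the re-rooting idea and all of the case analyses that actually constitute the proof are absent, so as written this is a plan rather than a proof.
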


\mj{The proof of \Cref{lem:semiDirectedPreserved} is deferred to the appendix.}

The \emph{suppression} operation on a network~$N$ performs first the blob suppression operation \mj{(BLS)} and then repeatedly applies the parallel arc suppression operation \mj{(PAS)} and the vertex suppression operations \mj{(V1),(V2),(V3) until none of them} is applicable. The resulting network is denoted $\supp{N}$.


\mj{The proof of the following result is quite technical, and is deferred to the appendix.}

\leonn{\begin{lemma}\label{lem:SuppWellDefinedBrief}
    $\supp{N}$ is well-defined for any network~$N$. 
\end{lemma}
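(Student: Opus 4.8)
The plan is to treat the suppression operations as an abstract rewriting system and prove that it is \emph{confluent} with unique normal forms, so that the network $\supp{N}$ produced at the end does not depend on the order in which the operations are applied. Write $N \to N'$ if $N'$ is obtained from $N$ by a single application of one of (V1), (V2), (V3), (PAS), or (BLS) to a single eligible feature (one degree-2 vertex, one pair of parallel arcs, or one small blob), and let $\to^*$ denote its reflexive-transitive closure. Since $\supp{N}$ is by definition a network that is reachable from $N$ and to which none of the operations applies, i.e. a normal form, it suffices to show that every network has a \emph{unique} normal form under $\to$.

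First I would establish \emph{termination}: every operation strictly decreases the number of vertices, since (V1), (V2), (V3) and (PAS) each delete exactly one vertex, while (BLS) collapses a blob of at least three vertices to one (deleting it if it becomes a leaf) and so removes at least two. Hence there is no infinite reduction sequence and $|V(N)|$ is a well-founded termination measure. Moreover every reachable graph is again a network: this is the remark on directed networks preceding \Cref{lem:semiDirectedPreserved} together with \Cref{lem:semiDirectedPreserved} itself for the semi-directed case. By Newman's Lemma, a terminating rewriting system has unique normal forms if and only if it is \emph{locally confluent}, meaning that whenever $N \to N_1$ and $N \to N_2$ there exists $N'$ with $N_1 \to^* N'$ and $N_2 \to^* N'$.

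I would prove local confluence by analysing the two operations according to how the features they act on overlap. If the two features are vertex-disjoint, the operations commute: each remains applicable after the other, and applying them in either order yields the same graph, giving a one-step diamond on each side. The remaining, overlapping cases form a finite list of \emph{critical pairs} to enumerate and check individually: two adjacent degree-2 vertices joined by an edge/arc, where both orders contract to the same single edge/arc; a degree-2 vertex created on the boundary of a parallel-arc pair after (PAS), where the subsequent vertex suppression must be shown to reproduce the graph obtained by suppressing in the other order; and the interactions of (BLS) with a vertex- or parallel-arc-suppression acting inside or on the boundary of a collapsed blob, where one verifies that performing the local suppression first and then collapsing the (possibly altered) blob yields the same vertex $v_B$ with the same incident edges/arcs.

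The main obstacle is precisely this last family: controlling the interaction of (BLS) with the other operations. Two points need care. First, because the definition of $\supp{N}$ performs (BLS) only once, at the start, one must check that no later application of (V1)--(V3) or (PAS) can create a new blob with at most two incident edges/arcs (outside the exceptional directed case); otherwise a reduction sequence could terminate at a non-phylogenetic graph and the normal forms would differ. Second, for a blob collapsed by (BLS), one must show that any internal degree-2 vertex suppressions or parallel-arc suppressions do not change which single vertex it collapses to, nor its set of incident edges/arcs, so that the relative order of ``cleaning up inside the blob'' and ``collapsing the blob'' is immaterial. Once these overlapping cases are dispatched, local confluence holds, Newman's Lemma yields confluence and unique normal forms, and therefore $\supp{N}$ is well-defined. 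A more direct route, which I would keep in reserve, is to characterise the normal form canonically from the blob-tree structure of $N$ (retain each blob with at least three incident edges/arcs, collapse each small blob, contract each maximal path of degree-2 vertices) and then show that every maximal reduction sequence reaches exactly this canonical network, bypassing the critical-pair bookkeeping.
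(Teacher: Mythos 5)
Your strategy---termination, Newman's Lemma, then local confluence via critical pairs---is a legitimate route and genuinely different from the paper's, which never argues confluence locally. The paper first disposes of (BLS) separately: since the definition of $\supp{N}$ applies (BLS) exactly once, at the start, and distinct blobs cannot share vertices in a binary network, that phase is well-defined on its own. For the remaining phase it gives a \emph{canonical characterization} of the normal form: Lemma~\ref{lem:SPsubgraphs} shows that the existence of an SP-subgraph (a series-parallel-like subgraph) from $u$ to $w$ is invariant under a single application of (V1), (V2), (V3) or (PAS), and Lemma~\ref{lem:SuppWellDefined} then shows that the vertices, arcs, edges and parallel arcs of \emph{any} exhaustively reduced network are determined purely by the SP-subgraph structure of the starting network. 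Uniqueness of the result is then immediate, with no case analysis of overlapping rewrites at all. This is in essence the ``reserve'' route you mention in your last sentence, executed with SP-graphs in place of the blob tree; your reserve plan is closer to the paper than your main plan.

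The gap in your main plan is that the part you defer is exactly the content of the lemma. Termination and the appeal to Newman's Lemma are fine, but they only reduce the problem to local confluence, and you never verify a single critical pair: the overlapping cases are ``to enumerate and check individually,'' the (PAS)/vertex-suppression interaction ``must be shown,'' and the (BLS) interactions, which you yourself call the main obstacle, are left entirely open. These checks are not routine---for instance, (PAS) leaves its surviving endpoint with degree~$2$, and suppressing a degree-$2$ vertex of a triangle creates a new pair of parallel arcs, so rewrites genuinely spawn new redexes of other types---and carrying them out would amount to a proof comparable in length to the paper's. Moreover, part of your obstacle is self-inflicted: because $\supp{N}$ by definition performs (BLS) once, before the other operations, well-definedness only requires confluence of the system generated by (PAS), (V1), (V2), (V3) started at the post-(BLS) network. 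Putting (BLS) into the arbitrary-order rewriting system forces you to analyse its critical pairs and to prove your first ``point of care'' (that no later operation recreates a small blob), neither of which is needed for the statement being proved; that claim matters only for your extended system, not for $\supp{N}$ as defined.
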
}

\subsection{Restrictions}

Given a network~$N$ on~$X$ and a subset~$A\subseteq X$ with~$|A|\geq 2$, we define $N_{\wedge A}$ as the network obtained from~$N$ by deleting all vertices that are not on a $\wedge$-path between two vertices in~$A$. 
The \emph{restriction} of~$N$ to~$A$ is defined as $N|_A=\supp{N_{\wedge A}}$.
See Figure~\ref{fig:restr} for an example. \leon{Note that for a directed network~$N$ it is not true in general that $\semi{N|_A}\cong \semi{N}|_A$ since suppression operations may be applicable in $\semi{N|_A}$.} \revnew{Consider for example the directed network~$N$ in Figure~\ref{fig:underylingSDs}(c). Then $N|_{a,b,c,d}$ is equal to~$N$ and $\semi{N|_{a,b,c,d}}$ is the indicated semi-directed network. However, $\semi{N}|_{a,b,c,d}$ is an unrooted phylogenetic tree since the blob with two incident edges is suppressed.}




\begin{figure}
    \centerline{\includegraphics[width=\textwidth]{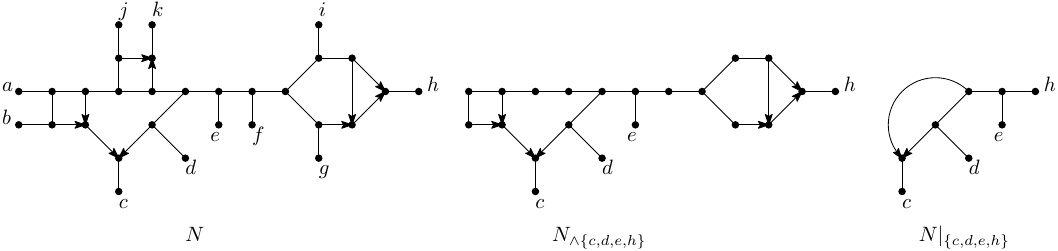}}
    \caption{\label{fig:restr} An example of restricting a semi-directed phylogenetic network~$N$ to a subset of the taxa~$A=\{c,d,e,h\}\subseteq X$. First, all vertices are deleted that are not on a $\wedge$-path between two vertices of~$A$, giving \revnew{the} semi-directed network~$N_{\wedge A}$. Then suppression operations are applied, giving the restriction~$N|_A$, which is a semi-directed phylogenetic network, by Proposition~\ref{prop:restriction_semidirected}. Moreover, since~$|A|=4$, ~$N|_A$ is a quarnet in~$Q(N)$.}
\end{figure}

\leon{To prove that~$N|_A$ is a semi-directed phylogenetic network, if~$N$ is a semi-directed phylogenetic network, we will use the following lemma, whose proof is deferred to the appendix.}

\begin{lemma}
    \label{lem:wedges} \leon{Consider a network~$N$ on~$X$, leaves~$a,b\in X$ and a reticulation~$v$ with parents~$u,w$. If~$v$ is on a $\wedge$-path in~$N$ between~$a$ and~$b$, then~$u$ is on a $\wedge$-path in~$N$ between~$a$ and~$b$.}
\end{lemma}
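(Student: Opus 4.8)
The plan is to work directly with the definitions of $\wedge$-path and reticulation, exploiting the characterization given just before the statement: a path is a $\wedge$-path precisely if it does not contain any reticulation together with both of its in-neighbours. So suppose $v$ is a reticulation with parents $u,w$, and $v$ lies on some $\wedge$-path $P$ between $a$ and $b$. First I would use the characterization to argue that $P$ cannot contain \emph{both} $u$ and $w$ (otherwise $P$ would contain the reticulation $v$ and both its in-neighbours, contradicting that $P$ is a $\wedge$-path). Hence exactly one of $u,w$ lies on $P$; without loss of generality, by symmetry of the two parents, I would like to reduce to the case that the neighbour of $v$ on $P$ that lies ``above'' it is $u$ itself.

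The structural picture to nail down is this: since $P$ is a $\wedge$-path, by definition there is an apex index $i$ so that the two halves $(v_i,\ldots,v_1=a)$ and $(v_i,\ldots,v_p=b)$ are semi-directed paths directed away from the apex $v_i$. Because $v$ is a reticulation (indegree $2$), the only way a semi-directed path can pass \emph{through} $v$ or reach $v$ is along an arc entering $v$, i.e. from one of its parents; and a semi-directed path leaving $v$ would have to use an outgoing arc or incident edge of $v$. I would do a short case analysis on where $v$ sits relative to the apex $v_i$. If $v$ is an internal vertex of one of the two semi-directed half-paths, then the step of the path arriving at $v$ from the apex side must be an arc into $v$ (coming from a parent, say $u$), and the next step must leave $v$; the neighbour on the apex side is then $u$. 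If instead $v$ is the apex $v_i$ itself, then both halves leave $v$, so neither neighbour of $v$ on $P$ is a parent of $v$ — but then $v$ being a reticulation with both in-neighbours off the path is fine, and I would handle this by observing $u$ can be appended or the apex shifted. The clean uniform statement I expect to land on is: the neighbour of $v$ on $P$ lying on the apex side is a parent of $v$, and after relabelling that parent is $u$.

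The main construction is then to produce the new $\wedge$-path through $u$. The natural candidate is to take $P$ and replace $v$ by $u$ at the relevant end, or to reroute through $u$, using the fact that $u$ is adjacent to $v$. I would argue that $u$ lies on a $\wedge$-path between $a$ and $b$ by exhibiting an explicit path: keep the portion of $P$ on the side of $v$ away from $u$ (from $v$ out to whichever of $a,b$ it reaches through its non-$u$ neighbour), and on the other side follow $u$ and then continue along $P$'s $u$-side toward the other leaf. Because $u\to v$ is an arc into the reticulation, $u$ sits above $v$ along a semi-directed path, so inserting $u$ preserves the semi-directed orientation of that half; and the resulting path still avoids having any reticulation together with both its in-neighbours, since we have only added $u$, which is not a reticulation, and we have removed one in-neighbour relationship rather than created one.

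The step I expect to be the main obstacle is verifying that the rerouted sequence is genuinely a $\wedge$-path and not merely a walk: I must ensure the vertices stay \emph{pairwise distinct} and that no forbidden reticulation-with-both-parents configuration is accidentally introduced by the splice. In particular I would need to check that $u$ does not already appear elsewhere on the retained portion of $P$ (handling that degenerate case separately, e.g. by shortening the path), and that the new apex and the two semi-directed halves are correctly oriented at the splice point where $u$ meets $v$ or meets the rest of $P$. Once distinctness and the orientation at the splice are confirmed, the characterization of $\wedge$-paths immediately certifies the result, since the only reticulation whose membership changed is $v$, and we have arranged that $P'$ does not contain both of its in-neighbours.
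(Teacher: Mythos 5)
Your proposal has a fatal flaw at the ``without loss of generality, by symmetry of the two parents'' step. The lemma is \emph{not} symmetric in $u$ and $w$: the hypothesis hands you a $\wedge$-path $P$ through $v$, but the conclusion demands a $\wedge$-path through the \emph{specific} parent $u$. You correctly observe that $P$ contains exactly one of $u,w$, but you cannot relabel that parent to be $u$. If $P$ happens to contain $u$, the lemma is immediate and no construction is needed at all; the entire content of the lemma is the other case, where $P$ traverses the arc $(w,v)$ and $u$ lies nowhere on $P$. Your splice construction (``follow $u$ and then continue along $P$'s $u$-side'') presupposes that $u$ is a neighbour of $v$ \emph{on the path}, so in the hard case it gives you nothing: $u$ is adjacent to $v$ only via the arc $(u,v)$ pointing \emph{into} $v$, and you cannot walk from $v$ to $u$ against that arc on a semi-directed half of a $\wedge$-path, nor do you have any other edge connecting $u$ to the retained portion of $P$.

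The missing idea is global, not local: in the hard case one must route through ancestors of $u$. The paper's proof works in a rooting of $N$ and takes a directed path $R$ from the root to $u$; it then forms the new $\wedge$-path either by following $P$ from $a$ up to the last vertex $v_{RW}$ of $R$ on $P$, then down $R$ to $u$, across the arc $(u,v)$, and down $P$ to $b$, or --- if $R$ avoids $P$ --- by going up a root path $S$ from the apex of $P$ to the last common vertex of $S$ and $R$ and descending $R$ from there. This is where acyclicity of the rooting is used to certify that the spliced sequence really is a $\wedge$-path with a well-defined apex; a purely local argument from the ``no reticulation together with both in-neighbours'' characterization cannot reach $u$ at all, since nothing adjacent to the given path leads to $u$ in a semi-directed direction. (A smaller issue: your claim that the inserted vertex $u$ ``is not a reticulation'' is also unjustified --- a parent of a reticulation may itself be a reticulation --- but this is moot given the main gap.)
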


\begin{proposition}\label{prop:restriction_semidirected}
Given a semi-directed phylogenetic network~$N$ on~$X$ and a subset~$A\subseteq X$ with~$|A|\geq 2$, the restriction $N|_A$ of~$N$ to~$A$ is a semi-directed phylogenetic network.
\end{proposition}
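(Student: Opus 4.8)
The plan is to show that $N|_A = \supp{N_{\wedge A}}$ is a semi-directed phylogenetic network by establishing two facts: first, that $N_{\wedge A}$ is a semi-directed network (possibly non-phylogenetic), and second, that applying the suppression operation to a semi-directed network yields a semi-directed phylogenetic network. For the second fact, \Cref{lem:semiDirectedPreserved} already tells us that each individual application of (V1), (V2), (BLS), or (PAS) preserves the property of being a semi-directed network, and since (V3) never applies to a semi-directed network, the whole suppression process keeps us inside the class of semi-directed networks; by the definition of the suppression operation, the result has no degree-2 vertices, no parallel arcs, and no blobs with at most two incident edges/arcs of the forbidden type, so it is in fact \emph{phylogenetic}. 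Thus the crux of the proof reduces to the first fact: that deleting all vertices not lying on a $\wedge$-path between two leaves of $A$ produces a graph that is still a semi-directed network.

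To prove that $N_{\wedge A}$ is a semi-directed network, I would exhibit a rooting. Let $N_d$ be a rooting of $N$ with root $\rho$, so that $N = \semi{N_d}$. The idea is to restrict $N_d$ analogously and show that the restricted directed graph is a rooting of $N_{\wedge A}$. Concretely, I would consider the set $V'$ of vertices of $N$ retained in $N_{\wedge A}$ (those on a $\wedge$-path between two leaves of $A$), and argue that the corresponding induced subgraph of $N_d$, after possibly reattaching and re-rooting, is a directed network whose underlying semi-directed network is exactly $N_{\wedge A}$. The key structural tool here is \Cref{lem:wedges}: whenever a reticulation $v$ with parents $u,w$ is retained (because it lies on a $\wedge$-path between $a,b\in A$), its parent $u$ is also retained. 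By symmetry both parents $u,w$ are retained, so both incoming arcs of every retained reticulation survive the deletion. This is exactly what is needed to ensure that no retained reticulation loses an incoming arc, which would otherwise turn a directed arc into something that cannot be consistently oriented in the semi-directed setting.

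The main obstacle I expect is verifying that the deletion does not destroy the orientability that characterizes semi-directed networks — that is, that after deleting vertices we can still find a valid rooting. The delicate point is that the deleted vertices might sit on directed paths in $N_d$ between retained vertices, so that after deletion the surviving arcs need to be ``spliced together'' consistently; one must check that these splicings never create a directed cycle and never leave a reticulation with a single incoming arc (which \Cref{lem:wedges} rules out) or with an incoming arc whose tail has been deleted in a way that cannot be repaired. I would handle this by showing that the deleted vertices, being off all $\wedge$-paths between leaves of $A$, form ``pendant'' or ``internal-path'' structures whose removal corresponds precisely to the suppression operations already shown to preserve the semi-directed property, or alternatively by directly re-rooting $N_d$ restricted to $V'$ and checking the defining properties of a directed network one by one.

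Once $N_{\wedge A}$ is known to be a semi-directed network, the conclusion follows immediately: $N|_A = \supp{N_{\wedge A}}$ is a semi-directed network by the preservation results, and it is phylogenetic because the suppression operation is run until no degree-2 vertex, parallel arc, or small blob remains. I would therefore structure the writeup as a short reduction to these two lemmas, devoting most of the detailed argument to establishing that the vertex-deletion step stays within the class of semi-directed networks, with \Cref{lem:wedges} as the central ingredient guaranteeing that reticulations retain both of their parents.
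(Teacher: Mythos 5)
Your overall plan coincides with the paper's: show that $N_{\wedge A}$ is itself a semi-directed network by producing a rooting for it from a rooting of $N$, with \Cref{lem:wedges} guaranteeing that every retained reticulation keeps both parents, and then finish via \Cref{lem:semiDirectedPreserved} plus the observation that no suppression operation applies to $\supp{N_{\wedge A}}$. However, there is a genuine gap at exactly the step you yourself flag as the main obstacle: you never actually construct the rooting of $N_{\wedge A}$, and of the two routes you offer for closing this gap, the first one fails. Deleting the vertices that lie on no $\wedge$-path between leaves of $A$ removes, among other things, the leaves labelled by $X\setminus A$ together with the paths, subtrees and blobs leading towards them; none of the operations (V1), (V2), (V3), (BLS), (PAS) ever deletes a labelled leaf or a pendant subtree, so this deletion cannot be expressed as a sequence of suppression operations, and \Cref{lem:semiDirectedPreserved} gives you nothing here. (Your worry about ``splicing'' surviving arcs is also misplaced: $N_{\wedge A}$ is an induced subgraph, so nothing is spliced at the deletion stage; any splicing happens later, inside the suppression operation, where it is already covered.)

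Your second route --- restricting a rooting $D$ of $N$ and re-rooting --- is the paper's, but the part you summarize as ``after possibly reattaching and re-rooting'' is precisely where the work lies. The paper first observes that a non-root vertex of $D$ lies on a $\wedge$-path between leaves of $A$ in $D$ if and only if the corresponding vertex of $N$ does in $N$, and then splits into two cases according to whether the root of $D$ is retained. If it is, one shows $\semi{D_{\wedge A}} = N_{\wedge A}$ by checking, in both directions and using \Cref{lem:wedges}, that $(u,v)$ is an arc of one graph if and only if it is an arc of the other. If it is not, then $D_{\wedge A}$ and $N_{\wedge A}$ have the same vertex set, but the naive claim that $D_{\wedge A}$ is a rooting of $N_{\wedge A}$ can fail: passing to $\semi{D_{\wedge A}}$ may suppress the root of $D_{\wedge A}$, yielding a graph with one vertex fewer than $N_{\wedge A}$. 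The paper's fix is to first subdivide one of the arcs leaving the root of $D_{\wedge A}$, obtaining a directed network $D'$ with $\semi{D'} \cong N_{\wedge A}$. Some device of this kind is needed for your argument to close; as written, your proposal asserts the conclusion of this step rather than proving it.
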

\begin{proof}
\leon{We first show that~$N_{\wedge A}$ is a semi-directed network. Let~$D$ be a rooting of~$N$. Observe that a non-root vertex~$v$ of~$D$ is on a $\wedge$-path between vertices in~$A$ if and only if the corresponding vertex~$v'$ of~$N$ is on a $\wedge$-path between vertices in~$A$. Hence, $D_{\wedge A}$ contains all vertices of~$N_{\wedge A}$ and possibly one additional vertex; its root. We split the rest of the proof into two cases accordingly}.

\leon{The first case is that~$D_{\wedge A}$ contains the root of~$D$. In this case, $D_{\wedge A}$ contains all vertices of~$N_{\wedge A}$ and exactly one additional vertex; its root~$\rho$. We claim that $\semi{D_{\wedge A}}$ is equal to~$N_{\wedge A}$. To prove this, it remains to show that each edge/arc has the same orientation in $\semi{D_{\wedge A}}$ as in~$N_{\wedge A}$.}

To this end, suppose that~$(u,v)$ is an arc of $\semi{D_{\wedge A}}$. Then~$D$ contains either arc~$(u,v)$ or arcs~$(\rho,u),(\rho,v)$. In either case, since~$v$ is a reticulation in~$D$, $N$ contains an arc~$(u,v)$. Moreover, since~$v$ is in~$N_{\wedge A}$, it follows from Lemma~\ref{lem:wedges} that both incoming arcs of~$v$ in~$N$ are in~$N_{\wedge A}$. Hence, $(u,v)$ is an arc of~$N_{\wedge A}$.

 Now, suppose that~$(u,v)$ is an arc of~$N_{\wedge A}$ and hence of~$N$. Then~$D$ contains either arc~$(u,v)$ or arcs~$(\rho,u),(\rho,v)$. In either case,~$v$ is a reticulation in~$D$. Moreover, since~$v$ is in~$D_{\wedge A}$, it follows from Lemma~\ref{lem:wedges} that both incoming arcs of~$v$ in~$D$ are in~$D_{\wedge A}$. Hence,~$D_{\wedge A}$ contains either arc~$(u,v)$ or arcs~$(\rho,u),(\rho,v)$. In either case, $(u,v)$ is an arc of~$\semi{D_{\wedge A}}$.

\leon{We have now shown that, in the first case, $\semi{D_{\wedge A}}$ is equal to~$N_{\wedge A}$. Hence,~$N_{\wedge A}$ is a semi-directed network.}

\leon{Now consider the second case, i.e., that~$D_{\wedge A}$ does not contain the root of~$D$. In this case, $D_{\wedge A}$ contains exactly the same vertices as~$N_{\wedge A}$. Hence, it follows from Lemma~\ref{lem:wedges} that $(u,v)$ is a reticulation arc of $D_{\wedge A}$ if and only if~$(u,v)$ is a reticulation arc of~$N_{\wedge A}$. This does not imply that $D_{\wedge A}$ is a rooting of $N_{\wedge A}$ because the root may be suppressed when taking the underlying semi-directed network of~$D_{\wedge A}$. Therefore, consider the directed network~$D'$ obtained from $D_{\wedge A}$ by subdividing either of the arcs leaving the root. Then~$\semi{D'}$ is isomorphic to~$N_{\wedge A}$, proving that~$N_{\wedge A}$ is a semi-directed network.}

\leon{We conclude that~$N_{\wedge A}$ is semi-directed in both cases. By Lemma~\ref{lem:semiDirectedPreserved}, it now follows that~$N|_A$ is semi-directed. It is also easy to see that~$N|_A$ is phylogenetic, since otherwise a suppression operation would be applicable.}
\end{proof}

\section{Simple level-2 networks\label{sec:lev2}}


We aim to understand which networks are uniquely determined by their induced set of quarnets. In this section, we shall focus on 
understanding this for some networks that are structurally very simple. To make this more precise, we start \revnew{by} presenting a formal definition of a quarnet.

A \emph{quarnet} is a semi-directed phylogenetic network with exactly four leaves. The set $Q(N)$ of quarnets induced by a semi-directed phylogenetic network~$N$ is defined as
$$
Q(N) = \revnew{\left\{ N|_A \,:\, A \subseteq  X, |A|=4 \right\}}.
$$
The leaf set of a quarnet~$q$ is denoted $L(q)$.

\kh{Note that in case $N$ is an 
\revnew{unrooted phylogenetic tree}
then the quarnets of $N$ are generally called \emph{quartets}.
}

Let~$C$ be a \mj{subclass of the class of semi-directed phylogenetic networks with at least four leaves}. We say that~$C$ is \emph{encoded by quarnets} if for each~$N\in C$ and each semi-directed phylogenetic network~$N'$ \kh{on the same leaf set as $N$ for which} $Q(N)\simeq Q(N')$ holds, \kh{we have} that $N\cong N'$.
We say that~$C$ is \emph{weakly encoded by quarnets} if for all~$N,N'\in C$ \kh{on the same leaf sets and} with~$Q(N) \simeq Q(N')$ holding, we have $N\cong N'$. \kh{Clearly, if $C$ is encoded by quarnets then $C$ is also weakly encoded by quarnets and, as is well known, the class of unrooted phylogenetic trees is encoded by quartets (see e.g. \cite[Theorem 2.7]{dress2012basic}). To help keep terminology at bay, we also say that a member of $C$ is {\em encoded/weakly encoded by \revnew{quarnets}} if $C$ is encoded/weakly encoded by quarnets.
}

 We say that a network~$N$ is \emph{simple} if the mixed graph $N'$ obtained from~$N$ by deleting every leaf 
 is a blob. For a non-negative integer $k$ we call a network $N$ \emph{level-$k$} if each blob of $N$ contains at most~$k$ reticulations, and we call $N$ \emph{strict level-$k$} if, in addition, it contains a blob with exactly~$k$ reticulations. Note that a semi-directed level-0 phylogenetic network is an unrooted phylogenetic tree in the usual sense (see e.\,g.\,\cite{semple2003phylogenetics} for
 more details concerning such trees) and that, by definition, a simple network is strict level-$k$, for some~$k\geq 1$.
\kh{For example, the directed phylogenetic network~$N_d$ in Figure~\ref{fig:generator} is simple and so is the semi-directed phylogenetic network $N$ in the same figure. Furthermore, both networks are strict level-2.}

\begin{figure}[h]
\begin{center}
\includegraphics[]{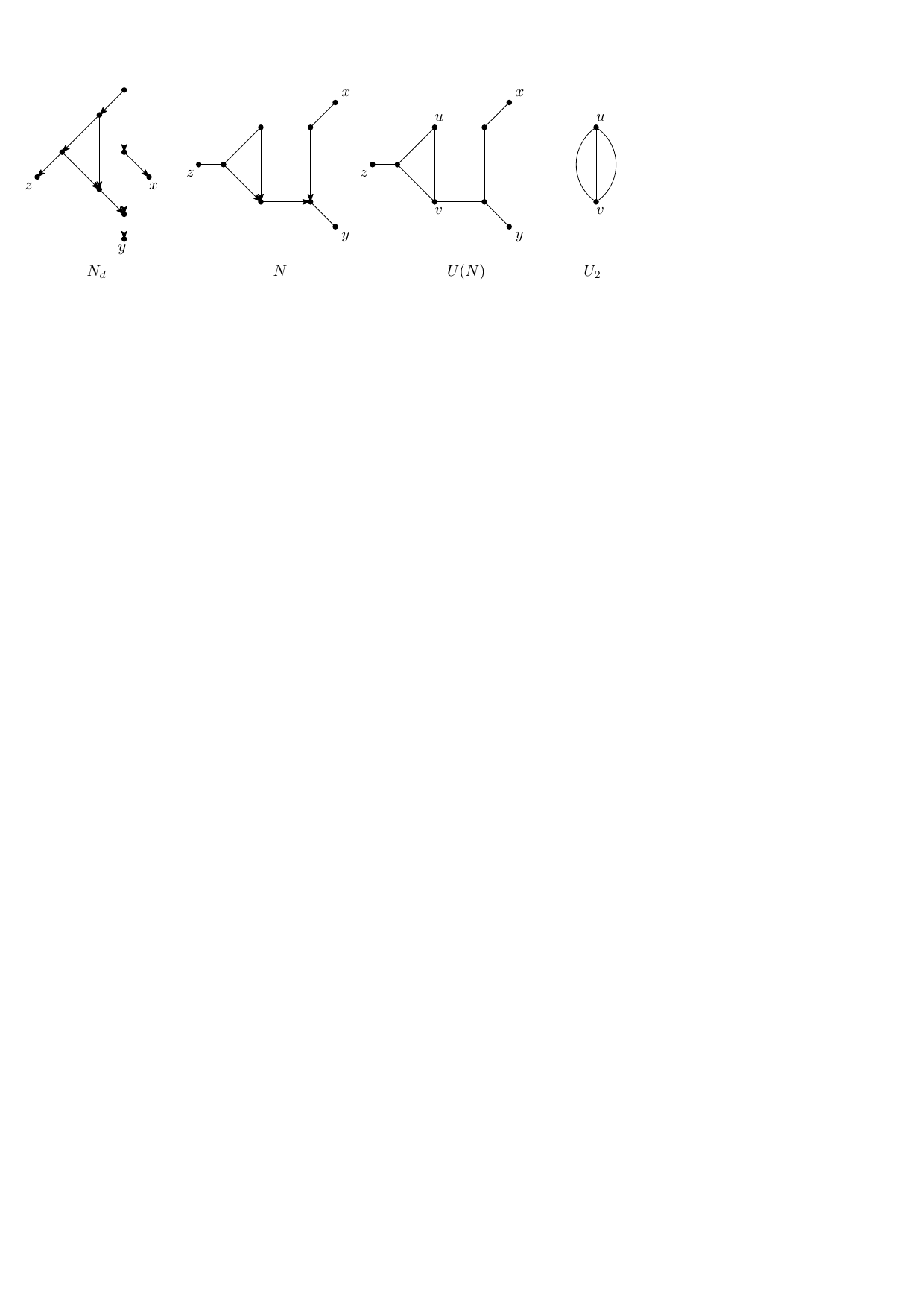}
\caption{\kh{A directed, simple, strict level-2 phylogenetic network on $X=\{x,y,z\}$, a semi-directed simple, strict level-2 phylogenetic network $N$ on $X$, \revnew{the underlying graph~$U(N)$ of~$N$ and the undirected level-$2$ generator~$U_2$}.}
\label{fig:generator}
}
\end{center}
\end{figure}

%




\kh{To be able to prove Lemma~\ref{lem:level-1}, we require further concepts. Suppose} $T$ is an unrooted phylogenetic tree on $X$ with $|X|\geq 4$. 
A {\em cherry} in $T$ is a pair of leaves of $T$
that are adjacent to the same vertex of $T$. 
If $T$ contains precisely
two cherries, we call it a {\em caterpillar tree}.




\begin{lemma}\label{lem:level-1}
    The class of semi-directed, simple, strict level-1 \leon{phylogenetic} networks \kh{with at least four leaves}
    is weakly encoded by quarnets.
\end{lemma}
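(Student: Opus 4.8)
The plan is to show that a semi-directed, simple, strict level-$1$ phylogenetic network $N$ is completely described by two pieces of combinatorial data, both of which can be read off canonically from $Q(N)$. First I would pin down the structure of such a network. Since $N$ is simple, deleting its leaves yields a single blob, and since $N$ is strict level-$1$ this blob is a cycle $C$ containing exactly one reticulation $r$, which is its unique sink (by \Cref{lem:sdcycle}). As $N$ is binary, every vertex of $C$ has degree $3$ and hence carries exactly one pendant leaf; writing $c_1=r,c_2,\dots,c_m$ for the cyclic order of $C$ (with $m=|X|\ge 4$), the two arcs of $C$ into $r$ are $(c_2,c_1)$ and $(c_m,c_1)$, all remaining edges of $C$ are undirected, and each $c_i$ carries the pendant leaf $\ell_i$. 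Thus $N$ is determined up to isomorphism by the cyclic ordering $\ell_1,\dots,\ell_m$ together with the identity of the reticulation leaf $\ell_1$ (the leaf adjacent to $r$), and it suffices to recover this data from $Q(N)$.

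Next I would classify the quarnets of $N$ according to whether they contain $\ell_1$. The key structural fact is that, for $A\in\binom{X}{4}$, the reticulation $r$ lies on a $\wedge$-path between two leaves of $A$ if and only if $\ell_1\in A$: since $\ell_1$ is a leaf, any path through $r$ that avoids $\ell_1$ must use both of its in-neighbours $c_2,c_m$ and is therefore not a $\wedge$-path, whereas when $\ell_1\in A$ a path of the form $\ell_1,c_1,c_2,\dots,c_i,\ell_i$ with peak $c_2$ is a $\wedge$-path through $r$. Consequently, if $\ell_1\notin A$ then $r$ is deleted in $N_{\wedge A}$, the cycle is cut into the caterpillar $T:=N-\{r,\ell_1\}$ on $X\setminus\{\ell_1\}$, and $N|_A=\supp{N_{\wedge A}}$ is the quartet tree $T|_A$, which contains no reticulation. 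If instead $\ell_1\in A$, then $r$ survives as a reticulation, because the two arcs into $r$ propagate, under vertex suppression (V2), to arcs into $r$ from the two surviving cycle-neighbours of $r$; hence $N|_A$ is itself a simple strict level-$1$ quarnet (a $4$-cycle) whose reticulation is adjacent to $\ell_1$. I would justify these claims using \Cref{lem:wedges} to control which vertices survive into $N_{\wedge A}$ and the suppression operations of \Cref{sec:restrictions}; note that the surviving $4$-cycle blob has four incident pendant edges and so is never collapsed by (BLS).

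Finally I would carry out the recovery. The \emph{reticulated} quarnets of $N$ are exactly those containing $\ell_1$, and in each of them the reticulation is adjacent to $\ell_1$; since $N$ has at least one such quarnet (any $A$ with $\ell_1\in A$), the reticulation leaf $\ell_1$ is recovered as the common leaf adjacent to the reticulation across all reticulated quarnets of $N$. The \emph{tree} quarnets of $N$ are precisely the quartets $T|_A$ of the single caterpillar $T$, which by the quartet encoding of unrooted phylogenetic trees (\cite[Theorem 2.7]{dress2012basic}) determine $T$, and hence the linear order of $\ell_2,\dots,\ell_m$ up to reversal; re-attaching $\ell_1$ as a reticulation across the two ends of this caterpillar then reconstructs $N$ uniquely up to isomorphism. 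Now, given a second network $N'$ in the class on the same leaf set with $Q(N)\simeq Q(N')$, the quarnet isomorphism preserves leaf labels and the tree-versus-reticulation dichotomy, so $N$ and $N'$ yield the same reticulation leaf and the same caterpillar; hence $N\cong N'$, proving that the class is weakly encoded by quarnets. (The boundary case $m=4$ is immediate, since then $Q(N)=\{N\}$.)

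The main obstacle I expect is the careful verification of the quarnet dichotomy in the second step, and in particular proving rigorously that $r$ is deleted exactly when $\ell_1$ is omitted and tracking how the two reticulation arcs propagate through the suppression operations, so as to confirm that the surviving quarnet is a genuine $4$-cycle with reticulation at $\ell_1$ rather than some degenerate mixed graph. This is where \Cref{lem:wedges} together with a careful bookkeeping of (V1)/(V2), (PAS) and (BLS) are essential.
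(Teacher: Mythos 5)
Your structural analysis is sound and closely parallels the paper's: the dichotomy between tree quarnets (those avoiding the reticulation leaf $\ell_1$) and cycle quarnets (those containing it, each a $4$-cycle with reticulation adjacent to $\ell_1$) is exactly the paper's central observation, and using it to recover $\ell_1$ and the caterpillar $T$ on $X\setminus\{\ell_1\}$ is also what the paper does. The genuine gap is in your final step: the caterpillar $T$ together with the identity of $\ell_1$ does \emph{not} determine $N$ up to isomorphism. In $N$ the two cycle-neighbours of the reticulation $r$ carry specific leaves (the paper's $b$ and $c$); after deleting $r$ and $\ell_1$ and suppressing, these leaves land inside the cherries $\{a,b\}$ and $\{c,d\}$ of $T$. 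The caterpillar is invariant under swapping the two leaves within a cherry, but the network is not. Concretely, for $X=\{x,a,b,c,d\}$, the two networks whose cyclic leaf orders around the cycle are $(x,b,a,d,c)$ and $(x,a,b,d,c)$, each with reticulation adjacent to $x$, have the same caterpillar (cherries $\{a,b\}$ and $\{c,d\}$) and the same reticulation leaf, yet they are not isomorphic. So your parenthetical claim that $T$ yields ``the linear order of $\ell_2,\dots,\ell_m$ up to reversal'' is already too strong --- $T$ determines the order only up to reversal \emph{and} up to the two cherry swaps --- and consequently ``re-attaching $\ell_1$ across the two ends'' does not reconstruct $N$ uniquely, so $N\cong N'$ does not follow from the data you extract.

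The missing information sits in the reticulated quarnets, which you use only to identify $\ell_1$: each such quarnet is a $4$-cycle whose cyclic leaf order is inherited from $N$, and this cyclic order is what resolves the cherry ambiguity. This is precisely the paper's extra step: after establishing the caterpillar, it examines the two quarnets on $\{a,b,x,c\}$ and $\{b,x,c,d\}$, which reveal which of $a,b$ (respectively $c,d$) is cycle-adjacent to the reticulation, pinning down the order $a,b,x,c,d$ and hence $N$. Your proof is repairable --- your own step-2 analysis already supplies the cyclic structure of each reticulated quarnet --- but as written the recovery argument is under-determined and the conclusion fails.
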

\begin{proof}
Suppose that $N$ \kh{is a semi-directed, simple, strict level-1 phylogenetic network with at least four leaves. Let $N'$ be a semi-directed, simple, strict level-1 phylogenetic network}
on the leaf set $X$ of $N$ with $Q(N)\simeq Q(N')$. We \kh{need} to
show that $N'$ is isomorphic to $N$. 
If $|X|=4$, this is trivial, so suppose $|X|\ge 5$.

\kh{We start with a central observation. Suppose $M$ is} a semi-directed, simple, strict level-1 network and $x$ is a leaf of $M$ that is adjacent to \kh{the unique reticulation $r$ of $M$. Then, by the 
definition of a quarnet induced by $M$,} every quarnet in $Q(M)$ is either a semi-directed, simple,  strict level-1 network such that $x$ is also adjacent to \kh{$r$}, or it is a \kh{phylogenetic tree} whose leaf set 
does not contain $x$.
In view of this observation, if $x \in X$ is the leaf in $N$ that is adjacent to
the \mj{unique} reticulation in $N$, then since $Q(N)\simeq Q(N')$ it follows that 
$x$ is adjacent to the  \mj{unique}  reticulation in $N'$. 

Now, let $P=X - \{x\}$. \kh{For every leaf $y\in P$ let $v_y$ denote the vertex in $N$ adjacent with $y$.} Suppose
that $a,b,c,d \in P$ 
are such that when \kh{traversing} the cycle in $N$ 
we have the path $v_a,v_b,r=v_x,v_c,v_d$.
Consider the set 
$$
  Q=  \{ N|_A \,:\, A \in {P \choose 4} \}.
$$
By the above observation, it is straight-forward to see that the caterpillar tree $C$ on $P$ with 
cherries $\{a,b\}$ and $\{c,d\}$ \kh{is encoded by $Q$. Since $Q(N) \simeq Q(N')$, it follows that} $N'$
must induce a caterpillar tree on \kh{$P$ that is isomorphic with $C$}.
By considering the two quarnets in $Q(N)$
on the sets $\{a,b,x,c\}$ and $\{b,x,c,d\}$,
it follows that the \kh{order of the leaves $a,b,x,c,d$ 
 in $N$ induced by the path $v_a,v_b,r,v_c,v_d$ must be the same as in $N'$}. Hence, $N'$ is isomorphic to $N$.
\end{proof}

\kh{To be able to study weak encodings of  level-2 networks, 
we refer to the graph obtained from a phylogenetic network~$N$ by removing all directions as the {\em underlying graph} of $N$ and denote it by $U(N)$, \revnew{see Figure~\ref{fig:generator}}. Note that $U(N)$ is indeed a graph \revnew{(and not a multi-graph)} because $N$ is a phylogenetic network and so cannot contain parallel arcs.
Note that so-called undirected phylogenetic networks
are precisely the undirected graphs $G$ for which there exists a semi-directed network $N$ such that $G$ and $U(N)$ are isomorphic and the leaf sets of $G$ and $N$ coincide. Calling a multi-graph with two vertices and three parallel edges joining these vertices an {\em undirected level-2 generator} and
canonically extending the notion of a simple, strict level-2 network to undirected phylogenetic networks then, by \cite[Fig. 4]{van2018leaf}, every undirected, simple, strict level-2 phylogenetic network on $X$ can be obtained from an undirected level-2 generator
by subdividing the 
\revnew{edges}
of the generator to obtain three paths $P_1,P_2,P_3$ with end vertices $u$ and $v$ that intersect pairwise only at $u$ and $v$, such that (i) at least two of these paths have length at least 2, and (ii) for $i=1,2,3$, every vertex $w \in V(P_i)\setminus\{u,v\}$ is adjacent to a leaf in $X$. }

\kh{Motivated by the above, we call for all $k\geq 2$ the \leon{mixed graph}
that can be obtained from a semi-directed, simple, strict level-$k$ phylogenetic network $N$ by deleting all leaves and applying vertex suppression operations~(V1) and (V2) a {\em (semi-directed) level-$k$ generator for $N$} and denote it by $gen(N)$. More generally, we call a mixed graph $G$ a {\em level-$k$ generator} if there exists a semi-directed, simple, strict level-$k$, phylogenetic network $N$ such that $G$ and $gen(N)$ are isomorphic. See Figure~\ref{fig:level2gens-labelled} for two semi-directed level-2 generators. To see that these are in fact all semi-directed level-2 generators 
(Lemma~\ref{lem:level-2-generators}) we use that every semi-directed level-2 generator can be obtained from an undirected, simple, strict level-2 phylogenetic network.
} 
%
%
 %
\begin{figure}[h]
    \centering
    \includegraphics{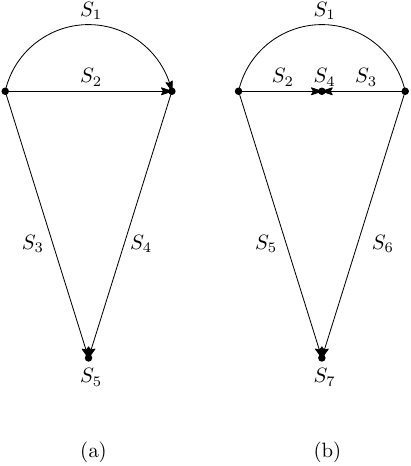}
    \caption{The semi-directed level-2 generators with sides labelled.} 
    \label{fig:level2gens-labelled}
\end{figure}

\begin{lemma} \label{lem:level-2-generators}
    The semi-directed level-2 generators are as pictured in Figure~\ref{fig:level2gens-labelled}.
\end{lemma}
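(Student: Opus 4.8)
The plan is to pin down the common undirected skeleton shared by every semi-directed level-2 generator, and then to enumerate the possible positions and orientations of its two reticulations on that skeleton. First I would use the hypothesis that $G=gen(N)$ for some semi-directed, simple, strict level-2 phylogenetic network $N$, and pass to $U(N)$. Since $U(N)$ is an undirected, simple, strict level-2 phylogenetic network, by the description obtained from \cite{van2018leaf} it arises from the (unique) undirected level-2 generator---the multigraph on two vertices $u,v$ with three parallel edges---by subdividing its three sides and attaching leaves. Because $gen(\cdot)$ only applies the vertex-suppression rules (V1) and (V2), which cannot suppress a degree-2 reticulation, the underlying graph of $G$ is exactly this theta graph with some sides subdivided, where every subdivision vertex is a reticulation of $G$. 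Hence $G$ consists of two branch vertices $u,v$ joined by three internally disjoint sides, and each of its two reticulations is either a branch vertex (of degree 3) or an internal degree-2 vertex of a side.

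Next I would argue that the two reticulations cannot both be branch vertices. If $u$ and $v$ were both reticulations then $G$ would have no further vertices, so its underlying graph would be the plain theta graph; but then each of $u$ and $v$ would need indegree~2, while the three sides can contribute at most three arc-heads in total, a contradiction. I would similarly rule out two degree-2 reticulations lying on a common side: after suppression they would be adjacent, and the edge joining them would have to be oriented into both of them, which is impossible. This leaves exactly two cases, according to whether zero or exactly one of the reticulations is a branch vertex.

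In the first case both reticulations are internal degree-2 vertices lying on distinct sides, so each is adjacent to both $u$ and $v$ and has both incident edges oriented towards it, while the remaining side is an undirected edge $\{u,v\}$. Every edge/arc is then forced, and \Cref{lem:sdcycle} (equivalently, exhibiting the rooting obtained by subdividing $\{u,v\}$) confirms this is a genuine semi-directed generator; this yields the first generator in \Cref{fig:level2gens-labelled}. In the second case one branch vertex, say $v$, is the degree-3 reticulation and the other reticulation $r$ is a degree-2 vertex on a single side, hence adjacent to both $u$ and $v$. Then the arc $(v,r)$ is the unique outgoing arc of $v$, so the two remaining sides must both be oriented from $u$ into $v$, producing a pair of parallel arcs $u\to v$ (permitted, since $gen(\cdot)$ applies neither parallel-arc nor blob suppression). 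Again every edge/arc is forced, and this configuration is realised by an actual network (a rooting is obtained by placing the root so that $u$ acquires indegree~$1$), giving the second generator in \Cref{fig:level2gens-labelled}. Since the two cases are exhaustive and each determines a single generator up to isomorphism, these are all the semi-directed level-2 generators.

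The main obstacle I expect is the first step: making precise that the underlying graph of $G$ is the theta graph with subdivisions occurring \emph{only} at reticulations. This requires care, because $gen(\cdot)$ applies only (V1) and (V2)---so degree-2 reticulations survive and parallel arcs may be created---which means the passage from $U(N)$ to the underlying graph of $G$ is not simply ``take the undirected generator''. Once the skeleton is fixed, the remaining case analysis is short, its crux being the observation that the parallel-arc configuration is legitimate for a generator even though it would be forbidden in a phylogenetic network.
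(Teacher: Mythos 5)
Your overall route is the same as the paper's: invoke the description of undirected, simple, strict level-2 phylogenetic networks (the subdivided theta graph) to fix the skeleton, do a case analysis on where the two reticulations sit, and verify that the two surviving configurations are realizable. The two realizable cases are handled correctly, modulo the minor point that a network witnessing the parallel-arc generator must carry a leaf on a subdivision of one of the parallel arcs (as the paper does by subdividing $S_2$), since a phylogenetic network itself may not contain parallel arcs. The genuine problem is in your exclusion arguments, and it sits exactly where you predicted it would.

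The claim that two degree-2 reticulations on a common side would ``after suppression \dots be adjacent'' is false. If $p$ and $q$ lie in the interior of the same side with vertices between them, then after deleting leaves the intermediate vertices form a chain whose two end edges are arcs into $p$ and into $q$; applying (V1)/(V2) collapses this chain not to nothing but to a single degree-2 vertex with two \emph{outgoing} arcs, to which none of the suppression rules applies. So in $gen(N)$ the two reticulations are not adjacent, no local orientation conflict arises, and this configuration can only be excluded by a global rootability argument. The same missing ingredient undermines your step 1 and your other exclusion: ``subdivisions occur only at reticulations'' is precisely the assertion that no such outgoing-outgoing vertex survives, and your counting argument against $\{p,q\}=\{u,v\}$ presupposes step 1 (it assumes the skeleton is the plain theta graph, with no subdivided sides). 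You flag step 1 as the main obstacle but never discharge it, while your case analysis tacitly relies on it. The missing idea is already in the paper: by Lemma~\ref{lem:sdcycle}, every cycle of a semi-directed phylogenetic network has a sink. If both reticulations lie on one side $P_1$, the cycle formed by the other two sides contains no reticulation at all, hence no sink --- contradiction; a similar sink count rules out $\{p,q\}=\{u,v\}$ even when sides are subdivided, and rules out surviving outgoing-outgoing vertices in general (alternatively, one can argue that any rooting would need an indegree-0 vertex other than the added root, or a root from which $u$ and $v$ are unreachable). To be fair, the paper also treats these exclusions tersely --- it simply asserts that the edges of $U(N)$ cannot be consistently oriented --- but it does not rest on a false structural claim about $gen(N)$.
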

\begin{proof}
Suppose that $G$ is one of the mixed graphs in Figure~\ref{fig:level2gens-labelled}. We need to show that there exists a semi-directed, simple, strict level-2, phylogenetic network $N$ such that $gen(N)$ and $G$ are isomorphic. We can obtain $N$ as follows. In case (a), subdivide~$S_2$ into an edge, a new vertex~$v$, and an arc and then add a leaf adjacent to~$v$. In either case, add a leaf adjacent to each outdegree-0 reticulation. To see that~$N$ is semi-directed, note that you can obtain a directed network by subdividing~$S_1$ by the root and directing all edges away from the root. Hence,~$N$ is a semi-directed, simple, strict level-2, phylogenetic network $N$ such that $gen(N)$ and $G$ are isomorphic. It follows that the mixed graphs in Figure~\ref{fig:level2gens-labelled} are semi-directed level-2 generators.

To show that these are all semi-directed level-2 generators, consider a semi-directed, simple, strict level-2 phylogenetic network~$N$. Observe that $U(N)$ is an undirected, simple, strict level-2 phylogenetic network. Let $u$ and $v$ denote the vertices of the undirected level-2 generator. Let $P_1,P_2,P_3$ denote the three paths in $U(N)$ from $u$ to $v$.

Observe that~$N$ has, by definition, precisely two reticulations. Call these reticulations~$p$ and~$q$.
If $\{p,q\}\cap \{u,v\}=\emptyset $, then there must exist distinct $i,j\in \{1,2,3\}$ such that
$p$ is a vertex on $P_i$ and $q$ is a vertex on $P_j$ as otherwise it would not be possible to orient the edges in $U(N)$ so as to obtain a semi-directed, simple, strict level-2 phylogenetic network with reticulations~$p$ and~$q$. Similarly, it is not possible that $\{p,q\}=\{u,v\}$.
 Hence, we must either have that $\{p,q\}=\{u,w\}$ or $\{p,q\}=\{v,w\}$, with 
    $w \notin\{u,v\}$ a vertex on $P_i$ some $1 \le i \le 3$, 
    or that $\{p,q\} = \{w,w'\}$ with $\{w,w'\}\cap \{u,v\}=\emptyset$ and $w$ a vertex on $P_i$ and $w'$ a vertex on $P_j$, where $i ,j\in\{1,2,3\}$. 
In the first case, $gen(N)$ is the mixed graph in Figure~\ref{fig:level2gens-labelled}(a). In the second case, $gen(N)$ is the mixed graph in Figure~\ref{fig:level2gens-labelled}(b).
\end{proof}

\kh{As we shall see, the next result (Proposition~\ref{prop:sides})
is central for showing that the class of semi-directed simple, strict level-2 phylogenetic networks with at least four leaves is weakly encoded by quarnets (Theorem~\ref{thm:lev2}). To be able to state and prove  it, we again require further definitions.}

\kh{\revnew{The following definitions are illustrated in Figure~\ref{fig:genex}.} Suppose that $N$ is a semi-directed, simple, strict level-2 phylogenetic network. Then we call} the arcs, edges and the degree-2 vertices in $gen(N)$ \leon{(which have indegree~$2$ outdegree~$0$) the {\em sides} of $gen(N)$.} 
\revnew{For example, the sides of~$gen(N)$ in 
the example are the arcs~$a_1,a_2,a_3,a_4$ and the vertex~$v_6$.} \kh{If a side $S$ of $gen(N)$ is an arc/edge, then we denote by $P(S)$ the 
\revnew{semi-directed}
path in $N$ such that when deleting all leaves of $N$ adjacent with a vertex of $P(S)$ and suppressing all resulting vertices of $P(S)$ with overall degree two, we obtain $S$. \revnew{In the example, we have $P(a_3)=(v_1,v_2,v_4,v_6)$, $P(a_1)=(v_1,v_3,v_5)$ and $P(a_2)=(v_1,v_5)$ (where~$a_1,a_2$ could be swapped).}
Note that $P(S)$ could be an arc/edge in $N$ \revnew{(such as $P(a_4)=(v_5,v_6)$ 
}.
In case $S$ is a vertex, then we also refer to $S$ as $P(S)$ \revnew{(e.g. $P(v_6)=v_6$
}.
We say that a leaf $x$ of $N$ is {\em hanging off $S$ in $N$} if either $S$ is a vertex of $gen(N)$ with overall degree two and $N$ contains the edge $\{S,x\}$ or $S$ is an arc/edge in $gen(N)$ and there exists a vertex $v$ on $P(S)$ such that $\{v,x\}$ is an edge of $N$. \revnew{In the example,~$z$ is hanging off~$a_1$ and~$w$ is hanging off~$v_6$.} We denote the set of leaves of $N$ hanging off $S$ by $P_S$. \revnew{In the example, $P_{a_3}=\{x,y\}$ and~$P_{a_4}=\emptyset$.}
Finally,} 
we say that two semi-directed,  simple, strict level-2 phylogenetic networks $N$, $N'$ 
are {\em isomorphic up to sides} if there is some \kh{(mixed graph)} isomorphism $\phi$ 
between \kh{$gen(N)$ and $gen(N')$} so that for any side $S$ in \kh{$gen(N)$}, the \kh{leaf sets $P_S$ and $P_{\phi(S)}$
are equal. \revnew{In the example,~$N$ and~$N'$ are isomorphic up to sides.}}

\begin{figure}[h]
    \centering
    \includegraphics{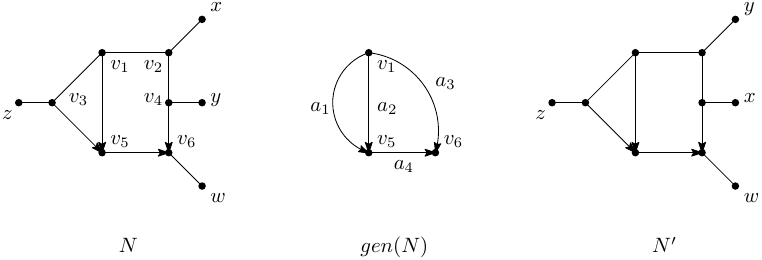}
    \caption{\revnew{Two semi-directed, simple, strict level-$2$ phylogenetic networks~$N$ and~$N'$ that are isomorphic up to sides, together with their level-$2$ generator $gen(N)\cong gen(N')$.}
    \label{fig:genex}}
\end{figure}

\begin{proposition}\label{prop:sides}
	Suppose that $N$ \revnew{and}~$N'$ are semi-directed, simple, strict level-2 \leon{phylogenetic} networks with at least four leaves 
	that are isomorphic up to sides. If \leon{$Q(N)\simeq Q(N')$, then $N\cong N'$}.
\end{proposition}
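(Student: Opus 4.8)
The plan is to first convert the hypothesis $Q(N)\simeq Q(N')$ into a usable pointwise statement, and then reduce the whole problem to comparing the orders of leaves along the sides. Since any isomorphism of quarnets fixes every leaf label, two quarnets can be isomorphic only when they have the same leaf set; as $L(N|_A)=A$ for each $A\in{X\choose 4}$, the sets $Q(N)$ and $Q(N')$ each contain exactly one element with leaf set $A$, namely $N|_A$ and $N'|_A$. Hence any witnessing bijection must send $N|_A$ to $N'|_A$, so $Q(N)\simeq Q(N')$ is in fact equivalent to the condition that $N|_A\cong N'|_A$ for every $A\in{X\choose 4}$. On the other hand, because $N$ and $N'$ are \emph{isomorphic up to sides}, there is an isomorphism $\phi$ between $gen(N)$ and $gen(N')$ with $P_S=P_{\phi(S)}$ for every side $S$. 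Thus $N$ and $N'$ share the same generator and the same assignment of leaves to sides, and the \emph{only} remaining freedom is the linear order in which the leaves of $P_S$ are arranged along the path $P(S)$ of each arc/edge side $S$ (vertex sides carry a single leaf and need no ordering). So the goal reduces to showing that, for every side $S$, the order of $P_S$ along $P(S)$ in $N$ agrees with that of $P_{\phi(S)}$ along $P(\phi(S))$ in $N'$, relative to the orientation fixed by $\phi$.

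To recover these orders I would construct, for each relevant pair of leaves, an order-detecting quarnet. Fix a side $S$ with $|P_S|\ge 2$ and two leaves $a,b\in P_S$; since $N$ is binary, $a$ and $b$ hang off distinct vertices $v_a\neq v_b$ of $P(S)$. I would then pick two \emph{anchor} leaves $c,d$ off $S$ — taking the leaves adjacent to the reticulations whenever possible, and otherwise using leaves guaranteed by the structure of the generators in \Cref{lem:level-2-generators} together with $|X|\ge 4$. The claim to verify is that the leaf-labelled isomorphism type of $N|_{\{a,b,c,d\}}$ determines the order of $a$ and $b$ along $P(S)$: in the restriction, $a$ and $b$ remain attached at two distinct vertices of the induced path coming from $S$ and are not suppressed, so in the quarnet they either lie on a common induced path of a caterpillar or at distinct points of an induced cycle, and interchanging $a$ and $b$ alters either the induced quartet split or the cyclic order. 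As quarnet isomorphisms fix leaves, the two orders yield non-isomorphic quarnets, so $N|_{\{a,b,c,d\}}$ encodes the order of $a,b$.

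The transfer step is then immediate. Because $P_S=P_{\phi(S)}$ and the anchors were chosen on sides that $\phi$ identifies, in $N'$ the four leaves $a,b,c,d$ hang off the corresponding sides in the same configuration, so $N'|_{\{a,b,c,d\}}$ encodes the order of $a,b$ in $N'$ in exactly the same manner. The pointwise equality $N|_{\{a,b,c,d\}}\cong N'|_{\{a,b,c,d\}}$ from the first paragraph then forces $a$ and $b$ to appear in the same order in $N'$. Ranging over all pairs on all sides recovers the full linear order of $P_S$ along each side, consistently with the orientation induced by $\phi$ (an edge side whose endpoints are exchanged by a generator automorphism only introduces an order reversal already realised by that automorphism). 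Combining the matching generators, the matching leaf-to-side assignment, and the now-matching orders, $\phi$ extends to an isomorphism witnessing $N\cong N'$.

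The main obstacle I anticipate is precisely the order-detection claim of the second paragraph, which requires a genuine case analysis over the two generators of \Cref{lem:level-2-generators} and over whether $S$ is an arc or an edge: in each case one must exhibit anchors $c,d$ for which the restriction is truly order-sensitive, and check that the suppression operations used to form $N|_{\{a,b,c,d\}}$ do not collapse the relative positions of $a$ and $b$. This is most delicate when few leaves lie outside $S$, where one may be forced to use a reticulation leaf, or a second leaf on $S$, as an anchor, or to exploit the direction of an arc side to make do with a single off-side anchor; confirming that a distinguishing quarnet always exists in these extremal configurations is the technical heart of the argument.
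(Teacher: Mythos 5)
Your overall architecture mirrors the paper's: fix the generator isomorphism $\phi$ supplied by the ``isomorphic up to sides'' hypothesis, then recover the order of the leaves along each side from quarnets anchored at reticulation leaves. The pointwise reformulation of $Q(N)\simeq Q(N')$ in your first paragraph is correct, and for the \emph{directed} sides (the arcs of either generator in Lemma~\ref{lem:level-2-generators}) your pairwise order-detection claim is sound: the arcs into a reticulation survive restriction, the direction they induce on the side is intrinsic, and one can check that no leaf-fixing isomorphism can exchange two attachment points on such a side, so a single quarnet on $\{a,b,c,d\}$ does determine the order of $a,b$ there.

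However, the order-detection claim fails for the one undirected side, namely the edge $S_1$ joining the two degree-$3$ vertices $u,v$ of generator (b). Take $a,b\in P_{S_1}$ and anchors $x\in P_{S_4}$, $y\in P_{S_7}$ (the reticulation leaves --- the natural and essentially only choice). The map that exchanges $u$ and $v$, reverses the path $P(S_1)$, and swaps the two in-arcs of each reticulation is an isomorphism between the quarnet in which the order is $(a,b)$ and the quarnet in which it is $(b,a)$, and it fixes all four leaf labels, since $x$ and $y$ sit below the reticulations, which are themselves fixed. So the two orders yield \emph{isomorphic} quarnets, and no choice of anchors repairs this: quarnets can only ever determine the ordering on $S_1$ up to global reversal. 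Your parenthetical remark --- that such a reversal ``is already realised by that automorphism'' --- does not close the gap, because the endpoint-swapping automorphism of generator (b) acts as $(S_2S_3)(S_5S_6)$ on the remaining sides; composing $\phi$ with it destroys the property $P_S=P_{\phi(S)}$ whenever some $P_{S_i}$ with $i\in\{2,3,5,6\}$ is non-empty. This is exactly the configuration the paper treats separately: its Claim is stated only ``up to reversing the whole ordering'' for $S_1$, and the final paragraph of its proof pins down the orientation by a cross-side quarnet $N|_{\{a,b,x,y\}}$ with $a\in P_{S_1}$, $b\in P_{S_2}$ and both reticulation leaves, which ties the ordering on $S_1$ to the leaves on the other sides (together with a case analysis on which $P_{S_i}$ are empty, where the reversal is harmless). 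Your proposal is missing this cross-side anchoring step, and without it the argument does not go through.
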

\begin{proof}
\kh{Suppose that $Q(N)\simeq Q(N')$ 
	and 
	that $\phi$ is an isomorphism from $gen(N)$ to $gen(N')$. By Lemma~\ref{lem:level-2-generators} 
 it follows that $gen(N)$ and $gen(N')$ are either both as depicted in Figure~\ref{fig:level2gens-labelled}(a) or they are both as depicted in Figure~\ref{fig:level2gens-labelled}(b).
}

\kh{	{\bf Claim:} Suppose $S$ is a side of $gen(N)$ for which $P_S\not=\emptyset$. If $S$ is an arc then, irrespective of Case~(a)
	or (b) holding for $N$ in Figure~\ref{fig:level2gens-labelled}, the order in which the elements 
	in $P_S$ hang off $\phi(S)$  in $N'$ relative to the direction of $\phi(S)$
	is the same as the order in which they hang off $S$ in $N$ relative to the direction of  $S$. If $S$ is the unique edge
	in Figure~\ref{fig:level2gens-labelled}(b), then the order in which the elements 
	in $P_S$ hang off $\phi(S)$ in $N'$ 
	is the same as the order in which they hang off $S$ in $N$, up to reversing the whole ordering.
}

\kh{	{\em Proof of Claim:} 
	Suppose $S$ is an arc in $gen(N)$ such that $P:=P_S\not=\emptyset$. If $|P|=1$, then the  claim trivially holds. So assume that $|P|\geq 2$. We distinguish between the cases that $|P|=2$, that $|P|=3$, and that $|P|\geq 4$. 
}
 
\kh{	If $|P|=2$, then we consider a 4-subset $A$ containing $P$
	which is defined as follows. If Figure~\ref{fig:level2gens-labelled}(a) holds, then $N$ has a unique reticulation. Let $x \in X$ be the 
	leaf below that reticulation and let $y \in X-(\leo{P} \cup \{x\})$. If Figure~\ref{fig:level2gens-labelled}(b) then $N$ has two reticulation\revnew{s}. Let $x,y \in X$ be
	the leaves below the two reticulations, respectively. In either case, 
	let $A=P \cup \{x,y\}$. Then  $N|_A\in Q(N)\simeq Q(N')$. That the claim holds is
	straight-forward to see. 
}

\kh{
If $|P|=3$, then we consider two 4-subsets $A,B$ of $X$
	which are defined as follows. Suppose first that $gen(N)$ is as in Figure~\ref{fig:level2gens-labelled}(a) and that $x $ is the 
	leaf of $N$ below the unique reticulation $r$ of $N$. Then the size of $A:=P\cup\{x\}$ is four since  $|P|=3$. Moreover, 
	if $P$ equals $P_{S_3}$ or $P_{S_4}$, then we choose a leaf $a $  in $P_{S_1}$ or $P_{S_2}$ which must exist as $N$ is strict level-2. To obtain $B$,  we choose leaves $b,c$
	in $P$ such that the unique vertex in $N$ adjacent with $b$ is adjacent with $r$
	as well as with the unique vertex in $N$ adjacent with $c$. Finally, we  put $B=\{a,b,c,x\}$. 
}

\kh{	Assume for the remainder of this case that $gen(N)$ is as in Figure~\ref{fig:level2gens-labelled}(b). Let $x,y $ be the leaves of $N$ such that $x$ is below one reticulation of $N$ and $y$ is below the other. Then we put $A=P \cup \{x\}$ and $B= P \cup \{y\}$ which both clearly have size four since $|P|=3$.
}

\kh{	In either of the above two cases, $N|_A, N|_B \in Q(N)\simeq Q(N')$ follows. That the claim holds is a straight-forward consequence. 
}
 
\kh{	If $|P|\ge 4$, then consider the set $R$ of 
	{\em quartets} obtained by restricting $N$ to all possible 4-subsets of $P$. 
	Then $R$ must be the set of quartets induced by some caterpillar tree $T$ with
	leaf set $P$. Since $Q(N)\simeq Q(N')$ it follows that the leaves in $P$ are hanging off $S$ in $N$ in the same ordering as the leaves of $P$ are hanging off $\phi(S)$ in $N'$, up to 
	reversal of the two leaves in each of the cherries in $T$ \leo{and up to reversing the whole ordering}. 
	The claim now follows by considering, in addition 
	to $R$, the set of all quarnets with leaf set $\{a,b,c,x\}$,
	where  $a$ and $b$ form a cherry in $T$, $c \in P- \{a,b\}$ and $x$ is a
	leaf below a reticulation in $N$. This completes the proof of the Claim.
}
 
\kh{	If $gen(N)$ is as in Figure~\ref{fig:level2gens-labelled}(a), then  the lemma follows by applying  the Claim to each side $S$ of $gen (N)$ for which $P_S\not=\emptyset$ holds.
	If $gen(N)$ is as in Figure~\ref{fig:level2gens-labelled}(b), then the lemma follows again by
	applying the Claim to each side $S$  of $gen(N)$ for which $P_S\not=\emptyset$ holds  
	in case  $P_{S_1}=\emptyset$, that is, no leaf of $N$ is hanging off $S_1$ in $N$. Furthermore,
	the lemma follows by applying the Claim to side $S_1$ of $gen(N)$
	if $P_{S_i}=\emptyset$ holds for all $i\in\{2,3,5, 6\}$, that is, other than the leaves of $N$ hanging off the two reticulations of $N$, every leaf of $N$ is hanging off $S_1$ in $N$. 
 }
	
\kh{Assume for the remainder that $P_{S_1}\not=\emptyset $ and that there exists some $i\in\{2,3,5,6\}$ such that $P_{S_i}\not=\emptyset$. To see that the order in which the elements 
	in $P_{S_1}$ are hanging off $\phi(S_1)$ in $N'$ 
	is the same as the order in which they are hanging off $S_1$ in $N$, we may assume without loss of generality that $i=2$. Choose leaves  $a\in P_{S_1}$ and 
	$b\in P_{S_2}$ such that there exists a vertex $w$ in $N$ such that the shortest path from $a$ to $b$ in $N$ contains $w$.  Since $N$ is a semi-directed, simple, strict level-2 network  there must exist a leaf $x$ of $N$ that is adjacent with one reticulation of $N$ and a leaf $y$ of $N$ that is adjacent with the other. Then $N|_{\{a,b,x,y\}}$
	is a quarnet in $Q(N)\simeq Q(N')$. Thus, the shortest path from $a$ to $b$ in $N'$ contains $\phi(w)$. Since, by the Claim,  the order in which the elements 
	in $P_{S_1}$ are hanging off $\phi(S_1)$ in $N'$ 
	is the same as the order in which they are hanging off $S_1$ in $N$, up to reversing the whole ordering, it follows that $N\cong N'$. This completes the proof of the lemma.
 }
\end{proof}

\begin{lemma}\label{prop:level-2}
The class of semi-directed, simple, strict level-2 \leon{phylogenetic} networks \leo{with at least four leaves} is weakly encoded by quarnets.
\end{lemma}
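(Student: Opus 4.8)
The strategy is to reduce the statement to the two structural results already in hand. By Lemma~\ref{lem:level-2-generators}, each of $N$ and $N'$ is built over one of the two generators in Figure~\ref{fig:level2gens-labelled}, and by Proposition~\ref{prop:sides} it suffices to prove that $N$ and $N'$ are \emph{isomorphic up to sides}: as soon as we have a generator isomorphism matching the set of leaves hanging off each side, Proposition~\ref{prop:sides}, which already settles the order of the leaves within each side, yields $N\cong N'$. So the entire proof reduces to showing that $Q(N)\simeq Q(N')$ forces $N$ and $N'$ to be isomorphic up to sides. The case $|X|=4$ is immediate, since then $N$ is its own unique quarnet and $Q(N)\simeq Q(N')$ gives $N\cong N'$ outright; so I assume $|X|\ge 5$.

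First I would show that $N$ and $N'$ are built over the same generator. Since each of $gen(N),gen(N')$ is one of the two mixed graphs in Figure~\ref{fig:level2gens-labelled}, it is enough to exhibit a quarnet-detectable feature distinguishing case~(a) from case~(b). I would do this by restricting to four suitably chosen leaves --- in particular the leaves sitting directly below the reticulations, together with one or two further leaves --- and noting that the resulting quarnet is again a simple level-2 network whose own generator (identified by applying Lemma~\ref{lem:level-2-generators} to the quarnet) reflects the generator type of the ambient network. As $Q(N)\simeq Q(N')$, the same feature is forced on $N'$, so $gen(N)\cong gen(N')$ and we may fix an abstract generator isomorphism to refine.

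The heart of the argument is to recover, from the quarnets alone, the side of the generator off which each leaf of $X$ hangs, and to arrange these assignments into a single generator isomorphism $\phi\colon gen(N)\to gen(N')$ with $P_S=P_{\phi(S)}$ for every side $S$. The key point is that the relation ``leaves $y$ and $z$ hang off the same side'' is quarnet-detectable: restricting $N$ to $\{y,z\}$ together with two reference leaves taken below the reticulations yields a quarnet whose shape records whether $y$ and $z$ lie on a common path of the generator or on distinct sides, and likewise the coarse relative position of two sides --- which side lies between two others, or on which ``side'' of a reticulation a leaf sits --- is visible in the appropriate $4$-leaf restrictions. Performing this analysis for each generator type, and checking consistency across overlapping $4$-subsets, recovers the partition of $X$ into sides together with its placement on the generator; this is exactly the data needed to declare $N$ and $N'$ isomorphic up to sides, whereupon Proposition~\ref{prop:sides} completes the proof.

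The step I expect to be the main obstacle is this last recovery, for two reasons. First, the generators carry nontrivial automorphisms --- the interchange of parallel sides in~(a) and the swap of the two reticulation-bearing paths in~(b) --- so the side assignment read off from quarnets is only pinned down up to these symmetries, and I must argue that (at least one of) the admissible isomorphisms $\phi$ simultaneously matches \emph{all} side leaf sets, breaking any residual symmetry with a few extra quarnets if needed. Second, there is genuine bookkeeping: one must treat separately the arc sides, the unique edge in case~(b), and the degree-2 reticulation vertices off which a leaf hangs, and must verify that the ``same side'' relation really separates leaves lying on adjacent sides that meet at a common generator vertex, rather than conflating them. Once the side partition and its placement on the generator are shown to be quarnet-determined, the networks are isomorphic up to sides and the result follows from Proposition~\ref{prop:sides}.
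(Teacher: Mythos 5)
Your proposal follows essentially the same route as the paper: reduce to showing $N$ and $N'$ are isomorphic up to sides so that Proposition~\ref{prop:sides} finishes the job, identify the common generator via a strict level-2 quarnet and Lemma~\ref{lem:level-2-generators}, and then use quarnets containing the leaves below the reticulations as reference points to assign every other leaf to a side, choosing the generator isomorphism $\phi$ so as to break the generator's automorphisms. The paper carries out exactly the symmetry-breaking and case analysis (generator (a) with its two isomorphisms, generator (b) with its four) that you flag as the main obstacle, by fixing $\phi$ on one witness leaf and verifying all remaining leaves are consistent with that choice.
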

\begin{proof}
\kh{Suppose that $N$ is a semi-directed, simple, strict level-2 phylogenetic network with at least four leaves.
Let $X$ be the leaf set of $N$ and let $N'$ be a semi-directed, simple, strict level-2 phylogenetic network on $X$ such that}
    $Q(N) \simeq Q(N')$. We \kh{need} to show that $N$ and $N'$ \kh{are} isomorphic.
    By Lemma~\ref{prop:sides}, it suffices to show that 
    $N$ and $N'$ are isomorphic up to sides. 
    
    First note that $N$ and $N'$ must have isomorphic generators.
    Indeed, there must be some 4-subset $A$ of $X$ so that $N|_A$ (and thus $N'|_A$) 
    \kh{is a semi-directed, simple, strict level-2 phylogenetic network. Since $Q(N)\simeq Q(N')$
    it follows that $gen(N|_A)$ and $gen(N'|_A)$ are isomorphic. By Lemma~\ref{lem:level-2-generators}, it follows that $gen(N)$ and $gen(N')$ must be isomorphic, as required}. 
    
    We next show that there
    exists some isomorphism $\phi$ from \kh{$gen(N)$ to $gen(N')$  so that if $S$ is any side in $gen(N)$ with $P_S\not=\emptyset$ then $P_S=P_{\phi(S)}$.} 
    To show that such an isomorphism $\phi$ exists, we \kh{distinguish between the cases that $gen(N)$ is as in Figure~\ref{fig:level2gens-labelled}(a) and that $gen(N)$ is as in Figure~\ref{fig:level2gens-labelled}(b). Put $P_i=P_{S_i}$, for all $i$}.

Case~(a): Note that in this case, there are exactly two isomorphisms from 
\kh{$gen(N)$ to $gen(N')$}: the identity and one
that swaps the sides $S_1$ and $S_2$ of $gen(N)$.
Now, fix $x\in P_5$ and an arbitrary leaf~$y\in P_i$ with~$i\in\{1,2\}$. Then considering any quarnet containing leaves~$x$ and~$y$ we see that~$y$ hangs off $\phi(S_1)$ or off $\phi(S_2)$ in~$N'$, for any isomorphism~$\phi$. Choose $\phi$ such that~$y$ hangs off $\phi(S_i)$. Now consider any leaf $z\in P_i\setminus\{y\}$, $i\in\{1,2,3,4\}$. Then considering any quarnet whose leaf set contains leaves~$x,y$ and~$z$, we see that~$z$ hangs off $\phi(S_i)$ in $N'$. This completes the proof in this case.

Case~(b): Note that in this case there are exactly four isomorphisms from \kh{$gen(N)$ to $gen(N)'$}: the identity, 
$(S_2S_3)(S_5S_6)$, $(S_2S_5)(S_4S_7)(S_3S_6)$ and $(S_2S_6)(S_4S_7)(S_3S_5)$ (given
as a combination of swaps, where $(S_i,S_j)$ denotes swapping sides $S_i$ and $S_j$).
Now, let $x\in P_4$ and $y\in P_7$. First suppose $P_1=X\setminus \{x,y\}$. For any leaf~$z\in P_1$ it follows, by considering an arbitrary quarnet whose leaf set contains $x$, $y$ and~$z$ that~$z$ hangs off~$\phi(S_1)=S_1$ in~$N'$. So the lemma holds in this case. 

Assume for the remainder that 
there exists $q\in X\setminus (P_1\cup\{x,y\})$. 
Then~$q\in P_i$ with~$i\in\{2,3,5,6\}$. Hence, by 
considering any quarnet whose leaf set contains $x$, $y$ and~$q$, we 
see that~$q$ hangs off one of $\phi(S_2)$, $\phi(S_3)$, $\phi(S_5)$, $\phi(S_6)$ in $N'$. 
Choose the isomorphism $\phi$ such that~$q$ hangs off $\phi(S_i)$ in $N'$. 
Then, for any leaf~$z\in P_i$, $i\in\{1,2,3,5,6\}$, it 
follows, by considering the quarnet whose leaf set contains $x$, $y$, $z$, $q$, 
that~$z$ hangs off~$\phi(S_i)$ in~$N'$. 
This completes the proof of the lemma in this case too.
\end{proof}

\revnew{The next theorem corresponds to Lemma~\ref{prop:level-2} without the ``strict'' restriction.}

\begin{theorem}\label{thm:lev2}
The class of semi-directed, simple,  
level-2, \leon{binary phylogenetic} networks with at least four leaves is weakly encoded by quarnets. 
\end{theorem}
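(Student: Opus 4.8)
The plan is to reduce Theorem~\ref{thm:lev2} to the two weak-encoding lemmas already established, Lemma~\ref{lem:level-1} and Lemma~\ref{prop:level-2}, together with a short argument showing that a strict level-1 network and a strict level-2 network can never share the same quarnet set. First I would observe that the class in the statement decomposes. Since the text notes that a simple network is, by definition, strict level-$k$ for some $k \geq 1$, and since a simple network has a single non-trivial blob, a semi-directed, simple, level-2, binary phylogenetic network has exactly one or exactly two reticulations; that is, it is either simple strict level-1 or simple strict level-2. Writing $C_1$ and $C_2$ for these two disjoint subclasses, the class under consideration is $C_1 \cup C_2$, and to prove it is weakly encoded I must show that for any two members $N, N'$ on the same leaf set with $Q(N) \simeq Q(N')$ we have $N \cong N'$.

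I would then split into cases according to which subclass $N$ and $N'$ belong to. If both lie in $C_1$, then $N \cong N'$ by Lemma~\ref{lem:level-1}; if both lie in $C_2$, then $N \cong N'$ by Lemma~\ref{prop:level-2}. The only remaining possibility is that one of $N, N'$ is strict level-1 and the other strict level-2, and the crux of the proof is to show that this configuration is incompatible with $Q(N) \simeq Q(N')$.

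For this cross-class case I would use a reticulation-count argument. The key observation is that the operations building a restriction $N|_A$ --- deleting vertices off every $\wedge$-path between leaves of $A$, then suppressing degree-2 vertices, parallel arcs and low-degree blobs --- never create a new reticulation, so every quarnet of a network has at most as many reticulations as the network itself. Consequently a member of $C_1$, which has a single reticulation, has only quarnets with at most one reticulation, and hence $Q(N)$ contains no strict level-2 quarnet. By contrast, a member $N'$ of $C_2$ always admits a 4-subset $A$ for which $N'|_A$ is again simple strict level-2; this is exactly the 4-subset exhibited at the start of the proof of Lemma~\ref{prop:level-2}, which concretely may be taken to consist of the two leaves sitting below the two reticulations together with two further leaves chosen so as to preserve both reticulations. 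Thus $Q(N')$ does contain a strict level-2 quarnet. Since $\simeq$ preserves the isomorphism types of the members of a quarnet set, $Q(N) \simeq Q(N')$ would force the two sets either both to contain or both to avoid a level-2 quarnet, a contradiction. Hence the cross-class case is vacuous, and combining it with the two within-class cases gives $N \cong N'$ in every case, completing the proof.

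I expect the only points requiring care to be the clean formulation of the two auxiliary facts used above: that restriction does not increase the number of reticulations, and that every semi-directed, simple, strict level-2 network with at least four leaves possesses a strict level-2 quarnet. Both are intuitively clear and are effectively already invoked in the preceding lemmas, so neither should present a genuine obstacle; the substance of the theorem lies entirely in the two weak-encoding lemmas, and what remains is this bookkeeping that glues the two levels together.
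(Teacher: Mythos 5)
Your proposal is correct and follows essentially the same route as the paper: split into the strict level-1 and strict level-2 cases handled by Lemma~\ref{lem:level-1} and Lemma~\ref{prop:level-2} respectively, and rule out the mixed case by exhibiting a strict level-2 quarnet of the level-2 network that a level-1 network cannot induce. Your reticulation-counting justification merely spells out what the paper dismisses with ``we can clearly pick some $A$'', so the two arguments coincide in substance.
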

\begin{proof}
Suppose that $N$ and $N'$ are semi-directed, simple, level-2 \leon{phylogenetic} networks 
    on $X$ with $Q(N)\simeq Q(N')$. We want to
    show that $N$ is isomorphic to $N$. 
    
First note that, if $N$ is strict level-2 and $N'$ is strict level-1, then 
we can clearly pick some
\revnew{$A\subseteq X,|A|=4$,}
so that the quarnet 
$N|_A$ \kh{is a strict level-2 network}, which is impossible since $N'|_A$ must be a level-1 \kh{network}.
By symmetry, it follows that 
both $N$ and $N'$ must be a strict level-1 or a strict level-2 \kh{network}. 
The theorem now follows immediately by 
applying Lemmas~\ref{lem:level-1} and \ref{prop:level-2}.
\end{proof}

\section{Blob trees\label{sec:cuts}}


\begin{figure}
    \centerline{\includegraphics{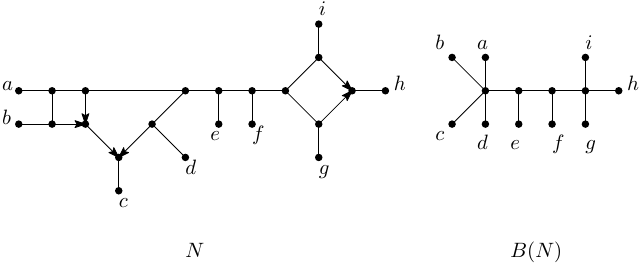}}
    \caption{\label{fig:intro} A semi-directed phylogenetic network~$N$ and the blob tree~$B(N)$ of~$N$.}
\end{figure}

\rev{Observe that a directed (respectively semi-directed) network is phylogenetic precisely if it has no parallel arcs and contracting each blob into a single vertex gives a directed (respectively undirected) phylogenetic tree. The tree obtained in this way is called the \emph{blob tree}~$B(N)$ of a network~$N$, see Figure~\ref{fig:intro}.} In this section, we show that the blob tree of a semi-directed phylogenetic network is uniquely determined by the quarnets of the network. This will be a direct consequence of Theorem~\ref{thm:splitsCondition}, which characterizes the splits of a semi-directed phylogenetic network using its quarnets. Note that this theorem does not put any restriction on the level.

A \emph{cut-edge} of a semi-directed network is an edge whose removal disconnects the network. \kh{We call a bipartition $\{A,B\}$ of $X$ into two non-empty subsets $A$ and $B$ a {\em split} of $X$ and
denote it by $A|B$ where the order of $A$ and $B$ does not matter. We call a split $A|B$ trivial if $|A|=1$ or $|B|=1$.}

Given a semi-directed network $N$ on $X$ and a 
\kh{split $A|B$ of $X$}
we say that $A|B$ is a \emph{cut-edge split (CE-split)} in $N$ if there exists a cut-edge $\{u,v\}$ of $N$ such that its removal gives two connected mixed graphs with leaf-sets $A$ and $B$. We say a CE-split $A|B$ is \emph{trivial} if $|A|=1$ or $|B|=1$. Observe that a semi-directed \leon{phylogenetic} network is simple if and only if it has no nontrivial CE-splits.

We will show in this section that we can detect splits in a semi-directed \leon{phylogenetic} network by looking at its quarnets, using the following theorem:

\begin{theorem}\label{thm:splitsCondition}
   Let  $N$  be a semi-directed, binary
    phylogenetic network on $X$ and $A|B$ a \kh{split} of $X$. Then $A|B$ is a CE-split in $N$ if and only if one of the following holds:
    \begin{itemize}
        \item \mj{$A|B$ is a trivial split} \kh{of $X$}; or
        \item \mj{$A|B$ is non-trivial} and for any \vm{pairwise distinct elements} $a_1,a_2 \in A, b_1,b_2 \in B$, $\{a_1,a_2\}|\{b_1,b_2\}$ is a CE-split in $N|_{\{a_1,a_2,b_1,b_2\}}$.
    \end{itemize} 
\end{theorem}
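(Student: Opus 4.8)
The plan is to prove both implications by relating the cut-edge splits (CE-splits) of $N$ to the splits of its blob tree $B(N)$, and then to invoke the classical fact that unrooted phylogenetic trees are encoded by their quartets (\cite[Theorem 2.7]{dress2012basic}). Recall that, since $N$ is phylogenetic, contracting each blob yields an unrooted phylogenetic tree $B(N)$ on $X$, and that the cut-edges of $N$ are exactly the bridges of $N$, i.e.\ the edges lying outside all blobs, which are precisely the edges of $B(N)$. Consequently a bipartition of $X$ is a non-trivial CE-split of $N$ if and only if it is a non-trivial split of $B(N)$. The trivial case of the theorem is immediate: if $A|B$ is trivial, say $A=\{x\}$, then the unique edge/arc incident with the leaf $x$ is necessarily an (undirected) edge, because arcs enter only reticulations and a leaf is not a reticulation; removing it realises $A|B$, so $A|B$ is a CE-split and the first bullet holds. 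From now on I assume $A|B$ is non-trivial.

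For the forward implication I would first establish a \emph{preservation} statement: if $A|B$ is a CE-split of $N$ witnessed by a cut-edge $e$, and $A'\subseteq A$, $B'\subseteq B$ are non-empty, then $A'|B'$ is a CE-split of $N|_{A'\cup B'}$. The idea is that every path in $N$ between a leaf of $A$ and a leaf of $B$ uses $e$, so the only edge of $N_{\wedge(A'\cup B')}$ joining the $A$-side to the $B$-side is (a subedge of) $e$. Since by \Cref{prop:restriction_semidirected} the restriction is a connected phylogenetic network, this edge---possibly after being merged into a longer edge by the vertex-suppression operations, which clearly preserve the side of each endpoint---survives as a cut-edge of $\supp{N_{\wedge(A'\cup B')}}=N|_{A'\cup B'}$ realising $A'|B'$. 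Applying this with $A'=\{a_1,a_2\}$ and $B'=\{b_1,b_2\}$ gives exactly the second bullet.

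For the reverse implication I would argue contrapositively. If $A|B$ is \emph{not} a CE-split of $N$, then it is not a split of $B(N)$, so by the quartet encoding of trees there exist pairwise distinct $a_1,a_2\in A$ and $b_1,b_2\in B$ with $B(N)|_{\{a_1,a_2,b_1,b_2\}}\neq a_1a_2|b_1b_2$; that is, the induced quartet is either the other resolution $a_1b_i|a_2b_j$ or an unresolved star. In the first case the preservation statement, applied to the blob-tree split separating that pair, shows that $N|_{\{a_1,a_2,b_1,b_2\}}$ has the CE-split $\{a_1,b_i\}|\{a_2,b_j\}$; since the blob tree of a four-leaf network has at most one non-trivial split and the three $2$-$2$ splits are pairwise incompatible, it cannot also have $\{a_1,a_2\}|\{b_1,b_2\}$, contradicting the hypothesis. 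In the star case the four leaves route through a single common blob $\beta$ of $N$, and it remains to show that the quarnet $N|_{\{a_1,a_2,b_1,b_2\}}$ has \emph{no} non-trivial CE-split at all, again contradicting the hypothesis.

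This last step is the crux, and the \textbf{main obstacle} is to show that restricting does not create spurious cut-edges: a blob $\beta$ of $N$ must collapse under restriction either to a single vertex or to a single blob, and can never break into several blobs joined by new cut-edges. Concretely I would show that the part of $\beta$ retained in $N_{\wedge A'}$---the union of the $\wedge$-paths through $\beta$ between the retained leaves---stays $2$-edge-connected, so that after suppression it is still a single blob (or a single vertex), carrying no non-trivial bridge separating the four leaves $2$-$2$. The plan here is to exploit the cycle structure of $\beta$ together with \Cref{lem:wedges}: any two attachment points of retained leaves on $\beta$ are joined inside $\beta$ by two edge-disjoint paths (as $\beta$ is $2$-edge-connected), and one checks, using that the reticulations encountered keep \emph{both} in-neighbours on the relevant $\wedge$-paths (\Cref{lem:wedges}) and that $N$ has no semi-directed cycles (\Cref{lem:sdcycle}), that enough of these paths survive the vertex deletions to preserve $2$-edge-connectivity. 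Granting this, the star case yields a quarnet with no non-trivial CE-split, completing the contrapositive and hence the theorem.
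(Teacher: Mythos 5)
Your forward direction and the ``alternative resolution'' case of your contrapositive are sound and essentially parallel the paper's argument (suppression operations preserve CE-splits, and two incompatible $2$-$2$ splits cannot both be CE-splits of a four-leaf network). The gap is in the star case, which you correctly identify as the crux: the claim that when the four chosen leaves hang off a common blob $\beta$ the quarnet has no non-trivial CE-split --- equivalently, that the retained part of $\beta$ stays $2$-edge-connected under restriction --- is false. Counterexample: let $N$ be the simple level-1 network consisting of a cycle $v_1,\dots,v_8$ with reticulation $v_1$, leaf $r$ adjacent to $v_1$, and leaves $x_i$ adjacent to $v_i$ for $i=2,\dots,8$. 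Take the quadruple $\{x_2,x_3,x_5,x_6\}$. Any path through $v_1$ necessarily contains both in-neighbours $v_2$ and $v_8$ of $v_1$ and hence is not a $\wedge$-path, so in $N_{\wedge\{x_2,x_3,x_5,x_6\}}$ the vertices $v_1,v_7,v_8$ are deleted: the cycle is opened into a path, and the quarnet is the quartet tree with split $\{x_2,x_3\}|\{x_5,x_6\}$, which is a non-trivial CE-split. So a blob can break apart under restriction; your proposed $2$-edge-connectivity argument cannot work, because $\wedge$-path deletion removes every vertex that is reachable only by passing through a reticulation together with both of its parents (and Lemma~\ref{lem:wedges} does not prevent this: it only protects parents of \emph{retained} reticulations, whereas here the reticulation itself is discarded).

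This is not a repairable detail within your framework. When $N$ is simple, $B(N)$ is a star tree, so \emph{every} quadruple falls into your star case, and your blob-tree/quartet machinery gives no way to select a good one --- yet the theorem requires exhibiting a specific quadruple whose quarnet lacks the split $\{a_1,a_2\}|\{b_1,b_2\}$. In the example above with $A=\{x_2,x_3\}$, a witnessing quadruple must include the reticulation leaf $r$; arbitrary leaves off the same blob will not do. This careful selection is exactly what the paper's machinery provides: Lemmas~\ref{lem:crossyArc}, \ref{lem:wavyPaths} and~\ref{lem:excellentCycle} produce \emph{excellent} cycles (cycles with a unique sink adjacent to a leaf), and Lemma~\ref{lem:twogood} produces two overlapping excellent cycles together with leaves $a_1\in A$, $b_1\in B$ belonging to them, where the chosen leaves are adjacent to reticulations. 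Precisely because these leaves sit below the sinks, the cycles (and their overlap) survive restriction, so $\{a_1,a_2\}|\{b_1,b_2\}$ is not a CE-split of the quarnet. The paper then lifts this simple base case to general $N$ by induction on the number of non-trivial CE-splits, contracting one side of an existing cut-edge to a single new leaf --- a different mechanism from your blob-tree reduction, though your reduction would be fine if the simple case were in hand.
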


The main challenge in proving Theorem~\ref{thm:splitsCondition} will be to show that when \mj{$A|B$ is non-trivial} and $N$ is simple (and therefore $A|B$ is not a CE-split in $N$), there exist $a_1,a_2 \in A, b_1,b_2 \in B$ for which $\{a_1,a_2\}|\{b_1,b_2\}$ is \emph{not} a CE-split in $N|_{\{a_1,a_2,b_1,b_2\}}$.
To show this, we first prove some results concerning directed networks:

\begin{lemma}\label{lem:crossyArc}
    Let $N$ be a simple directed  \leon{phylogenetic} 
    network on $X$ \mj{with at least one reticulation}.
    \leo{If~$v$ is a vertex of~$N$ that is not the root, not a leaf and not a leaf-reticulation, then}
    there exists an arc $(u',v')$ with $v \notin \{u',v'\}$ such that $v'$ is  below 
    \leo{$v$} and $u'$ is not below $v$. \mj{In particular, $v'$ is a reticulation.} 
\end{lemma}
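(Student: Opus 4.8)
The plan is to exploit the fact that $N$ is a \emph{simple} directed phylogenetic network, so that after deleting the leaves the remaining graph is a single blob, together with the existence of a reticulation in that blob. Let $v$ be as in the statement: not the root, not a leaf, and not a leaf-reticulation. Consider the set $D$ of all vertices that are below $v$ (including $v$ itself). Since $v$ is not a leaf, $D$ contains at least one vertex other than $v$. I would argue that there must be an arc $(u',v')$ with $v' \in D$ and $u' \notin D$, where additionally $v \notin\{u',v'\}$. The key structural point is that $v'$ must then be a reticulation, because any non-reticulation vertex $v' \in D\setminus\{v\}$ has a unique parent, and that parent must also lie in $D$ (it is on a semi-directed path from $v$ down to $v'$), so such a crossing arc into a tree vertex cannot exist below $v$ unless it enters a reticulation.

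The main work is to show that at least one such crossing arc actually exists and that we can choose it with $v' \neq v$ and $u' \neq v$. First I would observe that since $N$ is simple, the blob obtained by deleting the leaves is a single non-separable component, so no single edge/arc removal disconnects the internal structure; in particular the set $D$ of vertices below $v$ cannot be ``sealed off'' from the rest of the network by the arcs incident to $v$ alone. More concretely, I would look at all arcs $(u',v')$ with $v'\in D$ and $u'\notin D$ — the arcs \emph{entering} $D$ from outside. The root is not below $v$ (as $v$ is not the root and the root is above everything), so the root lies outside $D$; hence $D \neq V(N)$, and since $N$ is connected there must be at least one arc entering $D$. The only arc entering $D$ that is incident to $v$ itself is an arc into $v$; by 2-connectivity of the blob (simplicity), there must be a \emph{second} arc entering $D$, and this second arc $(u',v')$ satisfies $v\notin\{u',v'\}$.

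The main obstacle I anticipate is the careful verification that such a crossing arc exists that avoids $v$ as an endpoint, i.e. ruling out the degenerate possibility that every arc entering $D$ passes through $v$. This is exactly where simplicity (2-edge-connectivity of the blob, equivalently the absence of nontrivial cut-edges) is essential: if the only arcs crossing into $D$ were incident to $v$, then the arc(s) into $v$ would form a cut separating $D\setminus\{v\}$ from the rest, contradicting that the internal graph is a single blob. I would make this precise by a cut-argument: the set $D\setminus\{v\}$ (together with its incident leaves) would be separated from the remainder of the network by at most one edge/arc, contradicting simplicity, unless $D\setminus\{v\}$ consists only of leaves hanging off $v$ — but that case is excluded precisely because $v$ is not a leaf-reticulation and is not adjacent only to leaves. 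Once the crossing arc $(u',v')$ with $v\notin\{u',v'\}$ is secured, the conclusion that $v'$ is a reticulation follows immediately from the parent-uniqueness argument above, since $v'$ has an in-neighbour $u'$ outside $D$ while any semi-directed path witnessing $v'\in D$ supplies a second distinct in-neighbour inside $D$.
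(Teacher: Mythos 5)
Your overall strategy --- take the down-set $D$ of all vertices below $v$, show some arc enters $D$ at a vertex other than $v$, and deduce that its head is a reticulation because a non-reticulation strictly below $v$ has its unique parent below $v$ --- is viable, and it is genuinely different from the paper's proof (the paper takes an undirected path in the blob, avoiding $v$, from a non-leaf vertex not below $v$ to a non-leaf vertex strictly below $v$, and extracts the arc at the transition between the two sides; that formulation handles both possible types of $v$ uniformly). However, your cut argument, which is the crux, has a genuine error. You claim that if every arc entering $D$ is incident to $v$, then ``$D\setminus\{v\}$ \dots would be separated from the remainder of the network by at most one edge/arc''. This is false when $v$ is a tree vertex (indegree $1$, outdegree $2$): in that case $D\setminus\{v\}$ is joined to the rest of the network by exactly the \emph{two} out-arcs of $v$, so no single deletion disconnects it and $2$-edge-connectivity of the blob is not contradicted. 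Your other phrasing, that ``the arc(s) into $v$ would form a cut'', has the symmetric defect: it yields a one-arc cut only when $v$ has indegree $1$, whereas when $v$ is a reticulation the arcs into $v$ are two arcs, and a two-arc cut does not contradict $2$-edge-connectivity (to get a contradiction there you would implicitly need $2$-\emph{vertex}-connectivity --- both arcs end at $v$, so $v$ would be a cut vertex --- which in a binary network does follow from $2$-edge-connectivity, but that is an extra argument you never make).

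The repair is a case split that your draft never makes explicit. If $v$ has indegree $1$ with parent $p$, then under your assumption the single arc $(p,v)$ separates $D$ from $V(N)\setminus D$; since $v$ is not a leaf and the root lies outside $D$, deleting $(p,v)$ disconnects the blob, a contradiction with simplicity. If instead $v$ is a reticulation, then its unique out-arc $(v,c)$ separates $D\setminus\{v\}$ from the rest; here $c$ is not a leaf precisely because $v$ is not a leaf-reticulation, so deleting $(v,c)$ again disconnects the blob, a contradiction. Note that this fix also eliminates your appeal to the condition that $v$ ``is not adjacent only to leaves'', which is not among the hypotheses of Lemma~\ref{lem:crossyArc}: for a tree vertex with two leaf children this must itself be derived from simplicity (such a vertex would have degree $1$ in the blob), not from the stated assumptions. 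With the case split and that point cleaned up, your route goes through and is a legitimate alternative to the paper's path-based argument.
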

\begin{proof}
    Let  $Y$ denote the set of non-leaf vertices in $N$ that are not below $v$, and let $Z$ denote the set of non-leaf vertices in $N$ \leo{strictly} below $v$.
    Since $v$ is not the root, $Y$ is nonempty. In addition, since $v$ is \leo{not a leaf and} not a leaf-reticulation, $Z$ is nonempty. Then since $N$ is simple, the underlying undirected graph of $N$ has a path \mj{starting at a vertex $Y$ and ending at a vertex in $Z$} 
    that does not include $v$. 
    It follows that there exist adjacent vertices $u'$ in $Y$, $v'\in Z$.
    Since $v'$ is below $v$ and $u'$ is not, $N$ does not contain the arc $(v',u')$. So $N$ must contain the arc $(u',v')$, as required.
\end{proof}

\begin{lemma}\label{lem:wavyPaths}
    Let $N$ be a simple directed \revnew{strict} level-$k$ \leon{phylogenetic} network on $X$ for $k \geq 1$. Then for any arc $(u,v)$ in $N$ \leo{with~$v$ not a leaf}, there exist vertices $u^*, r$ and directed paths $P, Q$ in $N$ such that:
    \begin{itemize}
        \item $P$ and $Q$ are arc-disjoint paths from $u^*$ to $r$;
        \item $P$ contains the arc $(u,v)$; and
        \item $r$ is a leaf-reticulation.
    \end{itemize}
\end{lemma}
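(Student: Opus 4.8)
The plan is to prove \Cref{lem:wavyPaths} by induction on the level $k$, constructing the two arc-disjoint paths $P$ and $Q$ that converge on a leaf-reticulation $r$. The rough idea is that $P$ should simply extend the given arc $(u,v)$ downward to some reticulation, while $Q$ approaches that same reticulation along an alternate route; the key difficulty is ensuring that the two paths meet at a common leaf-reticulation rather than merely at some reticulation, and that they remain arc-disjoint throughout.

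First I would use \Cref{lem:crossyArc} as the engine for producing a ``crossing'' arc into the subnetwork below $v$. Since $(u,v)$ is an arc with $v$ not a leaf, I would apply \Cref{lem:crossyArc} to $v$ (noting $v$ is not the root, not a leaf, and if it is a leaf-reticulation the statement becomes much simpler) to obtain an arc $(u',v')$ with $v'$ below $v$, $u'$ not below $v$, and $v'$ a reticulation. This gives two distinct ways of reaching $v'$: one route that passes through $v$ (extending the arc $(u,v)$ down to a parent of $v'$ that lies below $v$), and a second route arriving at $v'$ via the arc $(u',v')$ from outside the subnetwork below $v$. These two routes are arc-disjoint near $v'$ because they enter $v'$ through its two distinct incoming arcs (recall $v'$ is a reticulation, so it has exactly two in-arcs). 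The common start vertex $u^*$ would be taken as a vertex that is above both $u$ and $u'$; since $N$ is simple, such a common ancestor exists (in the extreme case one can take the root, or a suitable branching vertex), and I would route $P$ and $Q$ from $u^*$ so that they first separate and then reconverge at $v'$.

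The main obstacle, and where the induction comes in, is that $v'$ need not be a \emph{leaf}-reticulation---the lemma demands that the paths terminate at a reticulation adjacent to a leaf. If $v'$ happens to be a leaf-reticulation we are done with $r = v'$. Otherwise, $v'$ is a reticulation that is not adjacent to a leaf, so it has an outgoing arc $(v', v'')$ with $v''$ not a leaf. Here I would delete one of the reticulation arcs of $v'$ (say, remove $(u',v')$ and suppress the resulting degree-2 vertices), obtaining a simple directed level-$(k-1)$ phylogenetic network $\tilde N$ in which the extended arc $(v', v'')$ survives. Applying the induction hypothesis to $\tilde N$ and this arc yields arc-disjoint directed paths $\tilde P, \tilde Q$ from some $\tilde u^*$ to a leaf-reticulation $r$. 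I would then lift these paths back to $N$ and splice them onto the initial segments reaching $v'$, taking care that the arc I deleted is not reused so that arc-disjointness is preserved globally.

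The hard part will be bookkeeping the arc-disjointness across the splice and verifying that deleting a reticulation arc genuinely decreases the level while keeping the network simple and phylogenetic (one must check that no blob is destroyed in a way that drops a needed reticulation, and that the suppression operations from \Cref{sec:restrictions} behave as expected). A cleaner alternative I would consider is to avoid explicit induction and instead argue directly: take $P$ to be a maximal downward extension of $(u,v)$ that follows a fixed ``keep going down'' rule until it reaches a leaf-reticulation $r$, and then construct $Q$ by repeatedly invoking \Cref{lem:crossyArc} to find crossing arcs feeding into the reticulations encountered along $P$, stitching these together into a second arc-disjoint path arriving at $r$. In either formulation, the crux is the same: \Cref{lem:crossyArc} guarantees that every reticulation below $v$ can be ``fed'' by an arc coming from outside the region already used by $P$, which is exactly what is needed to build the second path $Q$ while maintaining arc-disjointness.
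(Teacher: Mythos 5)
Your opening move (apply Lemma~\ref{lem:crossyArc} to $v$ to get a crossing arc $(u',v')$ into the region below $v$) matches the first step of the paper's construction, but the way you iterate it — surgery on the network plus induction on the level $k$ — has a genuine gap. Deleting the reticulation arc $(u',v')$ and suppressing degree-2 vertices does \emph{not} in general yield a simple network: removing an arc from a blob can create non-trivial cut-edges, splitting the blob apart. Since both the induction hypothesis and Lemma~\ref{lem:crossyArc} require simplicity, the inductive step cannot be invoked as stated, and the arc $(v',v'')$ you want to apply it to may even end up as a cut-edge lying in no blob at all. Even granting the induction hypothesis, the splice is not specified in a way that can work: the paths $\tilde P,\tilde Q$ it returns start at some vertex $\tilde u^*$ that bears no relation to $v'$ or to your $u^*$, so they cannot simply be concatenated with your initial segments; nothing prevents the lifted paths from re-using arcs of those initial segments (including $(u,v)$ itself or arcs on the route from $u^*$ through $v$ to $v'$); and the requirement that the final $P$ contain $(u,v)$ is lost once the new start vertex comes from the subproblem. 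Your ``cleaner alternative'' has the same missing ingredient: after extending $(u,v)$ downward to a leaf-reticulation $r$ and taking the other in-arc of $r$, you still need a path from a common start $u^*$ to that in-arc which avoids every arc of $P$, and ``repeatedly invoking Lemma~\ref{lem:crossyArc} and stitching'' gives no mechanism that guarantees this.

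The paper's proof supplies exactly that mechanism, staying inside $N$ with no induction and no surgery. It builds a sequence $v_1=v, u_1, v_2, u_2, \dots, v_s$ where $v_{i+1}$ is chosen as a \emph{lowest} reticulation strictly below $v_i$ having a parent $u_i$ not below $v_i$ (existence from Lemma~\ref{lem:crossyArc}); the process terminates at a leaf-reticulation $v_s=r$. The two paths are then woven in a zigzag: $P$ descends from $u^*$ through $u$, $(u,v_1)$, then alternates segments $v_i \to u_{i+1}$ (odd $i$) with crossing arcs $(u_i,v_{i+1})$ (even $i$), while $Q$ does the complementary alternation starting through $u_1$. Arc-disjointness is then forced by the two extremal choices: taking $u^*$ to be a \emph{lowest} common ancestor of $u$ and $u_1$ rules out a common vertex above $v_1$, and the lowest-$v'$ rule rules out a common vertex below $v_1$ (such a vertex would witness that some $u_i$ is below $v_i$, contradicting the choice of $u_i$). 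These two extremal choices — and the structural claim that each $u_{i+1}$ lies below $v_i$, which is what makes the zigzag segments exist — are the ideas missing from both of your variants, and without them the arc-disjointness claim does not go through.
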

\begin{proof}
    Suppose first that $v$ is a leaf-reticulation, and let $(u',v)$ be the other incoming arc of $v$. Then let $r = v$ and let $u^*$ be any lowest common ancestor of $u$ and $u'$. Let $P$ be a directed path consisting of a directed path from $u^*$ to $u$ extended with the arc $(u,v)$, and let $Q$ be a directed path consisting of a directed path from $u^*$ to $u'$ extended with the arc $(u',\leo{v})$. Then $P$ and $Q$ are arc-disjoint paths from $u^*$ to $r=v$ (any overlap would imply that $u$ and $u'$ have a common ancestor strictly below $u^*$) and $P$ contains $(u,v)$.

    Now assume that $v$ is
    \mj{either a tree node or reticulation that is not a leaf-reticulation.}
    We generate a sequence of vertices $v_1,u_1,\dots ,v_{s-1},u_{s-1}, v_s$, as follows.
    Initially set $v_1: =v$ and $i = 1$.
    While $v_i$ is not a leaf-reticulation, by Lemma~\ref{lem:crossyArc} there exists at least one arc $(u',v')$ with \mj{reticulation}
    $v'$ \leo{strictly} below $v_i$ and $u'$ 
    not below $v_i$.
    Choose such an arc $(u',v')$ with lowest $v'$, and let $u_i:= u'$, $v_{i+1} = v'$. 
    Observe that any directed path from $v_i$ to $v_{i+1}$ is arc-disjoint from any directed path ending with $(u_i,v_{i+1})$.
    Now increase $i$ by $1$ and repeat.
    If $v_i$ is a leaf-reticulation, then set $s: = i$ and terminate. 

    Since $v_{i+1}$ is strictly below $v_i$ for each $i$, this process must terminate \mj{because $N$ only has finitely many vertices. Let $x$ be the leaf adjacent to $v_s$ --- see Figure~\ref{fig:wagglyPath}.}

    Note that $v_j$ is below $v_i$ for all $1\leq i < j \leq s$.
    Note also that for each $i \in \{1,\dots, s-2\}$, the vertex $u_{i+1}$ is below $v_i$. Indeed, if this is not the case then $(u_{i+1}, v_{i+2})$ is an arc with $v_{i+2}$ below $v_i$ and $u_{i+1}$ not below $v_i$, which contradicts our choice of $v_{i+1}$ as a lowest vertex for which such an arc exists.
    So there exists a path from $v_i$ to $u_{i+1}$ for each $i \leq s-2$, and an arc from $u_i$ to $v_{i+1}$ for each $i \leq s-1$.

    Now let $u^*$ be a lowest common ancestor of $u$ and $u_1$.

    We can now form $P$ by combining the following directed paths --- see Figure~\ref{fig:wagglyPath}:
    \begin{itemize}
        \item A  directed path from $u^*$ to $u$;
        \item The arc $(u,v_1)$;
        \item For each odd $i \in \{1, \dots, s-2\}$, a directed path from $v_i$ to $u_{i+1}$; 
        \item For each even $i \in \{1,\dots, s-1\}$, the arc $(u_i,v_{i+1})$;
        \item If $s$ is even, a directed path from $v_{s-1}$ to $v_s$.
    \end{itemize}

\begin{figure}[h]
    \centering
    \includegraphics{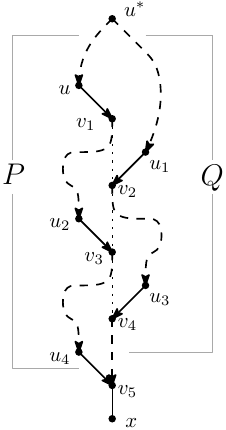}
    \caption{Illustration of the proof of Lemma~\ref{lem:wavyPaths} for the case that $s = 5$. For each $i < s$, $v_{i+1}$ is a lowest vertex below $v_i$ such that $v_{i+1}$ has a parent $u_i$ that is not below $v_i$.
    Dashed lines represent directed paths.
    The path $P$ is in bold on the left, starting at $u^*$, passing through $u,v_1,u_2,v_3,u_4$ and ending at $v_5$.
    The path $Q$ is in bold on the right, starting at $u^*$, passing through $u_1,v_2,u_3,v_4$ and ending at $v_5$. \mj{A dotted line from $v_i$ to $v_{i+1}$ illustrates the fact that $v_{i+1}$ is below $v_i$, for $i \in \{1,\dots, s-2\}$.}}
    \label{fig:wagglyPath}
\end{figure}

    We now have that $P$ contains $v_i$ for all odd $i$ and $u_i$ for all even $i$, and $P$ is a directed path from $u^*$ to $v_s$ (ending with the arc $(u_{s-1},v_s)$ if $s$ is odd, and otherwise ending with an arbitrary path from $v_{s-1}$ to $v_s$). By construction, $P$ contains the arc $(u,v)$.

    In a similar way, we form $Q$ by combining the following directed paths --- see Figure~\ref{fig:wagglyPath}:

    \begin{itemize}
        \item An (arbitrary) directed path from $u^*$ to $u_1$;
        \item For each even $i \in \{1, \dots, s-2\}$, an (arbitrary) directed path from $v_i$ to $u_{i+1}$; 
        \item For each odd $i \in \{1,\dots, s-1\}$, the arc $(u_i,v_{i+1})$;
        \item If $s$ is odd, an (arbitrary) directed path from $v_{s-1}$ to $v_s$.
    \end{itemize}

    We now have that $Q$ contains $v_i$ for all even $i$ and $u_i$ for all odd $i$, and $Q$ is a directed path from $u^*$ to $v_s$  (ending with the arc $(u_{s-1},v_s)$ if $s$ is even, and otherwise ending with an arbitrary directed path from $v_{s-1}$ to $v_s$).
    
    Letting $r$ be the leaf-reticulation $v_s$, we have that $P$ and $Q$ are paths from $u^*$ to $r$.
    It remains to show that $P$ and $Q$ are arc-disjoint. 
    For this, it is sufficient to show that there is no vertex $v'$ in both $P$ and $Q$ except for the $u^*$ and $v_s$. We note that the degenerate case that $P$ and $Q$ both consist of the single arc $(u^*,v_s)$ cannot occur, since we assumed $v_1$ is not a leaf-reticulation and so $s>1$.
    
    So suppose for a contradiction that such a vertex $v'$ does exist. Then $v'$ is strictly below $u^*$ and strictly above $v_s$.
    
    \leo{First suppose} that $v'$ is strictly above $v_1$, \mj{and therefore $v_2$}. 
    \leo{Since~$v'$ is on~$P$}, this implies that $v'$ is \mj{also} above $u$.
    \leo{Since~$v'$ is on~$Q$, it follows that~$v'$ is above~$u_1$.} Thus $v'$ is a common ancestor of $u$ and $u_1$ that is strictly below $u^*$, a contradiction by the choice of $u^*$.
    
    \leo{Now suppose} that $v'$ is below $v_1$. Let $i \in \{1,\dots, v_{s-1}\}$ be the unique index such that $v'$ is below $v_i$ but not below $v_{i+1}$. Since one of the paths $P$ and $Q$ contains $(u_i, v_{i+1})$, $v'$ must be above $u_i$. But then we have that there is a directed path from $v_i$ to $u_i$ via $v'$. \mj{Thus $u_i$ is below $v_i$,} a contradiction by the choice of $u_i$.
    Thus we may conclude that $P$ and $Q$ have no vertices in common except for \mj{ $u^*$ and $v_s$} \leo{(and do not both consist of a single arc)}, and so $P$ and $Q$ are arc-disjoint.
\end{proof}

\revnew{We say that two cycles in $N$ \emph{overlap} if they have at least one vertex in common. 
Since~$N$ is binary, two cycles in~$N$ overlap if and only if they have at least one edge or arc in common. Recall that a reticulation~$r$ in $N$ is a \emph{sink} of a cycle~$C$ if~$C$ contains both incoming arcs of~$r$. We call
a cycle $C$ \emph{good} if it contains exactly one sink, and we call
a good cycle \emph{excellent} if its sink is adjacent to a leaf. \mj{See Figure~\ref{fig:cyclesExamples}.}}

\begin{figure}
\centerline{\includegraphics{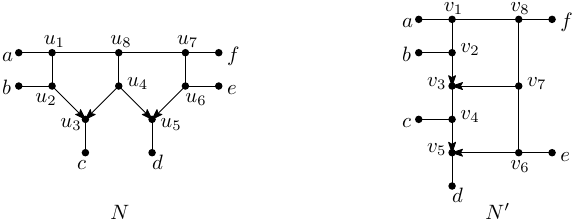}}
    \caption{\label{fig:cyclesExamples} \leon{Two semi-directed phylogenetic networks~$N$ and~$N'$, each containing three pairwise-overlapping cycles.
     In~$N$, the cycle 
     \revnew{$(u_1, u_2, u_3, u_4, u_8, u_1)$} is good as it has a single sink $u_3$, and it is excellent as $u_3$ is adjacent to a leaf. Similarly, the cycle 
     \revnew{$(u_4, u_5,  u_6,  u_7,  u_8, u_4)$} is excellent. However, the cycle 
     \revnew{$(u_1, u_2, u_3, u_4, u_5, u_6, u_7, u_8,u_1)$} is not good (and therefore not excellent) as it has two sinks $u_3$ and $u_5$.
     In~$N'$, the cycle 
     \revnew{$(v_1,v_2,v_3,v_7,v_8, v_1)$} is good as it has a single sink $v_3$, but it is not excellent since $v_3$ is not adjacent to a leaf.
     The cycles 
     \revnew{$(v_3,v_4,v_5,v_6,v_7,v_3)$}, and \revnew{$(v_1,v_2,v_3,v_4,v_5,v_6,v_7,v_8,v_1)$} are both excellent.
     Note in particular that this last cycle is good even though it contains two reticulations $v_3$ and $v_5$, as $v_3$ is not a sink of this cycle.}
     }
\end{figure}

We say a leaf \emph{belongs} to a \leo{cycle} $C$ if the unique vertex that is adjacent to it is in $C$. Note that if~$r$ is the sink of a \leo{good} cycle~$C$ in a semi-directed network~$N$ and $x$ is \leo{a leaf} below $r$, then $x$ belongs to $C$ if and only if $r$ and $x$ are adjacent. To see this, suppose that~$x$ is a leaf below~$r$ and belongs to~$C$ but is not adjacent to~$r$. Then there exists a semi-directed path from~$r$ to the 
vertex~$v$ adjacent to~$x$. Since~$v$ \mj{is a vertex of} $C$ and~$r$ is the unique sink of~$C$, there exists a semi-directed path from~$v$ to~$r$. Hence, there exists a semi-directed cycle in~$N$, which is a contradiction by Lemma~\ref{lem:sdcycle}.

\begin{lemma}\label{lem:excellentCycle}
    Let $N$ be a simple, semi-directed \leon{phylogenetic} network \mj{with at least one reticulation} and let $e$ be an arc or edge between two non-leaf vertices.
    Then $e$ is contained in at least one excellent cycle. 
\end{lemma}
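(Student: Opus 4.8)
The plan is to pass from the semi-directed network $N$ to a rooting, to build the required cycle there using \Cref{lem:wavyPaths}, and then to push the cycle back down to $N$. First I would fix a rooting $N_d$ of $N$. Since $N$ is simple, its non-leaf part is a single blob, and by sliding the root onto an adjacent blob edge if necessary (a standard rerooting, which only reverses the orientation of a non-reticulation arc) I may assume the root $\rho$ subdivides an edge of the blob rather than a pendant edge to a leaf; this is exactly what is needed to guarantee that $N_d$ itself is simple (deleting the leaves leaves the blob with $\rho$ subdividing one edge, which is again a blob). As $N$ has at least one reticulation, so does $N_d$, so $N_d$ is a simple directed level-$k$ phylogenetic network with $k\ge 1$ and \Cref{lem:wavyPaths} applies.

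Next I would locate $e$ inside $N_d$ and split into two cases. If $e$ is an arc of $N$, or an edge of $N$ not subdivided by $\rho$, then $e$ becomes a single arc $(u,v)$ of $N_d$ with $u,v\neq\rho$; since both endpoints of $e$ are non-leaves, $v$ is not a leaf, and applying \Cref{lem:wavyPaths} to $(u,v)$ yields arc-disjoint directed paths $P,Q$ from a common vertex $u^\ast$ to a leaf-reticulation $r$ with $(u,v)$ lying on $P$. If instead $e=\{u,v\}$ is an edge subdivided by $\rho$, then $N_d$ contains the two arcs $(\rho,u),(\rho,v)$; applying \Cref{lem:wavyPaths} to $(\rho,u)$ forces $u^\ast=\rho$, because $\rho$ has indegree $0$ and so cannot be preceded on the directed path $P$. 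Thus $P$ begins with $(\rho,u)$, and since $P,Q$ are arc-disjoint and $\rho$ has outdegree $2$, the path $Q$ must begin with $(\rho,v)$.

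In either case I would set $C:=P\cup Q$. Because $P$ and $Q$ meet only at $u^\ast$ and $r$, and $N_d$ has no parallel arcs (so $P,Q$ are not both single arcs into $r$), $C$ is a genuine cycle of $N_d$. Every internal vertex of $C$ lies on exactly one of $P,Q$ and hence has exactly one incoming arc of $C$, while $u^\ast$ is a source and $r$ receives both of its incoming arcs from $C$; thus $r$ is the unique sink of $C$, and being a leaf-reticulation it makes $C$ excellent in $N_d$. I would then obtain a cycle $C'$ of $N$ by turning non-reticulation arcs into edges and suppressing $\rho$. The two arcs into $r$ are reticulation arcs, hence preserved, so $r$ is still the unique sink of $C'$ and is still adjacent to a leaf; therefore $C'$ is excellent. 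Finally, $e$ lies on $C'$: in the non-subdivided case the arc $(u,v)$ is neither of the two arcs merged at $\rho$ (as $u,v\neq\rho$), and in the subdivided case suppressing $\rho$ merges $(\rho,u)$ and $(\rho,v)$ into precisely the edge $e=\{u,v\}$.

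The hard part will be controlling this final projection across the root. Three points need care: that $N_d$ may indeed be taken simple (the rerooting claim); that suppressing $\rho$ neither destroys $C$ as a cycle nor creates a second sink (which holds because $\rho$ can only occur as the source $u^\ast$ of $C$, and only non-reticulation arcs are altered); and that $C'$ does not degenerate into a $2$-cycle. This last non-degeneracy I would deduce from $N$ being phylogenetic: a degenerate $C'$ would force parallel edges or arcs between two vertices of $N$, which a phylogenetic network forbids. The remaining verifications are routine bookkeeping about how arcs and the root transform between $N_d$ and $N$.
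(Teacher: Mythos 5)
Your proposal is correct and takes essentially the same route as the paper's own proof: pass to a rooting $N_d$, apply Lemma~\ref{lem:wavyPaths} to the arc of $N_d$ corresponding to $e$ (or to $(\rho,u)$ when the root subdivides $e$, in which case $u^*=\rho$ is forced and $Q$ must start with $(\rho,v)$), and project the union of the two arc-disjoint paths back down to an excellent cycle of $N$ containing $e$. If anything, you are somewhat more careful than the paper, which applies Lemma~\ref{lem:wavyPaths} to an arbitrary rooting without verifying its simplicity and does not explicitly exclude the projected cycle degenerating into parallel arcs; your rerooting onto a blob edge/arc and your appeal to $N$ being phylogenetic handle exactly these two points.
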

\begin{proof}

    Let $v_1,v_2$ be the vertices of $e$.
    Let $N_d$ be \leon{a rooting of $N$} \mj{with root $\rho$}.
    Observe that either $v_1$ and $v_2$ are adjacent in $N_d$, or $N_d$ contains the arcs $(\rho, v_1), (\rho, v_2)$ (and $\rho$ is not adjacent to any other vertices).
    If $v_1$ and $v_2$ are adjacent in $N_d$, we may assume without loss of generality that the arc between them is $(v_1,v_2)$.

    Now, let $(u,v) = (v_1,v_2)$ if $v_1$ and $v_2$ are adjacent in $N_d$, and let $(u,v) = (\rho, v_1)$ otherwise.
    By Lemma~\ref{lem:wavyPaths} there exist arc-disjoint directed paths $P,Q$ in $N_d$ from some vertex $u^*$ to a leaf-reticulation $r$, and $(u,v)$ is on the path $P$.
    Note that either $u^* = \rho$ or every vertex in $P$ and $Q$ is a vertex of $N$.

    We now construct a cycle $C$ in $N$ from the union of $P$ and $Q$. For each arc $e'$ in $P$ or $Q$ not incident to $\rho$, let $e''$ be the corresponding edge or arc in $N$ (i.e. with the same vertices as $e'$), and add $e''$ to $C$.
    If $u^* = \rho$, then \leon{$(\rho,v_1)$ and $(\rho, v_2)$ are} the first arcs of $P$ and $Q$ 
    and add the arc or edge in $N$ between \leon{$v_1$ and $v_2$} to $C$. 
    Since $P$ and $Q$ are arc-disjoint paths with the same start and end vertices, the resulting $C$ is indeed a cycle. Moreover $C$ contains $e$  (either because $(v_1,v_2)$ is an arc in $P$, or because $(\rho, v_1)$ and $(\rho, v_2)$ are the top arcs of $P$ and $Q$ respectively). It remains to show that $C$ is an excellent cycle.

    To see that $C$ is a good cycle, observe that any sink in $C$ must have two incoming arcs in the union of $P$ and $Q$. But as $P$ and $Q$ are \revnew{edge-disjoint} directed paths in $N_d$ there is only one vertex for which this holds, namely~$r$. Thus $C$ has only one sink.
    Finally, as $r$ is a leaf-reticulation, there is a leaf \leo{adjacent to $r$} 
    and so $C$ is excellent.
\end{proof}

\begin{lemma}\label{lem:twogood} Let~$N$ be a simple, semi-directed \leon{phylogenetic} network on~$X$ with \mj{at least one reticulation}, and let $A|B$ be any bipartition of~$X$. Then there exist excellent cycles $C_1, C_2$ (possibly with $C_1=C_2$) and leaves $a \in A, b \in B$ such that $C_1$ and $C_2$ overlap and $a$ belongs to $C_1$ and $b$ belongs to $C_2$. \leo{In addition, either $C_1\neq C_2$ and~$a$ and~$b$ are both adjacent to a reticulation or~$C_1= C_2$ and one of~$a$ and~$b$ is adjacent to a reticulation.}
\end{lemma}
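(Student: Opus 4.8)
The plan is to work with the \emph{sink-leaves} of excellent cycles, which will supply exactly the reticulation-adjacent leaves demanded by the statement. For an excellent cycle $C$, its unique sink $r_C$ is a reticulation adjacent to a leaf; write $\ell_C$ for that leaf. Since $\ell_C$ has $r_C$ as its only neighbour and $r_C\in C$, the leaf $\ell_C$ belongs to $C$ and is adjacent to a reticulation. I would also record the basic fact that \emph{every} leaf belongs to some excellent cycle: if $x$ is a leaf with neighbour $v_x$, then because $N$ is simple the graph obtained by deleting all leaves is a blob, hence bridgeless of minimum degree at least two, so $v_x$ has at least two incident edges/arcs between non-leaf vertices. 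By \Cref{lem:excellentCycle} each such edge/arc lies in an excellent cycle, and any such cycle contains $v_x$, so $x$ belongs to it.

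The central device is the \emph{overlap graph} $H$ whose vertices are the excellent cycles of $N$ and whose edges join overlapping pairs. The key step is to show that $H$ is connected. This follows from \Cref{lem:excellentCycle} (every edge/arc between two non-leaf vertices lies in an excellent cycle) together with connectivity of the blob: given two excellent cycles, pick a vertex on each and a path between them through non-leaf vertices; consecutive edges of this path share a vertex and each lies in an excellent cycle, so the excellent cycles containing these edges form a walk in $H$ linking the two original cycles.

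Finally, I would run a discrete intermediate-value argument on the sides of the sink-leaves. Label each excellent cycle by whether $\ell_C\in A$ or $\ell_C\in B$. If all excellent cycles carry the same label, say $A$, then I pick any $b\in B$ and an excellent cycle $C$ to which $b$ belongs; its sink-leaf $a:=\ell_C$ lies in $A$, so taking $C_1=C_2=C$ yields overlapping (equal) excellent cycles with $a\in A$ belonging to $C_1$, $b\in B$ belonging to $C_2$, and $a$ adjacent to a reticulation, which matches the ``$C_1=C_2$'' alternative (the all-$B$ situation is symmetric). Otherwise both labels occur, and by connectivity of $H$ there is a path in $H$ from an $A$-labelled cycle to a $B$-labelled one; along it some consecutive, hence overlapping, pair $C_1,C_2$ has $\ell_{C_1}\in A$ and $\ell_{C_2}\in B$. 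Setting $a:=\ell_{C_1}$ and $b:=\ell_{C_2}$ gives distinct overlapping excellent cycles with $a,b$ both adjacent to reticulations, matching the ``$C_1\neq C_2$'' alternative.

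I expect the main obstacle to be the connectivity of the overlap graph $H$; once one notices that sink-leaves are precisely the reticulation-adjacent leaves one needs, the remainder is bookkeeping. The connectivity hinges on the fact, guaranteed by \Cref{lem:excellentCycle}, that excellent cycles cover every internal edge/arc, so that connectivity of the blob can be transferred to $H$. A secondary point to handle carefully is the structural claim that each leaf's neighbour has at least two incident non-leaf edges/arcs (so that every leaf belongs to an excellent cycle), which relies on $N$ being binary and simple, so that the leaf-deleted graph is bridgeless with minimum degree at least two.
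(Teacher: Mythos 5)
Your proof is correct and rests on essentially the same ingredients as the paper's: Lemma~\ref{lem:excellentCycle} to cover edges/arcs between non-leaf vertices by excellent cycles, connectivity of the leaf-deleted blob to chain overlapping cycles together, and a discrete intermediate-value argument on sink-leaves, with the same final case split (sink-leaves all on one side versus both sides occurring). The only difference is packaging: the paper runs the argument along a single path between the neighbours of a fixed $a'\in A$ and $b'\in B$, so it gets for free that the first and last cycles contain $A$- and $B$-leaves and never needs your two auxiliary claims (global connectivity of the overlap graph, and that every leaf belongs to some excellent cycle), whereas you abstract these out and justify both correctly.
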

\begin{proof}
    Take any $a' \in A$ and $b' \in B$. Let $v_a$ be the non-leaf vertex adjacent to $a'$ and $v_b$ the non-leaf vertex adjacent to $b'$. Since $N$ is connected, there exists a path (not necessarily semi-directed) between $v_a$ and $v_b$, and all vertices on this path are non-leaf vertices.
    Let $v_1 = v_a,v_2,\dots, v_s = v_b$ be the vertices of this path, and let $e_i$ be the edge or arc between $v_i$ and $v_{i+1}$, for each $i \in \{1,\dots, s-1\}$.
    By Lemma~\ref{lem:excellentCycle}, for each  $i \in \{1,\dots, s-1\}$ there exists an excellent cycle $C_i'$ containing $e_i$.
    As each $C_i'$ is an excellent cycle, it has at least one leaf in $A$ or $B$ belonging to it (namely the leaf adjacent to its sink).    
    Note that in particular $a'$ belongs to $C_1'$ since $C_1'$ contains $v_a$, and $b'$ belongs to $C_{s-1}'$ since $C_{s-1}'$ contains $v_b$.  
    Therefore there exists some $i \in \{1,\dots, s-2\}$ such that a leaf $a$ in $A$ belongs to $C_i'$, and a leaf $b$ in $B$ belongs to $C_{i+1}'$. Furthermore $C_i'$ and $C_{i+1}'$  must overlap, as they both contain the vertex $v_{i+1}$.
    Then $C_i'$ and $C_{i+1}'$ are the desired excellent cycles.
    
    \leo{Finally, note that we can choose~$a$ and~$b$ to be both adjacent to a reticulation unless the leaves adjacent to the sinks of the $C_i'$ are all in~$A$ or all in~$B$. If they are all in~$A$, then we can take~$C_1=C_2=C_{s-1}'$ and~$a$ is adjacent to a reticulation. If the leaves adjacent to the sinks of the $C_i'$ are all in~$B$, then we can take~$C_1=C_2=C_1'$ and~$b$ is adjacent to a reticulation.}
\end{proof}

We are now ready to prove Theorem~\ref{thm:splitsCondition}.

\begin{proof}[Proof of Theorem~\ref{thm:splitsCondition}]

For the first direction of the proof, assume that $A|B$ is a CE-split in~$N$ and $|A|, |B|\geq 2$. Let $a_1,a_2 \in A$, $b_1,b_2 \in B$, all pairwise distinct \mj{and let $Y = \{a_1,a_2,b_1,b_2\}$}.
\mj{Recall that $N|_{Y}$ is obtained by applying the suppression operation to $N_{\wedge Y}$ and so the leaf set of  $N|_{Y}$  is therefore $Y$.}
Moreover, \mj{and CE-split in $N$ is also a CE-split in $N_{\wedge Y}$, and } all suppression operations preserve CE-splits (but not the number of corresponding cut-edges). Hence, $\{a_1,a_2\}|\{b_1,b_2\}$ is a CE-split in $N|_{\{a_1,a_2,b_1,b_2\}}$.

To see the reverse direction, we use  induction on the number of non-trivial CE-splits in~$N$. The base case is that~$N$ is simple. To see this case, note that 
\mj{if $A|B$ is a trivial split of $X$ then it is certainly a CE-split in $N$.
So assume that $A|B$ is not a trivial split of $X$. We claim that if $A|B$ is not a CE-split in $N$ then 
	there exist $a_1, a_2 \in A$, $b_1,b_2 \in B$ such that $\{a_1,a_2\}|\{b_1,b_2\}$ is not a CE-split in $N|_{\{a_1,a_2,b_1,b_2\}}$. 
	By contraposition, this completes the proof of this direction for the base case.}

\kh{To see the claim, assume}
\leon{
	that $A|B$ is not a CE-split in $N$.}
By Lemma~\ref{lem:twogood}, there exist \leo{excellent} cycles $C_1, C_2$ and leaves $a_1 \in A, b_1 \in B$ such that $C_1, C_2$ overlap, $a_1$ belongs to $C_1$ and $b_1$ belongs to $C_2$. In addition, either $C_1\neq C_2$ and~$a_1$ and~$b_1$ are both adjacent to a reticulation or~$C_1= C_2$ and one of~$a_1$ and~$b_1$ is adjacent to a reticulation.

Let~$a_2$ be an arbitrary element of~$A\setminus\{a_1\}$ and let~$b_2$ be an arbitrary element of~$B\setminus\{b_1\}$ \kh{which must exist because $A|B$ is not a trivial split of $X$. Put $Y=\{a_1,a_2,b_1,b_2\}$}. We claim that $\{a_1,a_2\}|\{b_1,b_2\}$ is not a CE-split in $N|_Y$. To see this, \kh{note first that since $C_1$ and $C_2$ are excellent,}
both \kh{of them must} have a unique sink adjacent to a leaf \kh{in~$\{a_1,b_1\}$.}
Hence, these sinks and adjacent leaves are not deleted \kh{when constructing $N|_{\wedge Y}$ from $N$.}
Moreover, if~$C_1\neq C_2$, then the cycles~$C_1$ and~$C_2$ are not suppressed by operation (PAS) \kh{to obtain $N|_Y$} because each of them has at least three 
\kh{vertices that each \revnew{is} incident with edges/arcs that do not form part of the cycle}
(one incident to~$a_1$ or~$b_1$ and two belonging to the other cycle). If~$C_1=C_2$, this cycle is also not suppressed by operation (PAS) \kh{to obtain $N|_Y$} because the cycle contains at least 
\kh{three vertices that each \revnew{is} incident with edges/arcs that do not form part of the cycle}
(one incident to~$a_1$, one incident to~$b_1$ and one incident to or on a path to~$a_2$). Finally, the blob containing~$C_1$ and~$C_2$ is not suppressed by operation (BLS) by the same argument. Hence, \kh{although the length of} the cycles~$C_1$ and~$C_2$ may be shortened due to \kh{applied} suppression operations, they still exist (and still overlap) in $N|_{\{a_1,a_2,b_1,b_2\}}$. Since~$a_1$ belongs to~$C_1$ and~$b_1$ belongs to~$C_2$, it follows that $\{a_1,a_2\}|\{b_1,b_2\}$ is not a CE-split in $N|_{\{a_1,a_2,b_1,b_2\}}$, \kh{as claimed}.

\kh{Assume that the theorem holds for all semi-directed phylogenetic networks $N'$ on $X$ and all bipartitions of $X$ if $N'$ has strictly less CE-splits than $N$ and that}
there exists a non-trivial CE-split $P|Q$ in~$N$.

\kh{We first show that $P\subseteq A$, \kh{or} $P\subseteq B$, \kh{or} $Q\subseteq A$ or $Q\subseteq B$ must hold. Assume for contradiction that this is not the case.
Then}~$P$ contains leaves~$a_1\in A$ and~$b_1\in B$ and $Q$ contains leaves~$a_2\in A$ and~$b_2\in B$. Then $\{a_1,b_1\}|\{a_2,b_2\}$ is a CE-split in $N|_{\{a_1,a_2,b_1,b_2\}}$. Hence, $\{a_1,a_2\}|\{b_1,b_2\}$ is not a CE-split in $N|_{\{a_1,a_2,b_1,b_2\}}$, a contradiction.

Hence, we have that $P\subseteq A$, \kh{or} $P\subseteq B$, \kh{or} $Q\subseteq A$ or $Q\subseteq B$. Without loss of generality, assume that~$P\subseteq A$. Let~$\{u,v\}$ be a cut-edge such that 
deleting~$\{u,v\}$ creates two connected components: one connected component~$N_P$ containing~$u$ and all leaves from~$P$ and one connected component containing~$v$ and all leaves from~$Q$. Construct \kh{a} network~$N'$ from~$N_P$ by adding a new leaf~$a^*$ and an edge $\{a^*,u\}$. Let~$A'=(A\setminus P)\cup\{a^*\}$ and~$B'=B$. Note that~$N'$ has at least one non-trivial CE-split less than~$N$. To be able to apply induction \kh{to $N'$}, we need that, for any $a_1,a_2 \in A', b_1,b_2 \in B'$, $\{a_1,a_2\}|\{b_1,b_2\}$ is a CE-split in $N'|_{\{a_1,a_2,b_1,b_2\}}$. If~$a^*\notin\{a_1,a_2\}$ then this is clear because $\{a_1,a_2\}|\{b_1,b_2\}$ is a CE-split in $N|_{\{a_1,a_2,b_1,b_2\}}$ and hence also in $N'|_{\{a_1,a_2,b_1,b_2\}}$. If~$a^*\in\{a_1,a_2\}$ then assume without loss of generality that $a^*=a_1$. Let~$c\in P$. Then $\{c,a_2\}|\{b_1,b_2\}$ is a CE-split in $N|_{\{c,a_2,b_1,b_2\}}$ and hence $\{a_1,a_2\}|\{b_1,b_2\}$ is a CE-split in $N'|_{\{a_1,a_2,b_1,b_2\}}$. Hence, by induction, $A'|B'$ is a CE-split in~$N'$. It follows directly that~$A|B$ is a CE-split in~$N$.
\end{proof}

We conclude this section by noting that, since undirected phylogenetic trees are encoded by their splits, it follows from Theorem~\ref{thm:splitsCondition} that the blob tree of a semi-directed phylogenetic network is uniquely determined by the quarnets of the network. Stated more precisely, we have the following corollary.

\begin{corollary}\label{cor:blobtree-encoding}
Suppose that~$N$ and~$N'$ are semi-directed phylogenetic networks on~$X$ with~$Q(N)\simeq Q(N')$. 
Then $B(N)\cong B(N')$.
\end{corollary}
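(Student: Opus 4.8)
\textbf{Proof plan for Corollary~\ref{cor:blobtree-encoding}.}

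The plan is to reduce the statement about blob trees entirely to the statement about splits, using Theorem~\ref{thm:splitsCondition} as the key tool and the classical fact that unrooted phylogenetic trees are encoded by their sets of splits. The central observation I would make first is that the blob tree $B(N)$ of a semi-directed phylogenetic network, being obtained by contracting every blob to a single vertex, is an unrooted phylogenetic tree on $X$; consequently $B(N)$ is completely determined by its collection of nontrivial splits. Moreover, the nontrivial splits of $B(N)$ correspond exactly to the nontrivial CE-splits of $N$: each cut-edge $\{u,v\}$ of $N$ survives the contraction of blobs (since a cut-edge lies in no blob) and becomes an edge of $B(N)$, and the bipartition of $X$ it induces in $N$ is precisely the split it induces in $B(N)$. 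Thus the set of nontrivial splits displayed by $B(N)$ equals the set of nontrivial CE-splits of $N$.

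First I would establish that the set of nontrivial CE-splits of $N$ depends only on $Q(N)$. This is immediate from Theorem~\ref{thm:splitsCondition}: a nontrivial split $A|B$ of $X$ is a CE-split of $N$ if and only if, for every choice of pairwise distinct $a_1,a_2\in A$ and $b_1,b_2\in B$, the split $\{a_1,a_2\}|\{b_1,b_2\}$ is a CE-split in the quarnet $N|_{\{a_1,a_2,b_1,b_2\}}$. The right-hand condition is phrased purely in terms of the quarnets of $N$, so any two networks with the same quarnet set (up to the relation $\simeq$) must have the same nontrivial CE-splits. Since $Q(N)\simeq Q(N')$ by hypothesis, and since the isomorphism $\psi$ witnessing $\simeq$ respects leaf labels, the quarnet $N|_{\{a_1,a_2,b_1,b_2\}}$ and the corresponding quarnet $N'|_{\{a_1,a_2,b_1,b_2\}}$ agree on whether $\{a_1,a_2\}|\{b_1,b_2\}$ is a CE-split. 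Applying Theorem~\ref{thm:splitsCondition} to both $N$ and $N'$ then yields that $A|B$ is a nontrivial CE-split of $N$ if and only if it is a nontrivial CE-split of $N'$.

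With the two observations combined, $N$ and $N'$ induce the same set of nontrivial splits on $X$. Trivial splits are shared by every phylogenetic tree on $X$, so in fact $B(N)$ and $B(N')$ display exactly the same set of splits. By the split-encoding theorem for unrooted phylogenetic trees (as cited in the excerpt, e.g.~\cite[Theorem 2.7]{dress2012basic}), two unrooted phylogenetic trees on the same leaf set with the same split system are isomorphic, so $B(N)\cong B(N')$, completing the proof.

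I expect the only genuine subtlety — rather than a hard obstacle — to be the bookkeeping that identifies the nontrivial splits of the blob tree with the nontrivial CE-splits of the network, and the verification that the label-preserving bijection underlying $Q(N)\simeq Q(N')$ lets one transfer the CE-split condition quarnet-by-quarnet between $N$ and $N'$. Both of these are routine given the definitions already set up in the excerpt, and the substantive content has been fully absorbed into Theorem~\ref{thm:splitsCondition}; the corollary is then essentially a formal consequence.
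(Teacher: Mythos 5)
Your proposal is correct and follows exactly the paper's own argument: the paper also derives the corollary by observing that the splits of $B(N)$ are precisely the CE-splits of $N$, that Theorem~\ref{thm:splitsCondition} determines these from $Q(N)$, and that unrooted phylogenetic trees are encoded by their splits. The paper states this in a single sentence before the corollary; your write-up simply fills in the same routine bookkeeping explicitly.
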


\section{Level-2 networks}\label{sec:lev2combined}

In this section, we combine the results from Sections~\ref{sec:lev2} and~\ref{sec:cuts} to prove that semi-directed level-2 networks with at least four leaves 
are encoded by their quarnets. For that, we will need the following lemma.

\begin{lemma}\label{lem:threecycles}
    Let $N$ be a semi-directed, strict level-$k$ \leon{phylogenetic} network on $X$, $|X|\ge 4$, for $k \geq 3$. Then there exists a quarnet~$q\in Q(N)$ such that~$q$ is not level-2.
\end{lemma}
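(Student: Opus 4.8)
The plan is to show that a strict level-$k$ network with $k\geq 3$ must contain a blob with at least three reticulations, and then to ``capture'' enough of the reticulation structure of this blob inside a single quarnet so that the resulting $4$-leaf network is forced to be at least level-$3$. Since $N$ is strict level-$k$ with $k\geq 3$, some blob $B$ of $N$ contains at least three reticulations $r_1,r_2,r_3$. The key idea is to select four leaves $a_1,a_2,a_3,a_4\in X$ whose restriction $N|_{\{a_1,\dots,a_4\}}$ retains at least three reticulations, hence cannot be level-$2$.

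First I would exploit the machinery already developed in \Cref{sec:cuts}. Each reticulation $r_i$ is the sink of some excellent cycle: more precisely, by \Cref{lem:excellentCycle} applied to an arc/edge incident to $r_i$ inside $B$, each $r_i$ lies on an excellent cycle $C_i$ whose sink is adjacent to a leaf. The plan is to pick, for each $i$, a leaf $x_i$ adjacent to (or hanging just below) a reticulation associated with $C_i$, so that these reticulations are ``pinned'' by leaves that will survive the restriction. The reason the leaves associated with the sinks survive is exactly the argument used in the proof of \Cref{thm:splitsCondition}: the sinks of excellent cycles and their adjacent leaves are never deleted when forming $N_{\wedge Y}$, and the overlapping-cycle structure prevents the blob from being collapsed by (BLS) or (PAS).

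The main obstacle will be controlling how many distinct reticulations survive the suppression operations and ensuring the restriction does not accidentally merge or delete reticulations so as to drop below level-$3$. Concretely, I would argue that with a careful choice of a $4$-element set $Y$ containing leaves attached to (the neighbourhoods of) three distinct reticulations of $B$, the three corresponding excellent cycles remain present and pairwise overlapping in $N|_Y$, so the blob of $N|_Y$ still contains three reticulations. The delicate point is that a quarnet has only four leaves, so at most one reticulation can be a leaf-reticulation with its own private leaf; the other reticulations must be detectable through the cycle structure rather than through adjacent leaves. I would handle this by using the fact (from the discussion preceding \Cref{lem:excellentCycle} and from \Cref{lem:twogood}) that overlapping excellent cycles force multiple sinks to persist, and then count reticulations in the surviving blob. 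Since any level-$2$ network has at most two reticulations per blob, exhibiting three surviving reticulations in the (single) blob of $N|_Y$ shows $q=N|_Y$ is not level-$2$, completing the proof.

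If a direct four-leaf selection proves too tight, the fallback is an inductive or reductive approach: first reduce to the simple case by repeatedly contracting trivial cut-edge structure (justified because restrictions and CE-splits interact well, as in \Cref{thm:splitsCondition}), so that $N$ may be assumed simple with a single blob containing $\geq 3$ reticulations; then classify the possible low-level generators and verify directly that some restriction to four leaves yields a non-level-$2$ quarnet. I expect the hardest part to be the bookkeeping that guarantees three reticulations genuinely survive the suppression steps simultaneously, rather than two of them being suppressed or identified during the passage from $N_{\wedge Y}$ to $\supp{N_{\wedge Y}}$.
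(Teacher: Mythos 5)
Your proposal assembles the right ingredients (excellent cycles via \Cref{lem:excellentCycle}, the survival-under-suppression arguments from the proof of \Cref{thm:splitsCondition}), but it has a genuine gap at exactly the point you flag as "delicate": nothing in your construction forces the three excellent cycles to overlap, or even to end up in a common blob of the quarnet. You choose $C_1,C_2,C_3$ independently, one through an edge incident to each chosen reticulation $r_i$; for large $k$ these cycles can be pairwise vertex-disjoint inside the blob $B$, and then the claim that they "remain present and pairwise overlapping in $N|_Y$" has no justification. If the connecting structure between them collapses to cut-edges during suppression, the quarnet can end up with three separate level-$1$ blobs, i.e.\ a level-$1$ quarnet, and the argument fails. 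A second concrete problem: the leaves you pin are adjacent to the \emph{sinks} of the $C_i$, not to the $r_i$ themselves (a reticulation not adjacent to a leaf is never the sink of an excellent cycle), and distinct $r_i$ do not guarantee distinct sinks; so you may pin fewer than three leaf-reticulations. Your parenthetical claim that a quarnet can have at most one leaf-reticulation is also false (the level-2 generator of Figure~\ref{fig:level2gens-labelled}(b) already yields quarnets with two), and \Cref{lem:twogood} only ever produces \emph{two} overlapping cycles, so it cannot supply the three-reticulation count you need.

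The paper closes this gap with a construction your proposal lacks: a greedy chain. It starts with one excellent cycle $C_1$, and repeatedly picks a new edge incident to (but not inside) the current union $U$ of cycles, applies \Cref{lem:excellentCycle} to get a new excellent cycle, and adds the leaf below its sink to $A$ — so overlap with the existing union is guaranteed \emph{by construction}, and the union stays 2-edge-connected. The loop terminates in one of two cases, each handled separately: either $|A|=4$ (then $|\cC|\ge 4$ chained cycles force at least three reticulations in one surviving blob, and an explicit argument shows no cycle in the chain can be killed by (PAS) or (BLS)), or $U=B$ before four distinct leaves are found (then the entire blob $B$, with its $k\ge 3$ reticulations, survives in the quarnet, and $A$ is padded with arbitrary leaves). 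Note also that this two-case structure is exactly what handles your "too tight" worry about having only four leaves. Your fallback — reduce to the simple case and classify generators — cannot rescue the argument either, since the lemma must hold for all $k\ge 3$ and there is no finite classification of level-$k$ generators over all $k$.
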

\begin{proof}

\begin{figure}
    \centerline{\includegraphics{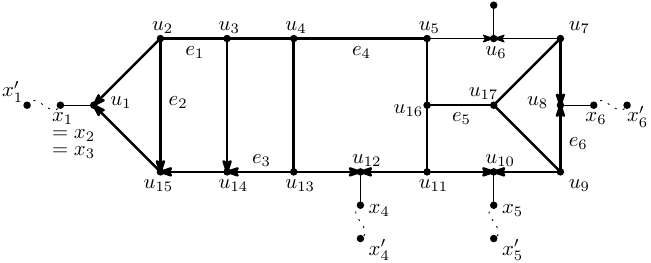}}
    \caption{\label{fig:threecycles} Illustration used in the proof of Lemma~\ref{lem:threecycles}. The solid edges indicate the network~$N_B$. The dotted edges indicate paths outside~$N_B$ to leaves of~$N$. Bold edges indicate the final~$U$.
    The indicated edges~$e_1,\ldots ,e_6$ are one possibility for the edges chosen in the proof of Lemma~\ref{lem:threecycles}. In that case,~$C_1,\ldots ,C_6$ could be the excellent cycles with vertices $(u_1,u_2,u_3,u_{14},u_{15},u_1)$, $(u_1,u_2,u_{15},u_1)$, $(u_1,u_2,u_3,u_4,u_{13},u_{14},u_{15},u_1)$, $(u_4,u_5,u_{16},u_{11},u_{12},u_{13},u_4)$, $(u_9,u_{10},u_{11},u_{16},u_{17},u_9)$ and $(u_7,u_8,u_9,u_{17},u_7)$ respectively. This leads to the quarnet~$q=N|_{\{x'_1,x'_4,x'_5,x'_6\}}$, which is not level-$2$.}
\end{figure}

Consider any blob~$B$ of~$N$ with exactly $k$ reticulations. Let~$N_B$ be the semi-directed, simple, strict level-$k$  network obtained from~$N$ by deleting all vertices that are not in~$B$ and do not have an adjacent vertex that is in~$B$.

We construct a set~$A\subseteq L(N_B)$ 
and a set~$\cC$ of excellent cycles with~$|\cC|\geq 3$ in~$N_B$ such that each~$C\in\cC$ overlaps with at least one~$C'\in\cC\setminus \{C\}$ as follows. See Figure~\ref{fig:threecycles} for an example.

Let~$e_1$ be any edge/arc of~$N_B$ between two non-leaf vertices. Then, by Lemma~\ref{lem:excellentCycle}, there exists an excellent cycle~$C_1$ in~$N_B$ containing~$e_1$. Let~$x_1$ be the leaf of~$N_B$ below the sink of~$C_1$. Initialize~$A=\{x_1\}$,~$\cC=\{C_1\}$ and~$U=C_1$.

Repeat the following while~$U\neq B$ and~$|A|<4$. Let~$i=|\cC|+1$ and~$e_i$ any edge/arc between two non-leaf vertices of~$N_B$, such that~$e_i$ is not in~$U$ but is incident to at least one vertex in~$U$. 
Note that~$e_i$ exists since~$U\neq B$. By Lemma~\ref{lem:excellentCycle}, there exists an excellent cycle~$C_i$ in~$N_B$ containing~$e_i$. Note that~$C_i\neq C$ for all~$C\in\cC$ and that~$C_i$ overlaps with at least one~$C\in\cC$. Let~$x_i$ be the leaf of~$N_B$ below the sink of~$C_i$. Add~$C_i$ to~$\cC$, add~$x_i$ to~$A$ (note that~$x_i$ may already be in~$A$, in which case~$A$ remains unchanged) and update~$U$ to be the graph union of the cycles in~$\cC$. 

First suppose~$|A|=4$. In this case, we have~$|\cC|\geq 4$ and hence~$U$ is not level-2. 
(To see this, note that~$C_1\in\cC$ contains a leaf reticulation,~$C_2\in\cC\setminus\{C_1\}$ either contains a different leaf reticulation or it joins~$C_1$ in a different reticulation and~$C_3\in\cC\setminus\{C_1,C_2\}$ either has a leaf reticulation that is different from the leaf reticulations of~$C_1$ and~$C_2$ or it joins~$C_1\cup C_2$ in a third reticulation.)
Consider the quarnet~$q_B=N_B|_A$. We now show that~$q_B$ is not level-2. To see this, first recall that each~$C\in \cC$ has a unique sink with a leaf in~$A$ below it and sinks are not deleted by vertex suppression operations. Moreover, none of the cycles~$C\in\cC$ can be suppressed by operation (PAS). To see this, recall that~$C$ corresponds to an excellent cycle in~$N_B$ and hence its sink is incident to a cut-edge in~$N_B$. Moreover, since~$C$ overlaps with at least one~$C'\in\cC\setminus\{C\}$, it either has a chord (i.e., an edge/arc that is not in~$C$ but is incident to two vertices of~$C$) or three incident edge/arcs (one where~$C'$ leaves~$C$, one where~$C'$ joins~$C$ again, and one incident to the sink of~$C$). In either case,~$C$ is not suppressed by (PAS).
Finally, the blob suppression operation (BLS) is not applicable to~$U$ because it has at least four incident cut-edges (incident to the leaves in~$A$). Hence~$q_B$ is not level-$2$. Let~$A'\subseteq X$ consist of, for each~$x_i\in A$, one leaf~$x'_i$ of~$N$ that is below~$x_i$ in~$N$. Then~$q=N|_{A'}$ is equal to~$q_B$ with each leaf~$x_i$ replaced by~$x_i'$. Hence,~$q$ is not level-2.

Now consider the case that~$|A|<4$. In this case we have $U=B$ because otherwise the while loop would not have terminated. Let~$A'\subseteq X$ contain, for each~$x_i\in A$, one leaf~$x'_i$ of~$N$ that is below~$x_i$ in~$N$. In addition, add arbitrary leaves from~$X$ to~$A'$ until~$|A'|=4$. Then~$q=N|_{A'}$ contains a blob~$U=B$ in which no suppression operations are applicable since~$B$ is a blob of~$N$ which is phylogenetic. Hence,~$q$ is not level-2 since it contains~$B$ which is not level-2.
\end{proof}

We are now ready to prove the main result of this section.

\begin{theorem}\label{the:class-lev2}
The class of semi-directed, level-2, \leon{binary phylogenetic} networks with at least four leaves is encoded by quarnets.
\end{theorem}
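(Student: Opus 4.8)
The plan is to show that any semi-directed phylogenetic network $N'$ on the common leaf set $X$ with $Q(N')\simeq Q(N)$ must be isomorphic to $N$, where $N$ is semi-directed, binary, level-$2$ with $|X|\ge 4$. Note that in contrast to \Cref{thm:lev2} we now prove full (not merely weak) encoding, so $N'$ is \emph{a priori} an arbitrary semi-directed phylogenetic network of any level. I would proceed in four steps: first bring $N'$ into the level-$2$ world, then use the split machinery of \Cref{sec:cuts} to match the coarse blob-tree structure, then match each blob individually using the simple-case result \Cref{thm:lev2}, and finally glue these local isomorphisms into a global one. For the first step, suppose $N'$ were strict level-$k$ for some $k\ge 3$. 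Then by \Cref{lem:threecycles} it would have a quarnet that is not level-$2$. But restricting to a subset of leaves can never increase the level (none of the suppression operations (BLS), (PAS), (V1)--(V3) increase the number of reticulations inside a blob), so every quarnet of $N$ has level at most $2$; since $Q(N)\simeq Q(N')$ this contradicts the existence of a non-level-$2$ quarnet of $N'$. Hence $N'$ is level-$2$.

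Next, by \Cref{thm:splitsCondition} the CE-splits of a semi-directed phylogenetic network are determined by its quarnets, so $N$ and $N'$ have exactly the same set of CE-splits and hence, by \Cref{cor:blobtree-encoding}, isomorphic blob trees $B(N)\cong B(N')$. This isomorphism matches each blob $B$ of $N$ with a blob $B'$ of $N'$ and identifies the incident cut-edges of $B$ with those of $B'$ via their induced CE-splits; equivalently, the incident cut-edges of $B$ induce a partition $A_1,\dots,A_m$ of $X$ with $m\ge 3$, and the incident cut-edges of $B'$ induce the same partition.

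I would then match each blob separately. Fix a blob $B$ with incident-cut-edge partition $A_1,\dots,A_m$, pick a representative $r_i\in A_i$, set $R=\{r_1,\dots,r_m\}$, and let $\hat B := N|_R$; this is the simple network of level at most $2$ obtained by collapsing each pendant part to a single leaf, and likewise $\hat B' := N'|_R$. Using transitivity of restriction, $\hat B|_S = N|_S$ for every $S\subseteq R$, so the quarnets of $\hat B$ sit inside $Q(N)$ and those of $\hat B'$ inside $Q(N')$; since the partitions agree, $Q(\hat B)\simeq Q(\hat B')$. If $m\ge 4$, then $\hat B,\hat B'$ are simple level-at-most-$2$ networks on at least four leaves, so \Cref{thm:lev2} yields $\hat B\cong \hat B'$ via an isomorphism fixing each $r_i$, and hence respecting the identified cut-edges. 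If $m=3$ the local network has only three leaves and carries no quarnets, so \Cref{thm:lev2} does not apply directly; here I would use that, since $|X|\ge 4$, some part (say $A_1$) contains a second leaf $w$, and consider the genuine quarnet $N|_Y$ with $Y=\{r_1,r_2,r_3,w\}$. As $N|_Y\cong N'|_Y$ and $\hat B=(N|_Y)|_{\{r_1,r_2,r_3\}}$ by transitivity, we again obtain $\hat B\cong \hat B'$ respecting the cut-edges. In both cases, crucially, \Cref{prop:sides} guarantees that the order in which cut-edges attach along each generator-side (see \Cref{fig:level2gens-labelled}) is recovered.

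Finally, a semi-directed phylogenetic network is reconstructed uniquely from its blob tree together with, for each blob, its internal structure and the attachment vertex of each incident cut-edge. Since the blob tree matches and each decorated blob matches consistently with the cut-edge identification, I would assemble the per-blob isomorphisms along the matched cut-edges into a global isomorphism $N\cong N'$. I expect the main obstacle to be the blob-matching step, specifically two points: verifying that $\hat B=N|_R$ really is the expected simple blob --- that restricting to one leaf per incident cut-edge preserves the entire blob together with both its reticulations, so that \Cref{thm:lev2} and \Cref{prop:sides} genuinely recover the ordering of cut-edges along each side --- and handling the degenerate $m=3$ case, where the local network carries no quarnets and one must pass through a quarnet of $N$ with a doubled side. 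The gluing step is conceptually routine but relies on the per-side orderings from \Cref{prop:sides} being exactly what is needed for the local isomorphisms to be compatible across the whole network.
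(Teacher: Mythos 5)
Your proposal is sound in outline but takes a genuinely different route from the paper. The two agree on the opening move (ruling out level~$\geq 3$ for $N'$ via Lemma~\ref{lem:threecycles}; your remark that restriction can never raise the level is a justification the paper leaves implicit) and both rely on Theorem~\ref{thm:splitsCondition} and Theorem~\ref{thm:lev2}. After that you diverge. The paper proceeds by induction on the number of nontrivial CE-splits of $N$: it picks one cut-edge realizing a split $A|B$, forms the restrictions $N|_{A\cup\{b\}}$ and $N|_{B\cup\{a\}}$ with a single marker leaf on each side, applies the inductive hypothesis to both (adding a second leaf $b'$ when $|A|=2$ so that the restriction still has four leaves), and reassembles $N$ across that one cut-edge. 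You instead match the networks blob by blob: equal CE-split sets give isomorphic blob trees, each blob is recovered as $\hat B = N|_R$ with one representative per incident cut-edge part, Theorem~\ref{thm:lev2} matches $\hat B$ with $\hat B'$, and the local isomorphisms are glued along the blob tree. Your route gives a more explicit structural picture (network $=$ blob tree plus decorated blobs), which is closer to a reconstruction algorithm, but it carries technical burdens the induction avoids: (i) the crux you flag but do not prove, namely that $N|_R$ really is the blob $B$ with each pendant part collapsed onto its representative at the original attachment vertex --- this is true, and can be established by iterating the single-pendant collapse that the paper itself asserts (``$N|_{A\cup\{b\}}$ can be obtained from $N_A$ by adding leaf $b$ with an edge $\{u,b\}$'') together with transitivity of restriction; (ii) transitivity of restriction itself, which the paper also uses implicitly when identifying $Q(N|_{A\cup\{b\}})$ with $\{q\in Q(N): L(q)\subseteq A\cup\{b\}\}$, so you are at parity there; (iii) the degenerate blobs with exactly three incident cut-edges, which carry no quarnets --- your fix via the genuine quarnet on $\{r_1,r_2,r_3,w\}$ and restricting it back down is correct, and mirrors the paper's $|A|=2$ fix; and (iv) the gluing step, which does go through because each local isomorphism fixes the labels $r_i$ and hence sends attachment vertices to the corresponding attachment vertices, and because in a phylogenetic network distinct cut-edges induce distinct CE-splits. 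In exchange, the paper's induction handles all of this in one stroke: only the single-cut-edge collapse is needed, the gluing is the reattachment of one edge, and no case analysis on the number of cut-edges per blob arises.
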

\begin{proof}
Let~$N$ be a semi-directed level-2 \leon{phylogenetic} network with at least four leaves. Let $X$ be the leaf set of $N$. Let~$N'$ be a semi-directed network on~$X$ with~$Q(N)\simeq Q(N')$. We need to show that~$N\cong N'$.

First we prove that~$N'$ has level-2. Assume for a contradiction that~$N'$ is strict level-$k$ with $k\geq 3$. By Lemma~\ref{lem:threecycles}, there exists a quarnet~$q\in Q(N')$ that is not level-$2$. This leads to a contradiction since~$q\in Q(N) \simeq Q(N')$ and~$N$ has level-2. \kh{Thus, $N'$ is a level-2 network.}

We now prove that $N\cong N'$ by induction on the number~$s$ of nontrivial CE-splits in~$N$. 

If~$s=0$, then~$N$ is a semi-directed, simple level-2 phylogenetic network on~$X$. \kh{Since $Q(N)\simeq Q(N')$ it follows that $N'$ is also a semi-directed, simple, level-2 phylogenetic network.} By Theorem~\ref{thm:lev2}, $N\cong N'$ \kh{follows}.

\kh{So assume that} $s\geq 1$. \kh{Observe that, by Theorem~\ref{thm:splitsCondition},~$N'$ has the same CE-splits as~$N$.} Consider a nontrivial CE-split~$A|B$ of~$N$ and~$N'$ (which exists since~$s\geq 1$).
Pick some~$a\in A$ and~$b\in B$ and consider the networks $N|_{A\cup\{b\}}$ and $N|_{B\cup\{a\}}$. Since $
Q(N|_{A\cup\{b\}}) \simeq \{q\in Q(N) \mid L(q)\subseteq A\cup\{b\}\}$,
$Q(N'|_{A\cup\{b\}}) \simeq \{q\in Q(N') \mid L(q)\subseteq A\cup\{b\}\}$,
and  $Q(N)\simeq Q(N')$,
we have that $Q(N|_{A\cup\{b\}})\simeq Q(N'|_{A\cup\{b\}})$. If we also have $|A\cup\{b\}|\geq 4$ then it follows by induction that $N|_{A\cup\{b\}}\cong N'|_{A\cup\{b\}}$. Otherwise, we have $|A|=2$ and there exists $b'\in B$ with~$b'\neq b$.
It then follows directly that $N|_{A\cup\{b,b'\}}\cong N'|_{A\cup\{b,b'\}}$ (since both are quarnets) and hence that $N|_{A\cup\{b\}}\cong N'|_{A\cup\{b\}}$ (since both can be obtained from $N|_{A\cup\{b,b'\}}$ by deleting~$b'$ and applying the suppression operation).

By symmetry, we also have that $N|_{B\cup\{a\}}\cong N'|_{B\cup\{a\}}$.

Since $A|B$ is a CE-split, there exists a cut-edge $\{u,v\}$ of $N$ such that the removal of it results in two connected graphs~$N_A,N_B$ with leaf sets $A$ and $B$, respectively. Without loss of generality, $u$ is in~$N_A$ and $v$ is in~$N_B$. Observe that, by definition, $N|_{A\cup\{b\}}$ can be obtained from~$N_A$ by adding leaf~$b$ with an edge~$\{u,b\}$.
Similarly, $N|_{B\cup\{a\}}$ can be obtained from~$N_B$ by adding leaf~$a$ with an edge~$\{v,a\}$. 
Then,~$N$ can be obtained from~$N|_{A\cup\{b\}}$ and $N|_{B\cup\{a\}}$ by deleting~$b$ and its incident edge from~$N|_{A\cup\{b\}}$, deleting~$a$ and its incident edge from~$N|_{B\cup\{a\}}$ and adding an edge $\{u,v\}$. In exactly the same way,~$N'$ can be obtained from~$N'|_{A\cup\{b\}}$ and $N'|_{B\cup\{a\}}$. Since $N|_{A\cup\{b\}}\cong N'|_{A\cup\{b\}}$ and $N|_{B\cup\{a\}}\cong N'|_{B\cup\{a\}}$, it follows that $N\cong N'$.
\end{proof}

\section{Discussion}\label{sec:discussion}

In this paper we have shown that the set of quarnets of a semi-directed level-2 phylogenetic network encodes the network, but that  this is no longer necessarily true
for level-3 networks. \rev{In addition, we proved that the blob tree of a semi-directed phylogenetic network is encoded by the quarnets of the network for any level.} 

There are several directions that could be of interest to be investigated next. First, it could be useful for practical applications to develop algorithms
that compute semi-directed level-2 networks from collections of quarnets. As a first
step in this direction it would be interesting to develop an algorithm that 
computes a semi-directed level-2 network from its set of quarnets \rev{(see~\cite{frohn2024reconstructing} for such an algorithm for level-$1$). We could then adapt the algorithm to robustly deal with arbitrary collections of level-2 quarnets, similar to the \textsc{Squirrel} and \textsc{NANUQ+} algorithms for 
level-$1$~\cite{allman2024nanuq+,holtgrefe2024squirrel}.

An $O(n^3)$-time algorithm for constructing the blob tree of a semi-directed phylogenetic network of any level from quarnets was recently developed~\cite{frohn2024reconstructing} based on the results in this paper.} \revnew{An interesting open problem is whether the blob tree can be reconstructed from only $O(n^2)$ quarnets and whether this is possible in $O(n^2)$ time. From a practical point-of-view it is important to develop robust blob tree construction methods. If~$n$ is not too big, practical algorithms could use information from all $O(n^4)$ quarnets~\cite{allman2024tinnik,holtgrefe2024squirrel}, but when considering real data such methods currently struggle to decide how resolved to make the blob tree.}

In another direction, it could be interesting to study {\em inference rules} for semi-directed quarnets. For phylogenetic trees, inference rules have been studied for some years, where they are used to infer new trees from collections of trees (see e.g. \cite[Section 6.7]{semple2003phylogenetics}).  In~\cite{huber2018quarnet}, certain inference rules are given for level-1 undirected networks on four leaves, and it would be interesting to see whether similar rules can be developed for the semi-directed case. In a related direction, it could also be worth investigating approaches for deciding whether or not an arbitrary collection of quarnets (i.e. not necessarily one quarnet for each quartet of leaves) can be displayed by some semi-directed phylogenetic network. Note, however, that it is NP-complete to decide whether there is a tree that displays an arbitrary collection of quartet trees~\cite{steel1992complexity}.

Although we have shown that semi-directed level-3 networks are, in general, not encoded by their quarnets, it could be of interest to find a maximal subclass of level-3 (or higher) networks that is encoded by quarnets. In particular, we conjecture that the class of all semi-directed binary \leon{simple} level-3 networks, except for the networks~$N_1,N_2$ in Figure~\ref{fig:lev3} and networks that can be obtained from~$N_1$ and~$N_2$ by inserting leaves on the side of~$a$ and~$b$ (in any order), is encoded by quarnets.

Finally, one major challenge that remains is to develop robust ways to
construct quarnets from real data. This problem
has generated considerable interest in the area of algebraic geometry, 
where the problem of identifying level-1 quarnets using algebraic
invariants arising from models of sequence evolution has yielded some 
positive results on network identifiability (see e.g. \cite{gross2021distinguishing}).
Some recent progress has also been made in \cite{cummings2023computing,martin2023algebraic} for computing 
level-1 quarnets for real data using algebraic invariants, but extending these approaches to level-2 quarnets appears to be a challenging problem~\cite{ardiyansyah2021distinguishing}.

\mj{{\bf Acknowledgements.} We are very grateful to Jannik Schestag for fruitful discussions contributing to the development of this paper \rev{and to Niels Holtgrefe for providing Figure~\ref{fig:squirrel}.}}

{\bf Data availability.} No data was used.

\bibliographystyle{plain} 
\bibliography{bibliography}

\appendix

\section{Omitted proofs}

In this appendix, we provide the previously omitted proofs for Lemmas~\ref{lem:semiDirectedPreserved},~\ref{lem:SuppWellDefinedBrief} and~\ref{lem:wedges}.


\textbf{\Cref{lem:semiDirectedPreserved}.}   
    \emph{Let $N$ be a semi-directed network.
    If $N'$ is derived from $N$ by a single application of (V1), (V2), (BLS) or (PAS), then $N'$ is also a semi-directed network.}
\begin{proof}

If the operation is of type (BLS), then 
\mj{let $B$ be an affected blob in $N$. Note that there is a corresponding blob in any rooting $N_d$ of $N$. If this blob has one incoming and one outgoing arc in $N_d$, then}
the same operation is also applicable to $N_d$ and applying it results in a rooting of~$N'$. 
\mj{Otherwise, the blob corresponding to $B$ has two outgoing arcs and no incoming arcs in $N_d$. Then replacing this blob with a single root vertex again gives a directed network which is a rooting of~$N'$.} 
Hence,~$N'$ is semi-directed.

If the operation is of type (PAS), we claim that there exists a rooting~$N_d$ of~$N$ such that the edge/arc that is subdivided by the root is not one of the suppressed parallel arcs $(u,v)$. To see this, note that by definition of (PAS) vertex~$u$ has degree $3$ and hence has an incident edge~$\{u,w\}$. If there exists a rooting of~$N$ with the root subdividing one of the arcs~$(u,v)$, then there also exists a rooting~$N_d$ of~$N$ with the root subdividing~$\{u,w\}$. Then (PAS) is applicable to~$S_d$ giving a rooting of~$N'$. Hence,~$N'$ is semi-directed.

If the operation is of type (V1) or (V2), we claim that, unless~$N$ has only three vertices, there exists a rooting~$N_d$ of~$N$ such that the edge/arc that is subdivided by the root is not incident to the suppressed vertex~$v$. Let~$\{u,v\}$ be an edge incident to~$v$ such that~$u$ is not a leaf (which exists unless~$N$ has exactly three vertices). Then~$u$ has at least one other incident edge/arc, say to vertex~$p$. If there exists a rooting of~$N$ with the root subdividing one of the edge/arcs incident to~$v$, then there also exists a rooting~$N_d$ of~$N$ with the root subdividing the edge/arc between~$u$ and~$p$. Then suppression operation (V3) is applicable to~$N_d$ giving a rooting of~$N'$. Hence,~$N'$ is semi-directed. Finally, if~$S$ contains exactly three vertices then~$N'$ consists of two vertices connected by an edge and it is clear that~$N'$ is again semi-directed.
\end{proof}


\mj{We now turn our attention to the proof of \Cref{lem:SuppWellDefinedBrief}, i.e. that $\supp N$ is well-defined on directed and semi-directed networks.} \leonn{For this, we need some additional definitions and lemmas.}

Let~$N$ be a
network. Call a subgraph $Z$ of $N$ a subgraph \emph{from $u$ to $w$} if all arcs/edges in $Z$ are on some semi-directed path from $u$ to $w$, and $N$ has no arcs/edges incident to $V(Z)\setminus\{u,w\}$ except for those in $E(Z)$.
We call $Int(Z) := V(Z)\setminus\{u,w\}$ the \emph{internal vertices of $Z$}. \leonn{Furthermore, if~$A$ is a subset of the vertices of~$N$, then $N[A]$ is used to denote the subnetwork of~$N$ induced by~$A$, i.e., the subnetwork obtained by deleting all vertices not in~$A$.}

We now characterize which vertices will be suppressed by the suppression operation in a directed network. We will show that these are precisely the internal vertices of subgraphs of the following type. Define the \emph{\mj{directed} SP-graphs (suppressed graphs)} as follows:

\begin{itemize}
    \item (single arc) the graph $(V = \{u,w\}, E =\{(u,w)\})$ is a directed SP-graph from $u$ to $w$.
    \item (parallel arcs) the graph with $V = \{u,v\}$ and parallel arcs $(u,v)$ is a directed SP-graph from~$u$ to~$v$. 
    \item (series) If $Z_1 = (V_1,E_1)$ is a directed SP-graph from $u$ to $v$ and $Z_2 = (V_2,E_2)$ is a  SP-graph from $v$ to $w$ with $V_1\cap V_2 = \{v\}$, then $(V_1\cup V_2, E_1 \cup E_2)$ is a SP-graph from $u$ to $w$. 
    \item (recursion) If $Z$ is a directed SP-graph from $u$ to $w$ and $Z'$ an SP-subgraph of $Z$ from $u'$ to $w'$
    (i.e. $Z'$ is a subgraph of $Z$ that is a directed SP-graph), where $u'$ appears before $w'$ in a 
    \mj{directed path} from $u$ to $w$,
    then the result of replacing $Z'$ with another directed SP-graph from $u'$ to
    $w'$ is also a directed SP-graph from $u$ to $w$.    
 \end{itemize}

\begin{figure}
    \centering
    \includegraphics{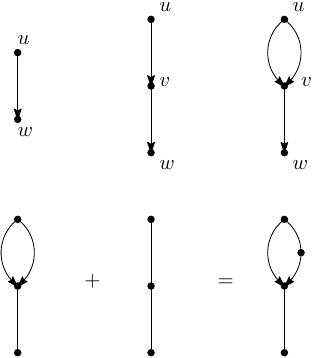}
    \caption{Examples of SP-graphs.}
    \label{fig:sp-graphs}
\end{figure}

We note without proof the following properties of a \mj{directed} SP-graph~$Z$: 
$Z$ has a single vertex $u$ of indegree $0$ and a single vertex $w$ of outdegree $0$, and all arcs of $Z$ are on a directed path from $u$ to $w$. Every \mj{directed} SP-graph with more than one arc has either a vertex of degree $2$, or a pair of parallel arcs.

\mj{Define the \emph{semi-directed SP-graphs} to be the mixed graphs that can be derived from a directed SP-graph by unorienting all arcs except for those entering reticulations. We call a mixed graph an \emph{SP-graph} if it is a directed or semi-directed SP-graph.}

 When $Z$ is an SP-subgraph from $\rho$ to $w$ in $N$ for $\rho$ the root of $N$, and $Z$ contains both out-arcs of $\rho$, then we say $Z$ is \emph{degenerate}.

 \begin{lemma}\label{lem:SPsubgraphs}
     Let $N_1,N_2$ be networks such that $N_2$ is derived from $N_1$ by an application of (V1) or (V2) \mj{or (V3)}  or (PAS), and let $\{v^*\} = V(N_1)\setminus V(N_2)$.

     Then for any $u,w \in V(N_2)$, it holds that $N_1$ has an SP-subgraph from $u$ to $w$ if and only if $N_2$ has an SP-subgraph from $u$ to $w$.
     In particular, if $Z$ is an SP-subgraph from $u$ to $w$ in $N_i$ for $i \in \{1,2\}$, there exists an SP-subgraph $Z'$ from $u$ to $w$ in $N_{3-i}$ with $V(Z)\Delta V(Z') \subseteq \{v^*\}$, and $Z'$ is degenerate if and only if $Z$ is.
     Furthermore $w$ has the same number of incoming arcs and incident edges in $Z$ as in $Z'$.
 \end{lemma}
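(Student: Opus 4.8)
The plan is to prove directly the stronger ``in particular'' statement, as it immediately yields the biconditional: exhibiting for each SP-subgraph $Z$ from $u$ to $w$ in $N_i$ a corresponding $Z'$ in $N_{3-i}$, and then reading the result off with $i=1$ and with $i=2$, gives both implications. Write $v^*$ for the suppressed vertex, let $p,q$ be its two neighbours for (V1)--(V3) and $p,r$ its relevant neighbours for (PAS), let $f$ denote the single arc/edge created by the operation (namely $\{p,q\}$ or $(p,q)$ produced by (V1)--(V3), and $(p,r)$ or $\{p,r\}$ produced by (PAS)), and call the small configuration at $v^*$ that $f$ replaces the \emph{block}: for (V1)--(V3) it is the series of the two arcs/edges incident with $v^*$, and for (PAS) it is the two parallel arcs from $p$ to $v^*$ in series with the arc/edge from $v^*$ to $r$. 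The two key structural facts I would use are that $v^* \notin \{u,w\}$ (since $u,w\in V(N_2)$ but $v^*\notin V(N_2)$), and that both the block and $f$ are SP-graphs with the same pair of endpoints, so that the recursion rule of the SP-graph definition lets me swap one for the other inside a larger SP-subgraph; for semi-directed SP-graphs this is applied via the underlying directed SP-graph.

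For the forward direction I take $Z \subseteq N_1$ and split on the position of $v^*$, which by the SP-subgraph condition (every vertex of $Int(Z)$ has all its incident arcs/edges inside $Z$) is either internal to $Z$ or absent from $Z$. If $v^* \in Int(Z)$, then all incident arcs/edges of $v^*$, and hence the entire block, lie in $Z$; replacing the block by $f$ yields, via the recursion rule, an SP-subgraph $Z'$ of $N_2$ with $V(Z') = V(Z)\setminus\{v^*\}$. If $v^* \notin V(Z)$, then no arc/edge of $Z$ is incident with $v^*$, so every arc/edge of $Z$ survives in $N_2$; moreover no neighbour of $v^*$ is internal to $Z$, for otherwise the arc/edge joining that neighbour to $v^*$ would be incident with an internal vertex of $Z$ yet absent from $Z$. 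Hence $f$ meets $Z$ only at endpoints or outside $Z$ and does not violate the SP-subgraph condition, so $Z' = Z$ works.

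The backward direction, with $Z \subseteq N_2$, is symmetric but splits instead on whether $Z$ uses $f$. If $f \in E(Z)$, I re-expand $f$ into the block (subdividing it by $v^*$ for (V1)--(V3), or reinstating the parallel arcs together with the arc/edge to $r$ for (PAS)); by the series and recursion rules this is again an SP-subgraph $Z'$ of $N_1$, now with $V(Z') = V(Z)\cup\{v^*\}$. If $f \notin E(Z)$, then all arcs/edges of $Z$ already lie in $N_1$, and the arcs/edges present only in $N_1$ all sit at $v^*$ and its neighbours; by the same argument as above no such neighbour can be internal to $Z$ (else $f$, being incident with it in $N_2$, would have to belong to $Z$), so these extra arcs/edges touch no internal vertex of $Z$ and $Z' = Z$ is a valid SP-subgraph of $N_1$.

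Finally I would check the three auxiliary claims. The containment $V(Z)\Delta V(Z') \subseteq \{v^*\}$ is immediate from the constructions, in which $Z$ and $Z'$ differ only by the presence of $v^*$. For the count at $w$, the only arc/edge incident with $w$ that the operation can alter is the one joining $w$ to $v^*$ when $w$ is the neighbour $q$ or $r$; in each of (V1)--(V3) and (PAS) the replacement preserves its type (arc versus edge) and, for an arc, its orientation into $w$, so $w$ has the same number of incoming arcs and of incident edges in $Z'$ as in $Z$. Degeneracy concerns an SP-subgraph from the root $\rho$ containing both out-arcs of $\rho$; since $u$ and $w$ are unchanged and the out-arcs of $\rho$ are altered only by the single type- and orientation-preserving replacement, $Z'$ contains both out-arcs of $\rho$ precisely when $Z$ does. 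I expect the main obstacle to be the bookkeeping in the (PAS) case: one must correctly identify the parallel-arcs-then-series block as an SP-subgraph from $p$ to $r$ so that the recursion rule applies, verify that after the swap every internal vertex of the resulting graph retains all of its incident arcs/edges, and attend to the subcase $p=\rho$ where the replacement meets the root's out-arcs.
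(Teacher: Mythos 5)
Your proposal is correct and takes essentially the same route as the paper's proof: both identify the small block at $v^*$ and the replacement arc/edge as a pair of SP-graphs with the same endpoints, case-split on whether the given SP-subgraph contains $v^*$ (respectively the new arc/edge), swap one for the other via the recursion rule, and verify the symmetric-difference, incoming-arc/edge and degeneracy claims by localizing the change at the block's endpoints. The one subcase you flag as a remaining obstacle --- (PAS) applied with $p=\rho$ --- is in fact vacuous, and the paper dispatches it exactly this way: (PAS) requires both endpoints of the parallel arcs to have degree $3$, whereas the root has degree $2$, so if $u=u^*$ and (PAS) was applied then $u$ is not the root and degeneracy is unaffected.
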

 \begin{proof}

    If $N_2$ is derived from $N_1$ by an application of (PAS), then for some $u^*,w^*\in V(N_1)$ (with $u^*$ not the root of $N$), there exist parallel arcs $(u^*,v^*)$ and a single arc $(v^*,w^*)$ \mj{or edge $\{v^*,w^*\}$} (because $v^*$ has degree $3$). Then $N_1[\{u^*,v^*,w^*\}]$
    is an SP-subgraph from $u^*$ to $w^*$ in $N_1$, and $N_2$ is derived from $N_1$ by replacing this SP-subgraph with the arc $(u^*,w^*)$ \mj{or edge $\{u^*,w^*\}$}.
    On the other hand if $N_1$ is derived from $N_1$ by an application of (V1), (V2) or (V3), then $v^*$ has degree $2$ and neighbors $u^*, w^*$, and $N_1[\{u^*,v^*,w^*\}]$ is again an SP-subgraph from $u^*$ to $w^*$ in $N_1$, and $N_2$ is derived from $N_1$ by replacing this SP-subgraph with an arc or edge from $u^*$ to $w^*$.
    Thus, we may now assume that $Y_1^*: = N_1[\{u^*,v^*,w^*\}]$ is an SP-subgraph from $u^*$ to $w^*$ in $N_1$, and that $N_2$ is derived from $N_1$ by replacing $Y_1$ with an arc/edge $e^*$ from $u^*$ to $w^*$. (Note that this arc/edge itself also forms an SP-subgraph from $u^*$ to $w^*$.)
    Note also that $e^*$ is an arc if and only if $w^*$ has in incoming arc (as opposed to an incident edge) in $Y_1^*$.

     Now consider any SP-subgraph $Z_1$ from $u$ and $w$ in $N_1$, with $v^*\notin\{u,w\}$. If $v^*\notin V(Z_1)$ then $Z_2: = Z_1$ is also an SP-subgraph from $u$ to $w$ in $N_2$. Otherwise, $V(Z_1)$ contains not just $v^*$ but also $u^*$ and $w^*$ (otherwise $v^*$ is not on a path from $u$ to $w$). Thus $Y_1^*$ is an SP-subgraph from $u^*$ to $w^*$ in $Z_1$. Let $Z_2$ be derived from $Z_1$ by replacing $Y_1^*$ with 
     the arc/edge $e^*$.
     Then $Z_2$ is a subgraph from $u$ to $w$ in $N_2$, and by construction $Z_2$ is an SP-graph with $V(Z_1)\Delta V(Z_2) \subseteq \{v^*\}$.

     Conversely, consider any SP-subgraph $Z_2$ from $u$ and $w$ in $N_2$. If $Z_2$ does not contain $e^*$, then $Z_1:= Z_2$ is also an SP-subgraph from $u$ to $w$ in $N_1$.
     Otherwise, let $Z_1$ be derived from $Z_2$ by replacing $e^*$ with $Y_1^*$. Then $Z_1$ is a subgraph from $u$ to $w$ in $N_1$, and by construction $Z_1$ is an SP-subgraph with $V(Z_1)\Delta V(Z_2) \subseteq \{v^*\}$.

        It remains to show that $Z_1$ is degenerate if and only if $Z_2$ is degenerate, for both constructions described above. 
        Note that $w$ has the same number of incoming arcs and incident edges in $Z_1$ as in $Z_2$.
        Indeed these arcs/edges are the same in both SP-subgraphs unless $w=w^*$, in which case the claim follows by comparing $Y_1^*$ with $e^*$.         
     Note also that (for both constructions), $u$ has the same degree in $Z_2$ as in $Z_1$, unless $u = u^*$ and rule (PAS) was applied, in which case $u$ is not the root.
     It follows that $Z_2$ is degenerate if and only if $Z_1$ is degenerate.      
 \end{proof}


\mj{We can now prove the following lemma, \leonn{which we will use to show} that exhaustively applying (PAS), (V1), (V2), (V3) in any order results in the same network:}

\begin{lemma}\label{lem:SuppWellDefined}
    Let $N_1 = N, N_2, \dots, N_m$ be a sequence of networks, $m \ge 2$, such that $N_{i+1}$ is derived from $N_i$ by an application of \mj{(V1) or (V2) or} (V3) or (PAS), for each $i \in \{1,\dots, m-1\}$, \mj{and such that (V1),(V2),}(V3) and (PAS) do not apply to $N_m$.
    Then 
    \begin{enumerate}
        \item  For each vertex $v$ of $N$, $v \in V(N)\setminus V(N_m)$ if and only if 
         $v$ is an internal vertex of some non-degenerate SP-subgraph in $N$; 
        
        \item For each $u,w \in V(N_m)$, there is a single arc $(u,w) \in E(N_m)$ if and only if there is a non-degenerate SP-subgraph from $u$ to $w$ in $N$ which ends in an arc.
        
        \item For each $u,w \in V(N_m)$, there is a single edge $(u,w) \in E(N_m)$ if and only if there is a non-degenerate SP-subgraph from $u$ to $w$ in $N$ which ends in an edge.

        \item For each $u,w \in V(N_m)$, there are parallel arcs $(u,w) \in E(N_m)$ if and only if there is a minimal degenerate SP-subgraph in $N$ that is a degenerate SP-subgraph from $u$ to $w$.
    \end{enumerate}
\end{lemma}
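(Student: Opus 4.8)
The plan is to prove the four statements by combining two ingredients: a description of the SP-subgraphs that can survive in the terminal network $N_m$, and the step-by-step invariance of SP-subgraphs supplied by Lemma~\ref{lem:SPsubgraphs}.

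First I would analyse $N_m$ itself. Since none of (V1),(V2),(V3),(PAS) applies to $N_m$, the network $N_m$ has no degree-2 vertex of ``in/out'' type and no pair of parallel arcs both of whose endpoints have degree $3$. Now recall that every internal vertex of an SP-subgraph has degree $2$ with all incident arcs/edges inside the subgraph, and lies on a directed path through it; such a vertex would be a suppressible in/out degree-2 vertex. Hence every SP-subgraph of $N_m$ has $Int(Z)=\emptyset$ and is therefore a single arc, a single edge, or a pair of parallel arcs. Moreover, since every outdegree-0 vertex is a degree-1 leaf, any reticulation has degree $3$, so any surviving parallel arcs $(u,w)$ must have $u=\rho$ the root (otherwise both endpoints have degree $3$ and (PAS) would apply). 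Thus in $N_m$ the non-degenerate SP-subgraphs from $u$ to $w$ are exactly the single arc $(u,w)$ and the single edge $\{u,w\}$, while the degenerate ones are exactly the root-parallel-arcs, which are automatically minimal.

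Next I would establish the invariance. Because each operation only deletes vertices, $V(N_m)\subseteq V(N_i)$ for every $i$, so any $u,w\in V(N_m)$ lie in all the $N_i$. Applying Lemma~\ref{lem:SPsubgraphs} to each of the $m-1$ steps and chaining the resulting correspondences, I obtain that for all $u,w\in V(N_m)$: $N=N_1$ has a non-degenerate SP-subgraph from $u$ to $w$ ending in an arc (respectively, in an edge) if and only if $N_m$ does, and $N_1$ has a degenerate SP-subgraph from $u$ to $w$ if and only if $N_m$ does. Here the distinction between ending in an arc and ending in an edge is carried by the clause of Lemma~\ref{lem:SPsubgraphs} that $w$ keeps the same number of incoming arcs and incident edges, while degeneracy is preserved by the corresponding clause. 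Combining this with the description of $N_m$ above yields statements (2), (3) and (4) at once; for (4) one matches minimality of the degenerate SP-subgraph in $N$ with the empty-interior root-parallel-arcs in $N_m$.

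Finally I would treat statement (1) in two directions. For the forward direction, if $v$ is deleted at step $N_i\to N_{i+1}$ then, exactly as in the proof of Lemma~\ref{lem:SPsubgraphs}, $v$ is the unique internal vertex of the non-degenerate three-vertex SP-subgraph $N_i[\{u^*,v,w^*\}]$; iterating Lemma~\ref{lem:SPsubgraphs} backwards from $N_i$ to $N_1$ with fixed endpoints $u^*,w^*$ (which are present at every earlier step, since vertex sets only shrink as $i$ increases) produces a non-degenerate SP-subgraph from $u^*$ to $w^*$ in $N$ that still contains $v$ as an internal vertex. For the converse, if $v$ is internal to a non-degenerate SP-subgraph of $N$ then, restricting to $v$ and its two neighbours, $v$ is a degree-2 vertex of in/out type, hence suppressible by one of (V1),(V2),(V3); since this property is local and is preserved whenever $v$ survives an operation, $v\in V(N_m)$ would force some operation to apply to $N_m$, a contradiction, so $v\notin V(N_m)$. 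I expect the main obstacle to be statement (1) together with the degenerate/parallel-arc bookkeeping in statement (4): one must verify that the three-vertex witness is genuinely non-degenerate (which relies on (PAS) never being applied at the root, so $u^*\neq\rho$ there), and that minimality in $N$ corresponds precisely to the empty-interior root-parallel-arcs in $N_m$, the leaf/degree argument above being what rules out any stray non-root parallel arcs.
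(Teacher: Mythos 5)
Your overall architecture matches the paper's proof: characterize the SP-subgraphs that can survive in the terminal network $N_m$, then chain Lemma~\ref{lem:SPsubgraphs} across the sequence; your forward direction of Statement 1 and the step-by-step chaining for Statements 2--4 are essentially the paper's argument. However, there is a genuine gap in the structural claim you rely on twice: it is \emph{not} true that every internal vertex of an SP-subgraph has degree $2$. SP-graphs are series-parallel, so an internal vertex can have degree $3$: compose a pair of parallel arcs from $u$ to $x$ in series with an arc $(x,w)$; this is an SP-graph from $u$ to $w$ whose internal vertex $x$ has indegree $2$ and outdegree $1$. Indeed, this is exactly the configuration that (PAS) exists to remove --- if your claim were true, (V1)--(V3) alone would suffice and (PAS) would be superfluous. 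Consequently your conclusion that every SP-subgraph of $N_m$ has empty interior is also false: parallel arcs whose tail has degree $2$, extended by an arc at the head, survive all four operations, since (PAS) requires both endpoints to have degree $3$. The statement you actually need --- that every \emph{non-degenerate} SP-subgraph of $N_m$ is a single arc or edge --- is true, but proving it requires the weaker property the paper records (every SP-graph with more than one arc has a degree-$2$ vertex or a pair of parallel arcs), together with a degree argument showing that inside a non-degenerate SP-subgraph with more than one arc either some internal degree-$2$ vertex is suppressible or (PAS) applies to the parallel pair.

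The gap is most serious in your converse direction of Statement 1. You argue that an internal vertex $v$ of a non-degenerate SP-subgraph of $N$ is a suppressible degree-$2$ vertex, and that this property persists so that $v \in V(N_m)$ would contradict terminality. By the counterexample above, $v$ need not have degree $2$ in $N$ at all: it may only \emph{become} degree $2$ (hence suppressible) after (PAS) has been applied nearby, so your local argument never gets started. The persistence claim is also delicate, since operations applied at a neighbour of $v$ change $v$'s incident arcs/edges. The paper avoids both problems by tracking the SP-subgraph \emph{forward} through the sequence: take the largest $i$ such that $v$ is internal to a non-degenerate SP-subgraph $Z_i$ from $u$ to $w$ in $N_i$; if the vertex $v^*$ deleted at step $i$ is $u$ or $w$, enlarge $Z_i$ to an SP-subgraph whose endpoints survive; then Lemma~\ref{lem:SPsubgraphs} yields a non-degenerate SP-subgraph in $N_{i+1}$ containing all of $V(Z_i)\setminus\{v^*\}$, so maximality of $i$ forces $v=v^*$, i.e.\ $v \notin V(N_m)$. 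You would need to replace your local degree argument with an iteration of this form.
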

\begin{proof}


    Note that, for any vertex~$v$ removed by an application of (PAS) or (V3) on some 
    \leonn{network} $N'$, $v$ is part of an SP-subgraph $Z$ from $u$ to $w$ in $N'$, where $u$ and $w$ are the parent and child of $v$ respectively.
    Furthermore $Z$ is non-degenerate (as we do not apply (PAS) when $u$ is the root).

    \leonn{To prove Statement 1, first} suppose that $v \in V(N)\setminus V(N_m)$, and let $i$ be the unique integer for which $v \in V(N_i)\setminus V(N_{i+1})$. Then $v$ was removed by an application of (PAS) or (V3) on $N_i$, and so $v$ is part of a non-degenerate SP-subgraph $Z_i$ from $u$ to $w$ in $N_i$.
    If $i > 1$, then by Lemma~\ref{lem:SPsubgraphs} 
    there exists a non-degenerate SP-subgraph $Z_{i-1}$ from $u$ to $w$ in $N_{i-1}$, with $V(Z_{i-1}) \subseteq V(Z_i)$.
    Thus $v$ is also an internal vertex of $Z_i$.
    Repeating this argument, we see that $v$ is an internal vertex of a non-degenerate SP-subgraph from $u$ to $w$ in $N_1 = N$, as required.

    Conversely, suppose that $v$ is an internal vertex of a non-degenerate SP-subgraph from $u$ to $w$ in $N_1$. 
    Note that 
    $N_m$ 
    has no non-degenerate SP-graphs except for those subgraphs consisting of a single arc, 
    as otherwise one of (PAS) or (V3) would apply to $N_m$.
    So there exists some largest $i \in \{1,\dots, m-1\}$ such that $v$ is an internal vertex of a non-degenerate SP-subgraph in $N_i$, but not in $N_{i+1}$.
    Let $Z_i$ be such a non-degenerate subgraph, and suppose $Z_i$ is from $u$ to $w$. Let $v^*$ be the unique vertex in $V(N_i)\setminus V(N_{i+})$. 
    Note that if $u = v^*$ then $v$ is also part of a non-degenerate SP-graph from $u^*$ to $w$ for $u^*$ the parent of $u$, and if $w = v^*$ then $v$ is part of a non-degenerate SP-subgraph from $u$ to $w^*$ for $w^*$ the child of $w$.
    Thus, we may assume neither $u$ nor $w$ is $v^*$.
    So Lemma~\ref{lem:SPsubgraphs} implies that $N_{i+1}$ also has a non-degenerate SP-graph from $u$ to $w$, with $V(Z_{i+1}) \supseteq V(Z_i)\setminus \{v^*\}$. As $v$ cannot be in $Z_{i+1}$ by choice of $i$, it follows that $v = v^*$, and so $v \notin V(N_m)$, as required.

    We now have that $v \in V(N_m)$ if and only if $v$ is not part of a non-degenerate SP-graph in $N$ \leonn{(Statement 1)}.
    It remains to consider the arcs and edges of $N_m$ \leonn{(Statements 2-4)}.

    By Lemma~\ref{lem:SPsubgraphs}, for any $u,w \in V(N_{i+1})$, there is a non-degenerate SP-subgraph $Z$ from $u$ to $w$ in $N_{i+1}$ if and only if there is a non-degenerate SP-subgraph from $u$ to $w$ in $N_i$, for all $i \in \{1,\dots, m-1\}$.
    It follows that there is a non-degenerate SP-subgraph from $u$ to $w$ in $N_m$ if and only if there is a non-degenerate SP-subgraph from $u$ to $w$ in $N_1 = N$.
    But the only non-degenerate SP-subgraphs in $N_m$ are arcs and edges. Moreover, the SP-subgraph from $u$ to $w$ ends in an incoming arc of $w$ if and only if $Z$ ends in an incoming arc of $w$.
    So $N_m$ has an arc from $u$ to $w$ if $Z$ ends in an arc, and $N_m$ has an edge between $u$ and $w$ if $Z$ ends in an edge. \leonn{This concludes the proof of Statements~2 and~3.}

    Finally, again by Lemma~\ref{lem:SPsubgraphs}, there is a degenerate SP-subgraph from $u$ to $w$ in $N_m$ if and only if there is a degenerate SP-subgraph from $u$ to $w$ in $N_1 = N$. \leonn{Statement 4 now follows since,} in $N_m$ the only degenerate \leonn{SP-subgraphs are pairs} of parallel arcs.
\end{proof}


\textbf{\Cref{lem:SuppWellDefinedBrief}.}
    \emph{$\supp{N}$ is well-defined \leonn{for any network~$N$}.
}

\begin{proof} 
   \mj{Let $N_1$ be the network derived from $N$ by applying a 
   \leonn{(BLS) operation} to every 
   blob with at most two incident edge/arcs in $N$. Note that suppressing one blob does not affect the other blobs in the network, and so $N_1$ \leonn{is} 
   well-defined. 
    \leonn{Considering the} definition of $\supp{N}$, it remains to show that starting with $N_1$ and exhaustively applying the operations (PAS), (V1), (V2), (V3) will always result in the same network.}


    \mj{To see this, 
    let $N_1, N, N_2, \dots, N_m$ and $N'_1 = N_1, N, N'_2, \dots, N'_{m'}$ be two sequences of networks, such that $N_{i+1}$ (respectively, $N'_{i+1}$) is derived from $N_i$ ($N'_i)$ by an application of \mj{(V1) or (V2) or} (V3) or (PAS), for each $i \in \{1,\dots, m-1\}$ ($i \in \{1,\dots, m'-1\}$),
    and such that (V1),(V2),(V3) and (PAS) do not apply to $N_m$ ($N'_{m'}$).
    By applying~\Cref{lem:SuppWellDefined} to $N_m$ and $N'_{m'}$, we see that $N_m$ and $N'_{m'}$ have exactly the same vertices \leonn{(Statement~1 of~\Cref{lem:SuppWellDefined})}, arcs \leonn{(Statement~2 of~\Cref{lem:SuppWellDefined})}, edges \leonn{(Statement~3 of~\Cref{lem:SuppWellDefined})} and parallel arcs \leonn{(Statement~4 of~\Cref{lem:SuppWellDefined})}. Thus, $N_m$ and $N'_{m'}$ are the same network, and so $\supp{N}$ is well-defined.}
\end{proof}

\begin{figure}
    \centerline{\includegraphics{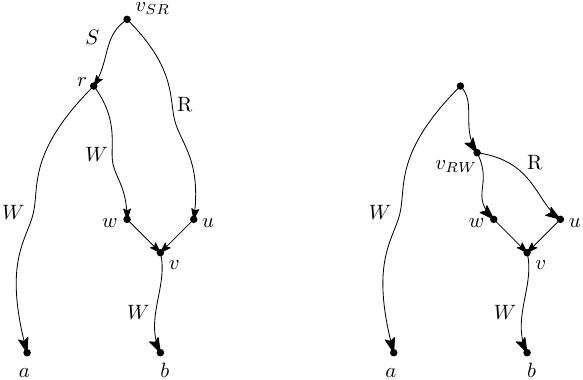}}
    \caption{\label{fig:wedges} Illustrations for the proof of Lemma~\ref{lem:wedges} for the case that~$N$ is directed.}
\end{figure}

\textbf{Lemma~\ref{lem:wedges}.}
    \emph{Consider a network~$N$ on~$X$, leaves~$a,b\in X$ and a reticulation~$v$ with parents~$u,w$. If~$v$ is on a $\wedge$-path in~$N$ between~$a$ and~$b$, then~$u$ is on a $\wedge$-path in~$N$ between~$a$ and~$b$.}
\begin{proof}
    First suppose that~$N$ is a directed network.

    Consider any $\wedge$-path~$W$ between~$a$ and~$b$ containing~$v$. It contains at least one of~$u$ and~$w$. If~$W$ contains~$u$ then the lemma holds. Hence, suppose that~$W$ does not contain~$u$ and hence traverses the arc~$(w,v)$. Assume without loss of generality that arc~$(w,v)$ is traversed on the part of~$W$ directed towards~$b$. Consider any directed path~$R$ from the root of~$N$ to~$u$ (which exists as $N$ is assumed to be directed).
    
    First suppose that~$R$ is disjoint from~$W$. Let~$r$ be the vertex of~$W$ such that~$W$ consists of directed paths from~$r$ to~$a$ and~$b$. Then consider a directed path~$S$ from the root to~$r$. Let~$v_{SR}$ be the last common vertex of~$S$ and~$R$. Then a $\wedge$-path between~$a$ and~$b$ containing~$u$ can be obtained by following~$W$ from~$a$ to~$r$, then following~$S$ to~$v_{SR}$, following~$R$ to~$u$, following the arc $(u,v)$, and finally following~$W$ from~$v$ to~$b$. See Figure~\ref{fig:wedges} (left) for an example.

    Now suppose~$R$ intersects~$W$. Let~$v_{RW}$ be the last vertex of~$R$ that is on~$W$. Then
    a $\wedge$-path between~$a$ and~$b$ containing~$u$ can be obtained by following~$W$ from~$a$ to~$v_{RW}$, then following~$R$ to~$u$, following the arc $(u,v)$, and finally following~$W$ from~$v$ to~$b$. See Figure~\ref{fig:wedges} (right) for an example.

    It remains to consider the case that~$N$ is semi-directed. Consider any rooting~$D$ of~$N$. If~$D$ contains arc~$(u,v)$, then~$u$ is on a $\wedge$-path in~$D$ between~$a$ and~$b$ by the first part of the proof (for directed networks). Hence,~$u$ is on a $\wedge$-path in~$N$ between~$a$ and~$b$.

    Otherwise,~$D$ contains arcs~$(\rho,u),(\rho,v)$. Then,~$\rho$ is on a $\wedge$-path in~$D$ between~$a$ and~$b$ by the first part of the proof. Hence,~$u$ is on a $\wedge$-path in~$N$ between~$a$ and~$b$.
\end{proof}

\end{document}